\declaretheoremstyle[
spaceabove=6pt, spacebelow=6pt,
headfont=\normalfont\bfseries,
notefont=\mdseries, notebraces={(}{)},
bodyfont=\normalfont,
postheadspace=1em,
qed=$\blacksquare$
]{examplestyle}
\declaretheoremstyle[
spaceabove=6pt, spacebelow=6pt,
headfont=\normalfont\bfseries,
notefont=\mdseries, notebraces={(}{)},
bodyfont=\itshape,
postheadspace=1em
]{theorem}
\declaretheoremstyle[
spaceabove=4pt, spacebelow=4pt,
headfont=\itshape\bfseries,
notefont=\mdseries, notebraces={(}{)},
bodyfont=\itshape,
postheadspace=0.2em,
qed=\qedsymbol
]{remark}
\declaretheorem[style=theorem]{theorem}  
\declaretheorem[style=theorem, numbered=no, name=Theorem]{theorem*}  
\declaretheorem[style=remark,name=Remark]{remark}  
\declaretheorem[style=plain,name=Assumption]{assumption}  
\declaretheorem[style=definition,name=Definition]{definition}  
\declaretheorem[style=theorem, name=Lemma]{lemma}  
\declaretheorem[style=theorem, name=Proposition]{proposition}  
\declaretheorem[numbered=no,style=definition,name=Question]{question*}  
\declaretheorem[style=definition,name=Definition, numbered=no]{definition*}  
\declaretheorem[style=examplestyle]{example}
\renewcommand\thmcontinues[1]{continued}
\newcommand{\real}{\mathbb{R}}
\newcommand{\indicator}{\mathbbm{1}}
\newcommand{\optchoice}[2]{\max\left(#1; #2\right)}
\newcommand{\prob}{\mathbb{P}}
\newcounter{steps}
\newcommand\independent{\protect\mathpalette{\protect\independenT}{\perp}}
\def\independenT#1#2{\mathrel{\rlap{$#1#2$}\mkern2mu{#1#2}}}
\title{Robust Counterfactuals in Centralized Schools Choice Systems: Addressing Gender Inequality in STEM Education\thanks{
We are grateful to 
Marc Henry,
Marinho Bertanha,
Margaux Luflade,
and
Bumin Yemnez
for valuable comments and discussions.  
We thank Rana  Mohie El din for excellent research assistance.
The paper has also benefited from feedback received during presentations at
Chicago,
Duke, 
Johns Hopkins,
Wash U, 
Upenn,
and various conferences. 
}

}
\author[1]{Lixiong Li\footnote{Department of Economics, Johns Hopkins University. Email: lixiong.li@jhu.edu
}}
\author[2]{Isma\"el Mourifi\'e \footnote{Department of Economics,
  Washington University in St. Louis \& NBER. Address: One Brookings Drive
St. Louis, MO 63130-4899, USA. Email: ismaelm@wustl.edu.
}}
\affil[1]{Johns Hopkins University}
\affil[2]{Washington University in St. Louis \& NBER}
\date{\today }
\begin{document}
\maketitle
\vspace{-1em}

\begin{center}
\textbf{Abstract}
\end{center}

Counterfactual analysis is central to education market design and provides a foundation for credible policy recommendations. We develop a novel methodology for counterfactual analysis in Gale-Shapley deferred-acceptance (DA) assignment mechanisms under a weaker set of assumptions than those typically imposed in existing empirical works. Instead of fully specifying utility functions or students' beliefs about admission probabilities, we rely on interpretable restrictions on behavior that yield an incomplete but flexible model of preferences. We address the core challenge that partial identification poses for counterfactual analysis by showing that sharp bounds on counterfactual stable matching outcomes can be computed efficiently through a combination of algorithmic techniques and integer programming.  We illustrate the methodology by evaluating policies aimed at increasing female enrollment in STEM fields in Chile.

\vspace{0cm}

\vspace{0.5cm}

\noindent \textbf{Keywords:} Matching, School choice, Strategic reporting, Consideration sets, Partial identification, Linear program, STEM Gap. 

\noindent \textbf{JEL Classification:}  C12, C21, C26.

\normalsize

\newpage

\section{Introduction}

Counterfactual analysis is central to education market design and informs policy recommendations.  An area that has received significant attention in recent years is the problem of school choice. In market design, school choice refers to the challenge of assigning students to schools while accounting for preferences, school capacities, and policy objectives. A standard school choice procedure has three components: (i) preference elicitation, typically through students' rank-ordered lists (ROLs); (ii) a priority structure that assigns each student a priority (e.g., via scores or tie-breaking rules) which schools use to rank applicants; and (iii) an assignment algorithm that allocates seats based on ROLs, priorities, and school capacities.

In most cases, priorities and capacities are determined by policy decisions. However, translating broad policy goals -- such as diversity objectives -- into specific priorities, capacity constraints, or assignment algorithms is nontrivial. This is where counterfactual analysis becomes an essential tool, enabling policymakers to evaluate the potential effects of different policy choices on student assignments.
While researchers observe school priorities, capacities, ROLs, and the assignment algorithm in use, a fundamental challenge in counterfactual analysis is that students' \emph{true} preferences are typically unobserved. Identifying these preferences is an empirical question that requires careful modeling.

As noted by \cite{agarwal_demand_2018} and \cite{fack_beyond_2019}, many real-world assignment mechanisms create incentives for students to misreport their preferences. Consequently, observed ROLs need not reveal students' true preferences. 
To address this issue, \cite{agarwal_demand_2018} proposed a fully structural model that rationalizes how students generate their ROLs. However, this approach assumes a high level of strategic sophistication, requiring students to best respond to others' behavior while resolving complex uncertainties. In large-scale settings such as college admissions, researchers must solve an extensive computational problem, which necessitates strong parametric assumptions about the utility functions of all participants.\footnote{In markets with a large number of schools, a nonparametric analysis based on cardinal preferences may be infeasible. See, for instance, \cite{agarwal_demand_2018}, Theorem A.1.}
Beyond the risk of misspecification, another concern is empirical evidence that students frequently make mistakes, even in strategically simple environments. As \cite{Artemov2023} documents, students sometimes play unambiguously dominated strategies in settings where optimal behavior is straightforward.\footnote{This can occur when weakly dominated strategies yield the same assignment outcome in equilibrium.} Assuming that observed ROLs result from fully optimal behavior may therefore introduce substantial bias in empirical analysis.

An alternative is to rely on the \textit{stability assumption}. Stability plays a key role in the empirical analysis of matching markets. Under DA type mechanisms, stability implies that each student is assigned to her most-preferred school within her feasible set. A school is \textit{feasible} for a student if the student's placement score meets or exceeds the school's admission cutoff; see \cite{azevedo_supply_2016}.

Stability delivers revealed preference implications that rely solely on assignment outcomes and are agnostic to the specific mechanism or behavioral assumptions generating them. However, assignment data alone do not reveal preferences over schools outside a student's feasible set, nor does stability pin down any preference ordering within the feasible set beyond the relation involving the assigned school. As a result, some preferences, often including those most relevant for counterfactuals, remain unidentified, limiting the empirical content of stability by itself. See \cite{kapor2024aftermarket}.

To infer full preference profiles under stability, researchers often extrapolate from students with larger feasible sets. Moreover, this approach, developed in \cite{fack_beyond_2019} and used in subsequent work, typically assumes that latent preferences are independent of scores conditional on observed covariates.\footnote{Similar extrapolation methods appear in \cite{Akyol2016}, \cite{Bucarey2018}, \cite{ngo_preferences_2024}, \cite{barahona2021}, among others.} This assumption is fragile, particularly in post-secondary settings. See Section \ref{subsec:para} for discussion. 



In this paper, we propose a new method for counterfactual analysis in Gale--Shapley DA assignment mechanisms. Unlike the fully structural approach, we do not model the entire ROL as a deterministic function of cardinal utilities and beliefs about admission probabilities. Conversely, unlike the stability-only approach, we do not advocate disregarding the information in ROLs, focusing only on the final assignment, and relying on strong extrapolation. Our aim is to deliver informative counterfactual results while avoiding the restrictive assumptions underlying these two approaches, which, as noted in \citet{Agarwal2020}, can lead to non-robust counterfactual predictions.

Our first contribution is a general framework for extracting partial information about student preferences from observed behavior. We model students' preferences as total orders. Adopting a revealed-preference perspective, we show that the empirical content of several intuitive behavioral and rationality assumptions can be characterized by \textit{partial orders} that represent preference relations revealed by ROLs and matching outcomes. These assumptions include stability and the ``dropping strategy'' studied in \cite{kojima_incentives_2009} and \cite{haeringer_constrained_2009}.

We also introduce and analyze a novel behavioral restriction, the \textit{Robust Undominated Strategy}, which posits that a student avoids dominated strategies relative to her \emph{consideration set}\footnote{Consideration sets have also been studied in diverse contexts, including \cite{ben-akiva1973}, \cite{barseghyan2021}, and \cite{cattaneo2020}.}-- schools she regards as potentially within her reach. We derive sharp empirical implications of the Robust Undominated Strategy when only a subset of a student's consideration set is observed (e.g., using historical cutoffs, personal scores, and peer experiences). The entire consideration set needs not be known. In addition, we examine the empirical implications of beliefs that some schools are more selective than others. Finally, we show how to aggregate partial orders revealed under different assumptions into a single, most informative partial order, and use this structure to characterize the identified set of preference profiles.

As our second contribution, we provide a tractable procedure to bound the set of counterfactual stable matching outcomes when students' preferences are only partially identified. 
The standard approach is to estimate preferences (typically under parametric assumptions) and then rerun the DA algorithm in the counterfactual environment. This is infeasible here because the admissible preference profile is set-valued rather than point-identified.
An alternative is to exploit integer/linear-programming characterizations of stable matchings (e.g., \cite{Rothblum1992,Roth1993,Teo1998,Baiou2000}), but these formulations also require preferences to be point-identified. To address this, we propose two complementary methods.

First, we derive a system of linear inequalities that any counterfactual matching must satisfy to be stable with respect to some admissible preferences within the derived bounds. When admissible preferences are the ones compatible with our inferred partial orders, these constraints are necessary and sufficient. This characterization thus delivers sharp bounds on counterfactual outcomes via an integer programming optimization whose decision variables scale with the number of student-school pairs.

Second, we extend the Gale--Shapley DA algorithm to settings where students' preferences are only partially known. The extended procedure yields upper and lower bounds on school-specific cutoff scores in the counterfactual. We use these bounds to screen out unstable matching allocations and to fix or eliminate variables in the integer programming problem, 
substantially reducing its dimensionality prior to optimization. In our application, this screening reduces the number of decision variables by about $98\%$, making computation feasible even in large-scale matching markets.

Our final contribution is empirical. The under-representation of women in STEM fields is a persistent, well-documented global challenge. Beyond equity concerns, this imbalance has substantial economic consequences and is frequently cited as a contributor to the gender wage gap  (\cite{daymont1984, zafar2013}). The problem is especially pronounced in Latin America and the Caribbean (LAC), where the STEM gender gap exceeds that of many other regions (\cite{bello2020stem, uribe2021gender}). Recognizing the long-term implications, international organizations such as UNESCO have repeatedly called for targeted interventions to identify and remove barriers that deter women from pursuing STEM careers in LAC countries.

In this paper, we focus on Chile, where university admissions rely on a composite score that weights standardized high-stakes exam scores and high-school GPA. The high-stakes exams include subject-specific assessments in Mathematics, Science, Language, and History, typically administered on a single day, whereas GPA aggregates performance across multiple years of secondary education. Each academic program sets its own weights.

We document two stylized facts in the Chilean data. First, programs place substantially more weight on standardized exams than on GPA. Typically, 60--80\% of the total weight falls on exams. Second, male students generally score higher on most standardized exams, whereas female students tend to have higher GPAs. Similar patterns appear in other competitive educational contexts, see \cite{JurajdaMunich2011,Saygin2019,MontolioTaberner2021, IriberriReyBiel2019, ArenasCalsamiglia2025} and references therein.
These weighting choices, combined with gender differences in high-stakes exams performance, can unintentionally amplify disparities. We therefore study two counterfactual policies using our framework: (i) reallocating weight from standardized exams toward GPA, and (ii) within-gender standardization of exam scores (evaluating performance relative to same-gender score distributions).

We find that both policies could reduce gender gaps in admissions to STEM programs, with larger effects among students near the top of the score distribution and smaller effects among lower-scoring students. These results indicate that adjustments to priority weights and within-gender standardization can mitigate the impact of gender differences in exam and GPA distributions and narrow gender gaps in STEM enrollment.

The rest of this paper proceeds as follows. Section \ref{sec:Fram} introduces our analytical framework and matching environment. 
Section~\ref{sec:revealed_preference} discusses how to infer students' preferences based on two assumptions commonly adopted in the literature. We then turn to counterfactual analysis in Section~\ref{sec:counterfactual}, where we show how inferred preferences can inform policy evaluation. 
In Section~\ref{sec:more_revealed_preference}, we delve into more advanced techniques for recovering preferences from observed data and the ROLs.
We illustrate our approach with the Chilean data in Section \ref{Sec:App}.
The last section concludes. The appendix presents all proofs for the paper.

\subsubsection*{Notation and preliminaries}
A binary relation $\succ$ defined on a discrete set $\mathcal{J}$ is a \emph{(strict) total order} if it is irreflexive, asymmetric, transitive and connected. A binary relation $\succ^P$ is a \emph{(strict) partial order} if it satisfies irreflexivity, asymmetry, and transitivity, but not necessarily connectedness. The \emph{domain} of a strict partial order $\succ^P$, denoted as $\text{domain}(\succ^P)$, is the subset of elements in $\mathcal{J}$ involved in at least one relation under $\succ^P$. Formally, $j \in \text{domain}(\succ^P)$ if there exists some $j'\in \mathcal{J}$ such that either $j \succ^P j'$ or $j' \succ^P j$. We use $|\succ^P|$ to denote the number of elements in the domain of $\succ^P$. We also use $|\mathcal{J}|$ to denote the number of elements in set $\mathcal{J}$.

Throughout the paper, all binary orders are strict unless explicitly stated otherwise. For brevity, we henceforth refer to strict total orders simply as total orders and strict partial orders as partial orders. 

Let $\succ^P$ be a partial order such that its restriction to its domain is a total order. Then, for any subset $F \subseteq \mathcal{J}$ with $F \cap \text{domain}(\succ^P) \neq \emptyset$, there exists a unique maximal element in the restriction of $\succ^P$ to $F \cap \text{domain}(\succ^P)$. Formally, there exists a unique $j \in F \cap \text{domain}(\succ^P)$ such that $j \succ^P j'$ for all $j' \in F \cap \text{domain}(\succ^P)$ with $j' \neq j$. We denote this maximal element by $\optchoice{\succ^P}{F}$.

The \emph{transitive closure} of a binary relation $\succ$ is the smallest transitive relation that contains $\succ$. Given a binary relation $\succ$, its transitive closure $\succ'$ can be constructed as follows: for any $j,j'\in \mathcal{J}$, $j \succ' j$ if and only if there exist elements $j_1, \dots, j_N \in \mathcal{J}$ such that: (\emph{a}) $j_1 = j$, (\emph{b}) $j_N = j'$, and (\emph{c}) for each $n=1,\dots,N-1$, $j_n \succ j_{n+1}$.

We say that a total order $\succ$ and a binary relation $\succ'$ are \emph{compatible} if, for any $j_1, j_2 \in \mathcal{J}$ with $j_1 \succ' j_2$ and no $j_2 \succ' j_1$, we have $j_1 \succ j_2$. Similarly, two binary relations $\succ_1$ and $\succ_2$ are \emph{compatible} if there exists at least one total order compatible simultaneously with both. Given two compatible partial orders $\succ^P_1$ and $\succ^P_2$, their \emph{join}, denoted by $\succ^P_1 \vee \succ^P_2$, is the partial order obtained by taking the transitive closure of the union of their binary relations. Formally, for $j, j' \in \mathcal{J}$, we have $j (\succ^P_1 \vee \succ^P_2) j'$ if and only if there exist elements $j_1, \dots, j_N \in \mathcal{J}$ such that: (\emph{a}) $j_1 = j$, (\emph{b}) $j_N = j'$, and (\emph{c}) for each $n=1,\dots,N-1$, either $j_n \succ^P_1 j_{n+1}$ or $j_n \succ^P_2 j_{n+1}$.


\section{Analytical framework and matching environment}\label{sec:Fram}

We consider a two-sided matching environment consisting of a set of students $\Omega$ and a set of school programs $\mathcal{J}$. Additionally, we include an outside option labeled as program $0$ and define $\mathcal{J}_0 = \mathcal{J}\cup\{0\}$. 

Each student is matched to one program in $\mathcal{J}_0$, whereas each program may accommodate multiple students. Following \cite{azevedo_supply_2016}, we normalize the measure of the student population to $1$, and interpret the capacity $q_j$ of each program $j$ as the share of students it can accommodate. The outside option has a capacity of $1$, reflecting that it never has a binding capacity constraint. 

Each student $\omega$ has a true preference ordering $\succ^Q_\omega$, which is a total order over the set $\mathcal{J}_0$. Programs ranked higher than the outside option by student $\omega$ are called \emph{acceptable} to her. Additionally, each student $\omega$ is associated with a vector of priority scores $S_\omega = (S_{\omega j}: j\in \mathcal{J}_0)$.\footnote{For simplicity, we assume students have priority scores for all programs in $\mathcal{J}_0$, including the outside option. The priority scores for the outside option can be set arbitrarily since the capacity constraint is never binding for the outside option.} A program $j$ prefers student $\omega$ over student $\omega'$ if and only if $S_{\omega j} > S_{\omega j'}$. Without loss of generality, we assume priority scores $S_{\omega j}$ are distributed on the interval $[0, 1]$. We also assume the priority scores are continuously distributed, ensuring ties occur with probability zero.

A \emph{matching} is a measurable function $\mu:\Omega \to \mathcal{J}_0$ assigning each student to a single program, while respecting programs' capacity constraints. Formally, a matching must satisfy $\mathbb P \{\omega: \mu(\omega) = j \} \leq q_j$ for every program $j \in \mathcal{J}_0$. When a student is assigned to the outside option ($\mu(\omega)=0$), we interpret this as the student being unmatched with any program in $\mathcal{J}$.

A centralized matching mechanism determines a matching using three types of information: (\emph{i}) capacities $(q_j:j\in \mathcal{J}_0)$; (\emph{ii}) priority scores $(S_{\omega j}: j\in \mathcal{J}_0, \omega\in \Omega)$; and (\emph{iii}) students' ranked order lists (ROLs), indicating their preferences over acceptable programs. We denote student $\omega$'s submitted ROL as $\succ^R_\omega$. In practice, students often cannot rank all acceptable programs due to institutional constraints imposed by the matching mechanism. Thus, we assume the length of each ROL, i.e., $|\succ^R_\omega|$, is bounded by a constant $K$. If $K < |\mathcal{J}_0|$, the ROL $\succ^R_\omega$ is a partial order. Nonetheless, the submitted ROL must be a total order on its domain, with the outside option always placed last. The following example illustrates these concepts.

\begin{example}\label{running_example}
Suppose $\mathcal{J}_0 = \{0, 1, 2, \dots, 6\}$. Consider a student $\omega$ whose true preference ordering is given by:
$$ 1 \succ^Q_\omega 2 \succ^Q_\omega 3 \succ^Q_\omega 4 \succ^Q_\omega 5 \succ^Q_\omega 0 \succ^Q_\omega 6.$$ 
Assume the maximum number of programs listed in the ROL is $K = 5$. Given this constraint, student $\omega$ could submit the ROL $2 \succ^R_\omega 3 \succ^R_\omega 4 \succ^R_\omega 5 \succ^R_\omega 0$. Shorter lists are also permissible. For instance, both $2 \succ^R_\omega 3 \succ^R_\omega 5 \succ^R_\omega 0$ and $2 \succ^R_\omega 0$ are valid submissions. She is also allowed to submit a ROL entirely different from her true preferences, such as $6 \succ^R_\omega 5 \succ^R_\omega 0$, although such a submission might not be strategically optimal to her. However, submitting an ROL containing more than $5$ programs is prohibited under this constraint.
\end{example}

Our analysis focuses on the Gale-Shapley DA mechanism and its variants. As discussed in \cite{fack_beyond_2019}, the DA mechanism has become the leading centralized matching approach for student placement in many countries. Following \cite{azevedo_supply_2016}, we represent the DA through its \emph{cutoff characterization}.

Specifically, the outcome of the DA mechanism can be summarized by a vector of \emph{cutoffs} $c = (c_j : j\in\mathcal{J}_0)$, where each cutoff $c_j$ corresponds to the threshold priority score required for admission to program $j$. By convention, the cutoff for the outside option is always $0$. Given these cutoffs, each student $\omega$ faces a feasible set defined as:
\[
F(S_\omega, c) \coloneqq \{j\in \mathcal{J}_0: S_{\omega j} \ge c_j \}.
\]
In the absence of ambiguity, we adopt the shorthand notation $F_\omega(c)$.
Note that the outside option is always included in $F_\omega$ by construction. 
Under the DA mechanism, student $\omega$ is matched to her most preferred feasible program according to her submitted ROL. Formally, student $\omega$ is matched to program $j \in \mathcal{J}_0$ if and only if
\begin{equation}\label{eq:DA_assign_rule}
j = \optchoice{\succ^R_\omega}{F_\omega(c)},
\end{equation}
where $\optchoice{\succ^R_\omega}{F_\omega(c)}$ denotes the maximal element of $\succ^R_\omega$ restricted to the intersection of its domain and $F_\omega(c)$, as defined in the notation section. Consequently, students can only be matched to feasible programs listed in their ROLs. In particular, if no listed program other than the outside option is feasible, the student must accept the outside option.

The equilibrium cutoffs $c$ are determined by requiring that, for each program $j$ with strictly positive cutoff ($c_j>0$), the measure of students matched to that program equals its capacity $q_j$. Programs not reaching their capacity have their cutoff set to zero. We refer to \cite{azevedo_supply_2016} for more details. Given a matching $\mu$ generated by the DA mechanism, its associated cutoff $c$ can be recovered as
\begin{equation}\label{eq:equilibrium_cutoff}
c_j = 
\begin{cases}
	\inf_{\mu(\omega) = j} S_{\omega j}, & \text{if } \prob(\mu(\omega) = j) = q_j, \\[6pt]
	0, & \text{if } \prob(\mu(\omega) = j) < q_j.
\end{cases}
\end{equation}

The following example illustrates how a student's matching outcome is jointly determined by her feasible set and submitted ROL.
\begin{example}[continues=running_example]
Suppose student $\omega$'s feasible set at the equilibrium cutoffs is $F_\omega(c) = \{0, 4, 5, 6\}$.
\begin{itemize}
    \item If her submitted ROL is $2 \succ^R_\omega 3 \succ^R_\omega 4 \succ^R_\omega 5 \succ^R_\omega 0$, then she will be matched to program $4$, since $4$ is the maximal element of $\succ^R_\omega$ restricted to $F_\omega(c)$.
    
    \item If her submitted ROL is $2 \succ^R_\omega 0$, then she will be matched to the outside option, because program $2$ is not within her feasible set.
    
    \item If her submitted ROL is $6 \succ^R_\omega 5 \succ^R_\omega 0$, then she will be matched to program $6$.
\end{itemize}
Note that in the last scenario, she is matched to program $6$ regardless of whether her true preferences rank program $5$ or even the outside option higher. Thus, matching outcomes depend solely on submitted ROLs and the feasible set determined by equilibrium cutoffs, not directly on students' true preferences.
\end{example}

In this paper, we address two central questions: Given access to both the information used by the centralized matching mechanism and the realized match outcomes, what can we learn about students' true preferences? How can these inferred preferences be used to evaluate counterfactual policy scenarios?

\section{Revealed preferences and compatible partial orders}\label{sec:revealed_preference}

In this section, we discuss how to infer students' preferences based on two assumptions commonly adopted in the literature. Using observed data, our approach constructs partial orders that are compatible with students' true preferences under these assumptions. As we demonstrate, this method provides a transparent and theoretically grounded framework for preference revealing. We also discuss the limitations of these assumptions and address them more comprehensively in Section \ref{sec:more_revealed_preference}.

\subsection{Stability}\label{sec:stable}
In the literature, it is common to assume that the observed matching is stable, particularly when analyzing the DA mechanism and its variants. This stability assumption is widely adopted for identifying students' preferences over colleges; see, for example, \cite{fack_beyond_2019}, \cite{Akyol2016}, \cite{Bucarey2018}, and \cite{ngo_preferences_2024}, among many others. Stability is formally defined as follows:

\begin{definition}[Stability]\label{ass:stable}
A matching $\mu: \Omega \to \mathcal{J}_0$ is said to be \emph{stable} if the following three conditions hold for every student $\omega \in \Omega$:
\begin{itemize}
    \item[(i)] \emph{Individual rationality}: Either $\mu(\omega) \succ^Q_\omega 0$ or $\mu(\omega)=0$.
    \item[(ii)] \emph{No waste}: For any program $j \in \mathcal{J}$, if a positive measure of students prefers $j$ over their matched program, then $j$ must be filled to capacity.
    \item[(iii)] \emph{No justified envy}: For any program $j \in \mathcal{J}$ that is filled to capacity, if student $\omega$ prefers $j$ over her matched program $\mu(\omega)$, and another student $\omega'$ is matched to $j$, then it must be the case that $S_{\omega'j} > S_{\omega j}$.
\end{itemize}
\end{definition}

To facilitate referencing the stability assumption explicitly, we write it formally below:

\begin{assumption}[Stability]\label{assu:stability}
The observed matching $\mu$ is stable.
\end{assumption}

We can equivalently characterize stability using the cutoff characterization of DA. Given a matching outcome $\mu$ and its equilibrium cutoffs $c$, the three conditions in Definition \ref{ass:stable} imply that each student is matched to her most preferred feasible program according to her true preference. Formally, for each student $\omega$, we have $\mu(\omega) = \optchoice{\succ^Q_\omega}{F_\omega(c)}$. Combining this result with equation \eqref{eq:DA_assign_rule}, stability ensures that
\begin{equation}\label{eq:stability_immediate}
\optchoice{\succ^Q_\omega}{F_\omega(c)} = \optchoice{\succ^R_\omega}{F_\omega(c)}.
\end{equation}
This equality provides a basis for inferring information about students' true preferences from the stability assumption. To unify the analysis of this section with the analyses of other assumptions discussed in subsequent sections, we reinterpret equation \eqref{eq:stability_immediate} as a compatibility condition involving a partial order constructed below.

\begin{proposition}[Partial order induced by stability] \label{Prop:PS}
	Let $\mu$ be the matching outcome and let $c$ be its corresponding equilibrium cutoffs. Define for each student $\omega$ a partial order $\succ^{P_s}_\omega$ as follows: for every program $j\in F_{\omega}(c) \setminus \{\mu(\omega)\}$, we have $\mu(\omega) \succ^{P_s}_\omega j$. Then, Assumption \ref{assu:stability} holds if and only if, for each student $\omega$, the partial order $\succ^{P_s}_\omega$ is compatible with the student's true preference $\succ^Q_\omega$.
\end{proposition}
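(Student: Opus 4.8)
The plan is to prove the equivalence by showing that the compatibility of $\succ^{P_s}_\omega$ with $\succ^Q_\omega$ is a restatement, in partial-order language, of the stability condition \eqref{eq:stability_immediate}, which the excerpt has already established to be equivalent to Assumption \ref{assu:stability}. So the real work is a ``definition-unwinding'' argument connecting the maximal-element characterization of stability to the compatibility notion defined in the notation section. I would structure the proof as two implications.

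For the forward direction, I would assume Assumption \ref{assu:stability} holds, so by \eqref{eq:stability_immediate} we have $\mu(\omega) = \optchoice{\succ^Q_\omega}{F_\omega(c)}$ for every $\omega$. I then need to verify that $\succ^{P_s}_\omega$ is compatible with $\succ^Q_\omega$, i.e. that for every pair $j_1, j_2$ with $j_1 \succ^{P_s}_\omega j_2$ (and no reverse relation, which here is automatic since $\succ^{P_s}_\omega$ only ranks $\mu(\omega)$ above other feasible programs and never the other way), we have $j_1 \succ^Q_\omega j_2$. By construction the only relations in $\succ^{P_s}_\omega$ are $\mu(\omega) \succ^{P_s}_\omega j$ for $j \in F_\omega(c)\setminus\{\mu(\omega)\}$. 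Since $\mu(\omega)$ is the $\succ^Q_\omega$-maximal element of $F_\omega(c)$, it follows immediately that $\mu(\omega) \succ^Q_\omega j$ for each such $j$, giving compatibility.

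For the converse, I would assume $\succ^{P_s}_\omega$ is compatible with $\succ^Q_\omega$ for every $\omega$ and deduce \eqref{eq:stability_immediate}. Compatibility means $\mu(\omega) \succ^Q_\omega j$ for all $j \in F_\omega(c)\setminus\{\mu(\omega)\}$; since $\mu(\omega) \in F_\omega(c)$ (the assigned program is always feasible, as the outside option is and more generally the match lies in the feasible set under the cutoff characterization), this says exactly that $\mu(\omega)$ is the $\succ^Q_\omega$-maximal feasible program, i.e. $\mu(\omega) = \optchoice{\succ^Q_\omega}{F_\omega(c)}$. Combined with \eqref{eq:DA_assign_rule}, namely $\mu(\omega)=\optchoice{\succ^R_\omega}{F_\omega(c)}$, this yields \eqref{eq:stability_immediate} and hence Assumption \ref{assu:stability}.

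The step I expect to require the most care is not any inequality but the precise handling of the compatibility definition and the implicit claim that $\mu(\omega)\in F_\omega(c)$. I would want to confirm that $\mu(\omega)$ is genuinely feasible so that it is the maximal element rather than merely an upper bound of a set it does not belong to, and to check that the ``no reverse relation'' clause in the definition of compatibility is vacuously satisfied by the one-sided structure of $\succ^{P_s}_\omega$. Once these points are pinned down, the equivalence is a direct translation, and the main conceptual contribution is recognizing that stability's revealed-preference content is fully captured by the single partial order $\succ^{P_s}_\omega$ placing $\mu(\omega)$ above all other feasible programs.
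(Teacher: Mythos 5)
Your forward direction (stability $\Rightarrow$ compatibility) is fine and is in substance the same as the paper's: stability gives $\mu(\omega)=\optchoice{\succ^Q_\omega}{F_\omega(c)}$, and since the only relations in $\succ^{P_s}_\omega$ are one-sided relations $\mu(\omega)\succ^{P_s}_\omega j$ for $j\in F_\omega(c)\setminus\{\mu(\omega)\}$, compatibility follows; your two points of care (that $\mu(\omega)\in F_\omega(c)$ and that the ``no reverse relation'' clause is vacuous) are exactly the right things to pin down.

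The gap is in the converse. You end with ``this yields \eqref{eq:stability_immediate} and hence Assumption \ref{assu:stability},'' but \eqref{eq:stability_immediate} was \emph{derived from} Assumption \ref{assu:stability} in the text; neither the text nor anything earlier in the paper establishes the reverse implication that a matching assigning each student her $\succ^Q_\omega$-maximal feasible program satisfies the three conditions of Definition \ref{ass:stable}. That reverse implication is precisely the nontrivial half of Proposition \ref{Prop:PS}, so treating the equivalence as ``already established'' makes this direction circular. What is needed---and what the paper's own proof supplies---is a direct argument from Definition \ref{ass:stable} using the definition of the equilibrium cutoffs in \eqref{eq:equilibrium_cutoff}. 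Concretely: compatibility rules out any pair $(\omega,j)$ with $j\in F_\omega(c)$ and $j\succ^Q_\omega \mu(\omega)$; then (i) individual rationality holds because $0\in F_\omega(c)$ always, so $\mu(\omega)\neq 0$ forces $\mu(\omega)\succ^Q_\omega 0$; (ii) no waste holds because a program $j$ below capacity has $c_j=0$ by \eqref{eq:equilibrium_cutoff} and is therefore feasible for every student, so no positive measure of students can prefer it to their match; (iii) no justified envy holds because any $j$ that $\omega$ prefers to $\mu(\omega)$ must be infeasible, $S_{\omega j}<c_j$, while \eqref{eq:equilibrium_cutoff} guarantees every $\omega'$ matched to $j$ has $S_{\omega' j}\ge c_j>S_{\omega j}$. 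Inserting this short blocking-pair argument in place of the appeal to \eqref{eq:stability_immediate} completes your proof; without it, the converse direction assumes what it must prove.
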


The Proof of Proposition \ref{Prop:PS} is presented in Appendix \ref{proof:stability}. Proposition \ref{Prop:PS} provides a necessary and sufficient characterization, meaning there is no loss of information implied by the stability assumption when we represent it through the compatibility condition stated in the proposition. To illustrate the partial order constructed in Proposition \ref{Prop:PS}, we revisit the running example.

\begin{example}[continues=running_example]
As before, suppose $\mathcal{J}_0 = \{0, 1, 2, ..., 6\}$ and student $\omega$'s feasibility set $F_{\omega}(c) = \{4, 5, 6 , 0\}$. In addition, suppose she is assigned to $4$. Then, her $\succ^{P_s}_\omega$ implied by the stability assumption only consists of three relations:  $4 \succ^{P_s}_\omega 5$, $4 \succ^{P_s}_\omega 6$ and $4 \succ^{P_s}_\omega 0$, as illustrated in Figure \ref{fig:stability_illustration}.
\end{example}

\begin{figure}[h]
    \centering
    \begin{tikzpicture}[>=latex,line join=bevel,]
  \pgfsetlinewidth{1bp}
\begin{scope}
  \pgfsetstrokecolor{black}
  \definecolor{strokecol}{rgb}{1.0,1.0,1.0};
  \pgfsetstrokecolor{strokecol}
  \definecolor{fillcol}{rgb}{1.0,1.0,1.0};
  \pgfsetfillcolor{fillcol}
  \filldraw (0.0bp,0.0bp) -- (0.0bp,65.48bp) -- (360.0bp,65.48bp) -- (360.0bp,0.0bp) -- cycle;
\end{scope}
  \pgfsetcolor{black}
  \draw [->] (194.77bp,28.728bp) .. controls (199.24bp,30.754bp) and (203.71bp,31.669bp)  .. (219.29bp,28.704bp);
  \draw [->] (190.46bp,33.125bp) .. controls (196.83bp,40.951bp) and (205.71bp,49.738bp)  .. (216.0bp,54.0bp) .. controls (230.78bp,60.123bp) and (237.22bp,60.123bp)  .. (252.0bp,54.0bp) .. controls (258.75bp,51.203bp) and (264.9bp,46.457bp)  .. (277.54bp,33.125bp);
  \draw [->] (190.46bp,33.125bp) .. controls (196.83bp,40.951bp) and (205.71bp,49.738bp)  .. (216.0bp,54.0bp) .. controls (252.96bp,69.307bp) and (269.04bp,69.307bp)  .. (306.0bp,54.0bp) .. controls (312.75bp,51.203bp) and (318.9bp,46.457bp)  .. (331.54bp,33.125bp);
\begin{scope}
  \definecolor{strokecol}{rgb}{0.0,0.0,0.0};
  \pgfsetstrokecolor{strokecol}
  \draw (18.0bp,18.0bp) ellipse (18.0bp and 18.0bp);
  \draw (18.0bp,18.0bp) node {$1$};
\end{scope}
\begin{scope}
  \definecolor{strokecol}{rgb}{0.0,0.0,0.0};
  \pgfsetstrokecolor{strokecol}
  \draw (72.0bp,18.0bp) ellipse (18.0bp and 18.0bp);
  \draw (72.0bp,18.0bp) node {$2$};
\end{scope}
\begin{scope}
  \definecolor{strokecol}{rgb}{0.0,0.0,0.0};
  \pgfsetstrokecolor{strokecol}
  \draw (126.0bp,18.0bp) ellipse (18.0bp and 18.0bp);
  \draw (126.0bp,18.0bp) node {$3$};
\end{scope}
\begin{scope}
  \definecolor{strokecol}{rgb}{0.0,0.0,0.0};
  \pgfsetstrokecolor{strokecol}
  \draw (180.0bp,18.0bp) ellipse (18.0bp and 18.0bp);
  \draw (180.0bp,18.0bp) node {$4$};
\end{scope}
\begin{scope}
  \definecolor{strokecol}{rgb}{0.0,0.0,0.0};
  \pgfsetstrokecolor{strokecol}
  \draw (234.0bp,18.0bp) ellipse (18.0bp and 18.0bp);
  \draw (234.0bp,18.0bp) node {$5$};
\end{scope}
\begin{scope}
  \definecolor{strokecol}{rgb}{0.0,0.0,0.0};
  \pgfsetstrokecolor{strokecol}
  \draw (288.0bp,18.0bp) ellipse (18.0bp and 18.0bp);
  \draw (288.0bp,18.0bp) node {$6$};
\end{scope}
\begin{scope}
  \definecolor{strokecol}{rgb}{0.0,0.0,0.0};
  \pgfsetstrokecolor{strokecol}
  \draw (342.0bp,18.0bp) ellipse (18.0bp and 18.0bp);
  \draw (342.0bp,18.0bp) node {$0$};
\end{scope}
\end{tikzpicture}
     \vspace{-2em}
    \caption{Illustration of $\succ^{P_s}_\omega$}
    \label{fig:stability_illustration}
\end{figure}
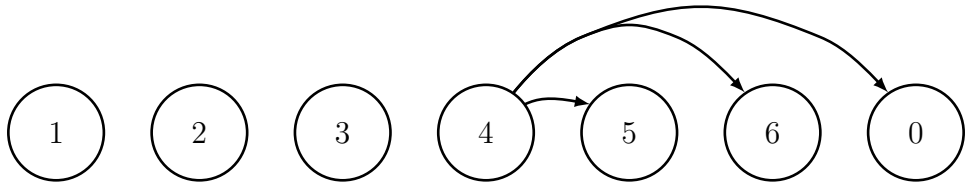

In practice, given the observed matching outcomes and students' priority scores, one can always determine the equilibrium cutoffs using equation \eqref{eq:equilibrium_cutoff} and then construct the partial order $\succ^{P_s}_\omega$ for each student accordingly.

\begin{remark}
	Although it provides straightforward implications, the stability assumption has two notable limitations: (\emph{i}) It is often not very informative without other assumptions. Indeed, stability alone only reveals preference relations between the matched program and other feasible programs. As a result, it generates a limited number of binary relations compared to the full preference order $\succ^Q_\omega$, particularly for students who have lower priority scores and thus smaller feasible sets. (\emph{ii}) 
Stability is a desirable property of the equilibrium outcome, not a primitive assumption about the model—such as assumptions on student behavior. While \cite{fack_beyond_2019} provides an asymptotic justification for stability, researchers may often prefer to impose alternative assumptions on how students behave within a given matching environment.

Importantly, our theoretical results do not rely on the stability assumption. Rather, we develop a flexible framework capable of incorporating various assumptions regarding revealed preferences, with stability being just one possible choice. This flexibility enables researchers to adapt our approach across diverse empirical contexts.
\end{remark}

\subsection{Undominated Strategy}\label{sec:undominance}
A widely used assumption in the theoretical matching literature is that students submit undominated ROLs. A student's assignment depends on the ROLs and priority scores of all students via the induced equilibrium cutoffs and hence her feasible set, so it is natural to rule out dominated submissions. Intuitively, a ROL is dominated if there exists another ROL that yields a weakly better assignment for every potential feasible set and a strictly better assignment for at least one feasible set, according to the student's true preferences. We formalize this notion below.

\begin{definition}\label{def:undominance}
Given student $\omega$'s preference ordering $\succ^Q_\omega$, a ROL $\succ^R_\omega$ is \emph{dominated} by another ROL $\succ^{R'}_\omega$ if the following two conditions hold:
\begin{enumerate}
    \item For every feasible set $F\subseteq \mathcal{J}_0$ with $0\in F$, we have either $ \optchoice{\succ^{R'}_\omega}{F} = \optchoice{\succ^{R}_\omega}{F}$ or $\optchoice{\succ^{R'}_\omega}{F} \succ^Q_\omega \optchoice{\succ^{R}_\omega}{F}$.
    \item There exists at least one feasible set $F\subseteq \mathcal{J}_0$ with $0\in F$ for which $\optchoice{\succ^{R'}_\omega}{F} \succ^Q_\omega \optchoice{\succ^{R}_\omega}{F}$.
\end{enumerate}
A ROL $\succ^R_\omega$ is \emph{undominated} if no other ROL dominates it. A ROL $\succ^R_\omega$ is \emph{dominant} if it dominates every other possible ROL.
\end{definition}

When there is no constraint on the number of programs listed in students' ROLs, \cite{dubins_machiavelli_1981} and \cite{roth_economics_1982} show that truthfully reporting ROLs as students' true preferences is a dominant strategy under the DA mechanism. This property, known as \emph{strategy-proofness}, is a fundamental characteristic of the DA mechanism.\footnote{See also \cite{azevedo_strategy-proofness_2018}, who advocates a robust notion of strategy-proofness in large economies.} In practice, however, many real-world implementations of DA impose nontrivial upper bounds on $|\succ^R_\omega|$, restricting students to submit ROLs of length $|\succ^R_\omega|\le K$ for some predetermined constant $K < |\mathcal{J}_0|$. Under such constraints, the DA mechanism is no longer strategy-proof, and no dominant strategy exists.\footnote{Strategy-proofness also fails if, instead of a length constraint, students incur application costs dependent on the number of schools listed \cite{fack_beyond_2019}.}

Although no dominant strategy exists under constrained DA mechanisms, \emph{dominated} strategies still exist. \cite{haeringer_constrained_2009} provides a detailed characterization of these dominated strategies. Specifically, Lemma 4.2 in \cite{haeringer_constrained_2009} establishes that if a mechanism is strategy-proof in the absence of constraints on $|\succ^R_\omega|$, then in settings with a constraint $|\succ^R_\omega| \le K < |\mathcal{J}_0|$, any ROL $\succ^R_\omega$ incompatible with the true preference ordering $\succ^Q_\omega$ is dominated by another ROL $\succ^{R'}_\omega$ that lists the same set of programs according to $\succ^Q_\omega$. Moreover, if $|\succ^{R'}_\omega| = K$, the ROL $\succ^{R'}_\omega$ represents an undominated strategy. Using such a strategy is known as the ``dropping strategy'' in \cite{kojima_incentives_2009}.

Let us look at some examples before proceeding.
\begin{example}\label{ex:undominance}
Suppose $\mathcal{J}_0 = \{0, 1, 2, \dots, 6\}$ and $K = 5$. Consider a student $\omega$ whose true preference ordering is given by:
\[
1 \succ^Q_\omega 2 \succ^Q_\omega 3 \succ^Q_\omega 4 \succ^Q_\omega 5 \succ^Q_\omega 0 \succ^Q_\omega 6.
\]

Consider the following scenarios:
\begin{itemize}
\item The ROL $3 \succ^R_\omega 2 \succ^R_\omega 4 \succ^R_\omega 5 \succ^R_\omega 0$ is dominated by the alternative ROL $2 \succ^{R'}_\omega 3 \succ^{R'}_\omega 4 \succ^{R'}_\omega 5 \succ^{R'}_\omega 0$. Indeed, for any feasible set $F$ containing both programs $\{2, 3\}$, we have $\optchoice{\succ^{R'}_\omega}{F} = 2 \succ^Q_\omega 3 = \optchoice{\succ^R_\omega}{F}$. For all other feasible sets, we have $\optchoice{\succ^{R'}_\omega}{F} = \optchoice{\succ^R_\omega}{F}$. Thus, the ROL $\succ^R_\omega$ is dominated by $\succ^{R'}_\omega$. In fact, it can be shown that  $\succ^{R'}_\omega$ is an undominated strategy. 
\item The ROL $2 \succ^R_\omega 3 \succ^R_\omega 4 \succ^R_\omega 0$ is dominated by the alternative ROL $2 \succ^{R'}_\omega 3 \succ^{R'}_\omega 4 \succ^{R'}_\omega 5 \succ^{R'}_\omega 0$. For any feasible set $F$ with $5 \in F$ and $F \cap \{2, 3, 4\} = \emptyset$,  we have $\optchoice{\succ^{R'}_\omega}{F} = 5 \succ^Q_\omega 0 = \optchoice{\succ^R_\omega}{F}$. For all other feasible sets, we have $\optchoice{\succ^{R'}_\omega}{F} = \optchoice{\succ^R_\omega}{F}$. 
\end{itemize}
\end{example}

We formally state the assumption considered in this subsection as follows:

\begin{assumption}[undominated strategies]\label{assu:undominance}
The ROL $\succ^R_\omega$ of each student $\omega$ is undominated.
\end{assumption}

Under Assumption \ref{assu:undominance}, the observed ROLs reveal information about students' true preferences. Specifically, observing an ROL $\succ^R_\omega$ restricts the student's true preference $\succ^Q_\omega$ to the set of total orders that rationalize $\succ^R_\omega$ as an undominated strategy. To formally characterize this set of preferences, we construct a partial order $\succ^{P_u}_\omega$ from the student's submitted ROL $\succ^R_\omega$. Compatibility between this constructed partial order and the student's true preference provides a necessary and sufficient condition for the submitted ROL to be undominated. This is formally stated in the following proposition:

\begin{proposition}[Partial order induced by undominated strategies]\label{Prop:undominated_strategy}
	For each student $\omega$, define a partial order $\succ^{P_u}_\omega$ as follows: \emph{(i)} for every pair $(j_1, j_2)$ with $j_1 \succ^{R}_\omega j_2$, set $j_1 \succ^{P_u}_\omega j_2$; and \emph{(ii)} if $|\succ^R_\omega| < K$, for each program $j$ not listed in the domain of $\succ^R_\omega$, set $0 \succ^{P_u}_\omega j$. Then, Assumption \ref{assu:undominance} holds if and only if, for each student $\omega$, the partial order $\succ^{P_u}_\omega$ is compatible with the student's true preference $\succ^Q_\omega$.
\end{proposition}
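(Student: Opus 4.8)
The plan is to prove both directions of the equivalence, establishing that the submitted ROL $\succ^R_\omega$ is undominated if and only if $\succ^{P_u}_\omega$ is compatible with $\succ^Q_\omega$. Recall from the notation section that compatibility of $\succ^{P_u}_\omega$ with the total order $\succ^Q_\omega$ means precisely that whenever $j_1 \succ^{P_u}_\omega j_2$, we also have $j_1 \succ^Q_\omega j_2$. Unpacking the construction, this amounts to two requirements: first, the listed programs in $\succ^R_\omega$ appear in an order consistent with $\succ^Q_\omega$ (that is, $\succ^R_\omega$ and $\succ^Q_\omega$ agree on the relative ranking of every pair of listed programs); and second, whenever the list is short ($|\succ^R_\omega| < K$), every unlisted program $j$ satisfies $0 \succ^Q_\omega j$, i.e., every unlisted program is unacceptable. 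So the real content of the proposition is that these two conditions together characterize undominatedness.

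The natural engine for the proof is the characterization of dominated strategies from \cite{haeringer_constrained_2009}, quoted in the text just before the proposition: under a strategy-proof mechanism subject to the length constraint $K < |\mathcal{J}_0|$, any ROL incompatible with $\succ^Q_\omega$ is dominated by one listing the same programs in the $\succ^Q_\omega$-order, and a $\succ^Q_\omega$-ordered list of full length $K$ is undominated. For the ``only if'' direction (undominated $\Rightarrow$ compatible), I would argue the contrapositive. If $\succ^{P_u}_\omega$ is not compatible with $\succ^Q_\omega$, then one of the two conditions fails. If the listed programs are misordered relative to $\succ^Q_\omega$, then by \citeauthor{haeringer_constrained_2009}'s result the ROL is dominated by its $\succ^Q_\omega$-sorted rearrangement — exactly the first bullet of Example~\ref{ex:undominance}. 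If instead the list is short but some unlisted program $j$ is acceptable ($j \succ^Q_\omega 0$), I would exhibit an explicit dominating ROL obtained by inserting $j$ into the list in its correct $\succ^Q_\omega$-position: this is weakly better on every feasible set and strictly better on any feasible set where $j$ is feasible but no currently-listed acceptable program is, mirroring the second bullet of Example~\ref{ex:undominance}. Either way $\succ^R_\omega$ is dominated, contradicting Assumption~\ref{assu:undominance}.

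For the ``if'' direction (compatible $\Rightarrow$ undominated), suppose both conditions hold and, toward a contradiction, that some ROL $\succ^{R'}_\omega$ dominates $\succ^R_\omega$. The key step is to compare the choices $\optchoice{\succ^R_\omega}{F}$ and $\optchoice{\succ^{R'}_\omega}{F}$ across feasible sets $F$. I would split on the two ways the list can fail to be ``maximal.'' If $|\succ^R_\omega| = K$, then since the listed set is already $\succ^Q_\omega$-sorted and of full length, no rearrangement helps and no program can be added without violating the length cap, so any candidate $\succ^{R'}_\omega$ yields the same choice as $\succ^R_\omega$ on every $F$ up to selecting equally-or-less preferred listed programs — ruling out strict improvement. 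If $|\succ^R_\omega| < K$, then by the second condition every unlisted program is unacceptable, so on any feasible set the best option $\succ^R_\omega$ can miss is the outside option itself; since unlisted programs are all $\succ^Q_\omega$-dominated by $0$, adding them never produces a $\succ^Q_\omega$-strict improvement, and again no $\succ^{R'}_\omega$ can strictly beat $\succ^R_\omega$ on any $F$. This contradicts condition~2 in Definition~\ref{def:undominance}, so $\succ^R_\omega$ is undominated.

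The main obstacle I anticipate is the careful bookkeeping in the ``if'' direction, specifically verifying that no dominating ROL exists when $|\succ^R_\omega| < K$ and all unlisted programs are unacceptable. The subtlety is that a competitor $\succ^{R'}_\omega$ might differ from $\succ^R_\omega$ both by reordering and by swapping listed acceptable programs for unlisted unacceptable ones; I need to confirm that on every feasible set $F$ the chosen element under $\succ^{R'}_\omega$ is never $\succ^Q_\omega$-strictly above that under $\succ^R_\omega$. The cleanest way to handle this is to note that $\optchoice{\succ^R_\omega}{F}$ always coincides with $\optchoice{\succ^Q_\omega}{F \cap D}$ where $D$ is the set of acceptable programs listed in $\succ^R_\omega$, and that $F \cap D$ already contains every acceptable feasible program the student could profitably be matched to (since unlisted acceptables are absent only when the list is full, which is excluded here). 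Making this identity precise and using it to dominate any competitor's choice is where the argument must be watertight.
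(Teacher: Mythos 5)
Your overall architecture (contrapositive for the ``only if'' direction, a case split on $|\succ^R_\omega|$, explicit dominating ROLs) mirrors the paper's proof, but two of your steps fail. In the ``if'' direction with $|\succ^R_\omega| = K$, your claim that any candidate $\succ^{R'}_\omega$ ``yields the same choice as $\succ^R_\omega$ on every $F$ up to selecting equally-or-less preferred listed programs --- ruling out strict improvement'' is false: a candidate can \emph{swap} a listed program for an unlisted one and strictly improve on some feasible sets. In the paper's running example ($1 \succ^Q_\omega 2 \succ^Q_\omega \dots \succ^Q_\omega 5 \succ^Q_\omega 0 \succ^Q_\omega 6$, $K=5$, compatible full-length ROL $2 \succ^R_\omega 3 \succ^R_\omega 4 \succ^R_\omega 5 \succ^R_\omega 0$), the candidate $1 \succ^{R'}_\omega 3 \succ^{R'}_\omega 4 \succ^{R'}_\omega 5 \succ^{R'}_\omega 0$ gives $\optchoice{\succ^{R'}_\omega}{F} = 1 \succ^Q_\omega 2 = \optchoice{\succ^{R}_\omega}{F}$ on $F = \{1,2,0\}$, a strict improvement. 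Undominatedness holds not because improvement is impossible, but because any candidate achieving an improvement must include a program outside $\text{domain}(\succ^R_\omega)$ and hence, the list being full, must omit some $j^* \in \text{domain}(\succ^R_\omega)$; on $F' = \{j^*, 0\}$ that candidate returns $0$ while $\succ^R_\omega$ returns $j^*$, and compatibility gives $j^* \succ^Q_\omega 0$, so condition 1 of Definition~\ref{def:undominance} fails for the candidate. This construction of $F'$ is the heart of the paper's argument in this case and is absent from your proposal.

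In the ``only if'' direction, your two-case contrapositive misses the failure mode in which a \emph{listed} program is unacceptable. Because the ROL format forces the outside option last, every listed $j$ satisfies $j \succ^R_\omega 0$, so compatibility of $\succ^{P_u}_\omega$ with $\succ^Q_\omega$ also requires $j \succ^Q_\omega 0$ for every listed $j$. If some listed $j_1$ has $0 \succ^Q_\omega j_1$, the rearrangement result of \cite{haeringer_constrained_2009} that you invoke does not produce a dominating list: it lists the \emph{same} set of programs with $0$ still forced last, so when the listed programs are already correctly ordered among themselves it returns $\succ^R_\omega$ itself, which cannot dominate it; nor does your ``short list with an unlisted acceptable program'' case apply (the list may be full, or all unlisted programs may be unacceptable). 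Concretely, with $\mathcal{J}_0 = \{0,1,2,3\}$, $K=3$, true preference $1 \succ^Q_\omega 0 \succ^Q_\omega 2 \succ^Q_\omega 3$ and ROL $1 \succ^R_\omega 2 \succ^R_\omega 0$, neither of your constructions yields a dominating ROL, yet the proposition requires one to exist. The paper handles this with a third construction you need: \emph{drop} the unacceptable programs; the resulting list $1 \succ^{R'}_\omega 0$ agrees with $\succ^R_\omega$ except where it returns $0$ instead of an unacceptable program, which is a weak improvement on every $F$ and strict on $F' = \{2, 0\}$. Your proof must add this dropping construction alongside reordering and insertion, both for the full-length and the short-list cases.
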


To illustrate the partial order constructed in Proposition \ref{Prop:undominated_strategy}, we revisit the running example.

\begin{example}[continues=running_example]
Suppose, as before, $\mathcal{J}_0 = \{0, 1, 2, \dots, 6\}$, $K = 5$, student $\omega$'s feasible set at equilibrium cutoff is $F_{\omega}(c) = \{4, 5, 6, 0\}$, and she is matched to program $4$. Additionally, suppose that the ROL she submitted is: $ 2 \succ^R_\omega 3 \succ^R_\omega 4 \succ^R_\omega 5 \succ^R_\omega 0$. Under the assumption of undominated strategies, the partial order $\succ^{P_u}_\omega$ constructed from her ROL coincides directly with $\succ^R_\omega$. Specifically, we have $2 \succ^{P_u}_\omega 3 \succ^{P_u}_\omega 4 \succ^{P_u}_\omega 5 \succ^{P_u}_\omega 0$. as illustrated in Figure \ref{fig:undominance_illustration}.

Alternatively, consider another student $\omega'$ who submits a shorter ROL: $ 2 \succ^R_{\omega'} 3 \succ^R_{\omega'} 4 \succ^R_{\omega'} 0$. Since $|\succ^R_{\omega'}| < K$, the partial order $\succ^{P_u}_{\omega'}$ constructed from this ROL includes relations placing the outside option above the unlisted programs. Thus, the inferred partial order is: $2 \succ^{P_u}_{\omega'} 3 \succ^{P_u}_{\omega'} 4 \succ^{P_u}_{\omega'} 0 \succ^{P_u}_{\omega'} \{1, 5, 6\}$ as depicted in Figure \ref{fig:undominance_illustration_unbinding}.
\end{example}

\begin{figure}[h]
    \centering
    \begin{tikzpicture}[>=latex,line join=bevel,]
  \pgfsetlinewidth{1bp}
\begin{scope}
  \pgfsetstrokecolor{black}
  \definecolor{strokecol}{rgb}{1.0,1.0,1.0};
  \pgfsetstrokecolor{strokecol}
  \definecolor{fillcol}{rgb}{1.0,1.0,1.0};
  \pgfsetfillcolor{fillcol}
  \filldraw (0.0bp,0.0bp) -- (0.0bp,58.59bp) -- (360.0bp,58.59bp) -- (360.0bp,0.0bp) -- cycle;
\end{scope}
  \pgfsetcolor{black}
  \draw [->] (90.141bp,18.0bp) .. controls (92.131bp,18.0bp) and (94.121bp,18.0bp)  .. (107.51bp,18.0bp);
  \draw [->] (144.14bp,18.0bp) .. controls (146.13bp,18.0bp) and (148.12bp,18.0bp)  .. (161.51bp,18.0bp);
  \draw [->] (198.14bp,18.0bp) .. controls (200.13bp,18.0bp) and (202.12bp,18.0bp)  .. (215.51bp,18.0bp);
  \draw [->] (244.46bp,33.125bp) .. controls (250.83bp,40.951bp) and (259.71bp,49.738bp)  .. (270.0bp,54.0bp) .. controls (284.78bp,60.123bp) and (291.22bp,60.123bp)  .. (306.0bp,54.0bp) .. controls (312.75bp,51.203bp) and (318.9bp,46.457bp)  .. (331.54bp,33.125bp);
\begin{scope}
  \definecolor{strokecol}{rgb}{0.0,0.0,0.0};
  \pgfsetstrokecolor{strokecol}
  \draw (18.0bp,18.0bp) ellipse (18.0bp and 18.0bp);
  \draw (18.0bp,18.0bp) node {$1$};
\end{scope}
\begin{scope}
  \definecolor{strokecol}{rgb}{0.0,0.0,0.0};
  \pgfsetstrokecolor{strokecol}
  \draw (72.0bp,18.0bp) ellipse (18.0bp and 18.0bp);
  \draw (72.0bp,18.0bp) node {$2$};
\end{scope}
\begin{scope}
  \definecolor{strokecol}{rgb}{0.0,0.0,0.0};
  \pgfsetstrokecolor{strokecol}
  \draw (126.0bp,18.0bp) ellipse (18.0bp and 18.0bp);
  \draw (126.0bp,18.0bp) node {$3$};
\end{scope}
\begin{scope}
  \definecolor{strokecol}{rgb}{0.0,0.0,0.0};
  \pgfsetstrokecolor{strokecol}
  \draw (180.0bp,18.0bp) ellipse (18.0bp and 18.0bp);
  \draw (180.0bp,18.0bp) node {$4$};
\end{scope}
\begin{scope}
  \definecolor{strokecol}{rgb}{0.0,0.0,0.0};
  \pgfsetstrokecolor{strokecol}
  \draw (234.0bp,18.0bp) ellipse (18.0bp and 18.0bp);
  \draw (234.0bp,18.0bp) node {$5$};
\end{scope}
\begin{scope}
  \definecolor{strokecol}{rgb}{0.0,0.0,0.0};
  \pgfsetstrokecolor{strokecol}
  \draw (288.0bp,18.0bp) ellipse (18.0bp and 18.0bp);
  \draw (288.0bp,18.0bp) node {$6$};
\end{scope}
\begin{scope}
  \definecolor{strokecol}{rgb}{0.0,0.0,0.0};
  \pgfsetstrokecolor{strokecol}
  \draw (342.0bp,18.0bp) ellipse (18.0bp and 18.0bp);
  \draw (342.0bp,18.0bp) node {$0$};
\end{scope}
\end{tikzpicture}

     \vspace{-2em}
    \caption{Illustration of $\succ^{P_u}_\omega$ when $|\succ^R_\omega| = K$}
    \label{fig:undominance_illustration}
\end{figure}

\begin{figure}[h]
    \centering
    \begin{tikzpicture}[>=latex,line join=bevel,]
  \pgfsetlinewidth{1bp}
\begin{scope}
  \pgfsetstrokecolor{black}
  \definecolor{strokecol}{rgb}{1.0,1.0,1.0};
  \pgfsetstrokecolor{strokecol}
  \definecolor{fillcol}{rgb}{1.0,1.0,1.0};
  \pgfsetfillcolor{fillcol}
  \filldraw (0.0bp,0.0bp) -- (0.0bp,96.88bp) -- (360.0bp,96.88bp) -- (360.0bp,0.0bp) -- cycle;
\end{scope}
  \pgfsetcolor{black}
  \draw [->] (90.141bp,49.396bp) .. controls (92.131bp,49.396bp) and (94.121bp,49.396bp)  .. (107.51bp,49.396bp);
  \draw [->] (144.14bp,49.396bp) .. controls (146.13bp,49.396bp) and (148.12bp,49.396bp)  .. (161.51bp,49.396bp);
  \draw [->] (190.46bp,64.521bp) .. controls (196.83bp,72.347bp) and (205.71bp,81.135bp)  .. (216.0bp,85.396bp) .. controls (252.96bp,100.7bp) and (269.04bp,100.7bp)  .. (306.0bp,85.396bp) .. controls (312.75bp,82.6bp) and (318.9bp,77.854bp)  .. (331.54bp,64.521bp);
  \draw [->] (331.54bp,34.272bp) .. controls (325.17bp,26.446bp) and (316.29bp,17.658bp)  .. (306.0bp,13.396bp) .. controls (254.26bp,-8.0339bp) and (108.5bp,0.53026bp)  .. (54.0bp,13.396bp) .. controls (41.487bp,16.35bp) and (28.364bp,15.116bp)  .. (18.0bp,31.396bp);
  \draw [->] (331.54bp,34.272bp) .. controls (325.17bp,26.446bp) and (316.29bp,17.658bp)  .. (306.0bp,13.396bp) .. controls (279.45bp,2.4008bp) and (244.52bp,-0.80895bp)  .. (234.0bp,31.396bp);
  \draw [->] (327.07bp,38.728bp) .. controls (317.38bp,31.801bp) and (304.77bp,23.851bp)  .. (288.0bp,30.228bp);
\begin{scope}
  \definecolor{strokecol}{rgb}{0.0,0.0,0.0};
  \pgfsetstrokecolor{strokecol}
  \draw (18.0bp,49.4bp) ellipse (18.0bp and 18.0bp);
  \draw (18.0bp,49.396bp) node {$1$};
\end{scope}
\begin{scope}
  \definecolor{strokecol}{rgb}{0.0,0.0,0.0};
  \pgfsetstrokecolor{strokecol}
  \draw (72.0bp,49.4bp) ellipse (18.0bp and 18.0bp);
  \draw (72.0bp,49.396bp) node {$2$};
\end{scope}
\begin{scope}
  \definecolor{strokecol}{rgb}{0.0,0.0,0.0};
  \pgfsetstrokecolor{strokecol}
  \draw (126.0bp,49.4bp) ellipse (18.0bp and 18.0bp);
  \draw (126.0bp,49.396bp) node {$3$};
\end{scope}
\begin{scope}
  \definecolor{strokecol}{rgb}{0.0,0.0,0.0};
  \pgfsetstrokecolor{strokecol}
  \draw (180.0bp,49.4bp) ellipse (18.0bp and 18.0bp);
  \draw (180.0bp,49.396bp) node {$4$};
\end{scope}
\begin{scope}
  \definecolor{strokecol}{rgb}{0.0,0.0,0.0};
  \pgfsetstrokecolor{strokecol}
  \draw (234.0bp,49.4bp) ellipse (18.0bp and 18.0bp);
  \draw (234.0bp,49.396bp) node {$5$};
\end{scope}
\begin{scope}
  \definecolor{strokecol}{rgb}{0.0,0.0,0.0};
  \pgfsetstrokecolor{strokecol}
  \draw (342.0bp,49.4bp) ellipse (18.0bp and 18.0bp);
  \draw (342.0bp,49.396bp) node {$0$};
\end{scope}
\begin{scope}
  \definecolor{strokecol}{rgb}{0.0,0.0,0.0};
  \pgfsetstrokecolor{strokecol}
  \draw (288.0bp,49.4bp) ellipse (18.0bp and 18.0bp);
  \draw (288.0bp,49.396bp) node {$6$};
\end{scope}
\end{tikzpicture}

     \vspace{-2em}
     \caption{Illustration of $\succ^{P_u}_{\omega'}$ when $|\succ^R_{\omega'}| < K$}
    \label{fig:undominance_illustration_unbinding}
\end{figure}

The partial order $\succ^{P_u}_\omega$ implied by the assumption of undominated strategies relies exclusively on the submitted ROLs, due to the definition provided in Definition \ref{def:undominance}. In practice, one can construct $\succ^{P_u}_\omega$ for each student as in the above example, provided the ROLs are observed.


\begin{remark}\label{remark:undominance}
	Assumption \ref{assu:undominance} is quite powerful, as the implied partial order $\succ^{P_u}_\omega$ conveys substantial information about their true preferences. Specifically, when the length constraint is not binding (i.e., $|\succ^R_\omega| < K$), the partial order $\succ^{P_u}_\omega$ derived from Assumption \ref{assu:undominance} is always at least as informative as, and typically strictly more informative than, the partial order $\succ^{P_s}_\omega$ induced by stability alone.

However,  Assumption \ref{assu:undominance} can also be too restrictive, as it implicitly assumes that the student believes all feasible sets $F$ are possible, and therefore she must consider every potential feasible set. To illustrate, reconsider student $\omega$ in Example \ref{ex:undominance}. The ROL $2 \succ^R_\omega 3 \succ^R_\omega 4 \succ^R_\omega 0$ is strictly dominated by the alternative ROL $2 \succ^{R'}_\omega 3 \succ^{R'}_\omega 4 \succ^{R'}_\omega 5 \succ^{R'}_\omega 0$ only for feasible sets that include program $5$ but exclude programs $2$, $3$, and $4$. If student $\omega$ believes such feasible sets will not arise at equilibrium, she might reasonably be indifferent between $\succ^R_\omega$ and $\succ^{R'}_\omega$, potentially choosing the shorter ROL. In Section \ref{sec:more_revealed_preference}, we introduce a more robust variant of the undominated strategy assumption to address this concern.

\end{remark}

\subsection{Combine multiple assumptions}
If researchers are willing to impose multiple assumptions simultaneously, our approach can readily combine these assumptions to extract more information about true preferences from the data. In this section, we illustrate this idea by jointly utilizing the two assumptions discussed earlier. Similar techniques can be applied to other assumptions introduced later in Section \ref{sec:more_revealed_preference}.

From Proposition \ref{Prop:PS}, we know that the set of preferences for which a matching outcome $\mu$ is stable coincides exactly with the set of preferences compatible with the partial order $\succ^{P_s}_\omega$. Similarly, Proposition \ref{Prop:undominated_strategy} indicates that the set of preferences rationalizing $\succ^R_\omega$ as an undominated strategy is equal to the set of preferences compatible with the partial order $\succ^{P_u}_\omega$. Therefore, the set of preferences satisfying both the stability and undominated strategy assumptions simultaneously corresponds to the preferences compatible with both $\succ^{P_s}_\omega$ and $\succ^{P_u}_\omega$. According to the following theorem, this set can be characterized by the join of the two partial orders, denoted by $\succ^{P_s}_\omega \vee \succ^{P_u}_\omega$. The formal definition of the join operation is provided in the notation and preliminaries section.

\begin{lemma}\label{thm:join}
Let $\succ^{P}$ and $\succ^{P'}$ be two partial orders. If $\succ^{P}$ and $\succ^{P'}$ are compatible, then a total order $\succ$ is compatible with both partial orders if and only if $\succ$ is compatible with their join $\succ^{P}\vee \succ^{P'}$. If $\succ^{P}$ and $\succ^{P'}$ are not compatible, then there does not exist a total order $\succ$ compatible with both.
\end{lemma}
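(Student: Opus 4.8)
The plan is to reduce the notion of compatibility between a total order and a partial order to a plain set-containment statement, and then to leverage the fact that the join $\succ^{P}\vee\succ^{P'}$ is, by construction, the transitive closure of the union $\succ^{P}\cup\succ^{P'}$. First I would record a preliminary observation: because a partial order is asymmetric, the clause ``and no $j_2\succ' j_1$'' in the definition of compatibility is automatically satisfied whenever $j_1\succ' j_2$. Hence, for a partial order $\succ'$, a total order $\succ$ is compatible with $\succ'$ if and only if $\succ'\subseteq\succ$ as sets of ordered pairs, i.e.\ $\succ$ is a linear extension of $\succ'$. This turns every compatibility claim into a containment claim, which interacts cleanly with unions and transitive closures. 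I would also note the trivial inclusions $\succ^{P}\subseteq\succ^{P}\vee\succ^{P'}$ and $\succ^{P'}\subseteq\succ^{P}\vee\succ^{P'}$, obtained from the length-two chains in the definition of the join.

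Next, for the compatible case, I would establish that the join is itself a genuine (irreflexive, asymmetric, transitive) partial order, so that the containment reduction applies to it as well. This is where the compatibility hypothesis does its work: by definition there is a total order $\succ_0$ compatible with both $\succ^{P}$ and $\succ^{P'}$, so $\succ^{P}\subseteq\succ_0$ and $\succ^{P'}\subseteq\succ_0$, whence $\succ^{P}\cup\succ^{P'}\subseteq\succ_0$. Since $\succ_0$ is transitive, taking transitive closures preserves the inclusion, so $\succ^{P}\vee\succ^{P'}\subseteq\succ_0$; irreflexivity and asymmetry of the total order $\succ_0$ then pass down to the join.

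The main equivalence then follows from two short containment arguments, both invoking the preliminary observation. For the only-if direction, if $\succ$ is compatible with both $\succ^{P}$ and $\succ^{P'}$, then $\succ^{P}\cup\succ^{P'}\subseteq\succ$; since $\succ$ is transitive, the transitive closure of the union, namely $\succ^{P}\vee\succ^{P'}$, is also contained in $\succ$, so $\succ$ is compatible with the join. For the if direction, if $\succ$ is compatible with the join, then (the join being a partial order) $\succ^{P}\vee\succ^{P'}\subseteq\succ$; since each of $\succ^{P},\succ^{P'}$ is contained in the join, each is contained in $\succ$, so $\succ$ is compatible with both. Finally, the incompatible case is immediate, since ``$\succ^{P}$ and $\succ^{P'}$ are not compatible'' is defined as the nonexistence of a total order compatible with both, which is exactly the stated conclusion.

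I expect the only real obstacle to be verifying the asymmetry of the join in the compatible case; everything else is bookkeeping with inclusions. The reduction of compatibility to containment is what makes the transitive-closure step transparent, because containment in a transitive relation is preserved under transitive closure, which is precisely the fact doing the heavy lifting in both directions of the equivalence.
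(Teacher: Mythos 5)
Your proof is correct and follows essentially the same route as the paper's: the ``if'' direction via the containments $\succ^{P}\subseteq\succ^{P}\vee\succ^{P'}$ and $\succ^{P'}\subseteq\succ^{P}\vee\succ^{P'}$, the ``only if'' direction via transitivity of $\succ$ applied to the defining chains of the join (your ``containment is preserved under transitive closure when the target is transitive'' is exactly the paper's explicit chain argument), and the incompatible case directly from the definition. The one place you are more careful than the paper is in verifying that the join is itself irreflexive and asymmetric (via $\succ^{P}\vee\succ^{P'}\subseteq\succ_0$ for a witnessing total order $\succ_0$); the paper's ``if'' direction silently relies on this, since under the paper's definition of compatibility the clause ``no $j_2 \succ' j_1$'' is only automatically vacuous when the relation $\succ'$ is asymmetric.
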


We formally establish that the partial orders $\succ^{P_s}_\omega$ and $\succ^{P_u}_\omega$ are always compatible with each other in the following proposition. The second part of the proposition follows immediately as a corollary from Lemma \ref{thm:join}, Proposition \ref{Prop:PS}, and Proposition \ref{Prop:undominated_strategy}.

\begin{proposition}\label{Prop:PS_undominated_strategy}
	For each student $\omega \in \Omega$, the partial orders $\succ^{P_s}_\omega$ and $\succ^{P_u}_\omega$ are always compatible. Moreover, Assumptions \ref{assu:stability} and \ref{assu:undominance} hold if and only if, for each student $\omega$, preference $\succ^Q_\omega$ is compatible with the join $\succ^{P_s}_\omega \vee \succ^{P_u}_\omega$.
\end{proposition}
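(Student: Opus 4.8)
The plan is to prove the two claims in sequence, with essentially all of the work concentrated in the first (compatibility) claim; the ``moreover'' part then drops out as a formal corollary of Lemma \ref{thm:join} together with Propositions \ref{Prop:PS} and \ref{Prop:undominated_strategy}. For the first claim I would recall that, by the definition in the preliminaries, $\succ^{P_s}_\omega$ and $\succ^{P_u}_\omega$ are compatible exactly when some total order simultaneously extends both. This is equivalent to the join $\succ^{P_s}_\omega \vee \succ^{P_u}_\omega$ --- the transitive closure of the union of the two relations --- being a genuine strict partial order, i.e.\ acyclic: once acyclicity is established, a standard linear-extension (topological-sort) argument, or Szpilrajn's theorem, furnishes a total order compatible with both, which is precisely what compatibility requires. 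So the target reduces to showing that the union of the two relations admits no directed cycle.

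The crux is therefore this acyclicity argument, and the key input is the purely mechanical DA identity $\mu(\omega) = \optchoice{\succ^R_\omega}{F_\omega(c)}$ from \eqref{eq:DA_assign_rule}, which holds regardless of any behavioral assumption. I would exploit the asymmetric structure of the two relations: every edge of $\succ^{P_s}_\omega$ emanates from the matched program $\mu(\omega)$ and points into $F_\omega(c)\setminus\{\mu(\omega)\}$, while $\succ^{P_u}_\omega$ is itself acyclic, being the ROL (a total order on its domain) together with ``dead-end'' edges $0 \succ^{P_u}_\omega j$ from the outside option to unlisted programs, which have no outgoing edges. Passing to a minimal cycle in the union, it must use at least one $\succ^{P_s}_\omega$ edge and hence visits $\mu(\omega)$ exactly once; the returning segment from the endpoint $j\in F_\omega(c)\setminus\{\mu(\omega)\}$ back to $\mu(\omega)$ then uses only $\succ^{P_u}_\omega$ edges. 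Since the type-(ii) edges terminate at unlisted programs, which are sinks, while $\mu(\omega)$ is a listed program whenever $\mu(\omega)\neq 0$, this segment in fact consists only of ROL edges, so transitivity of the ROL yields $j \succ^R_\omega \mu(\omega)$ with $j$ feasible --- contradicting that $\mu(\omega)$ is the ROL-maximal feasible program. The boundary case $\mu(\omega)=0$ I would handle directly: here no listed program can be feasible, so every feasible program ranked below $0$ by $\succ^{P_s}_\omega$ is an unlisted sink, and the union stays acyclic. This case bookkeeping --- listed versus unlisted programs, and the $\mu(\omega)=0$ subcase --- is the only place I expect real friction.

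For the ``moreover'' claim I would simply assemble the earlier characterizations. Proposition \ref{Prop:PS} says Assumption \ref{assu:stability} holds iff $\succ^Q_\omega$ is compatible with $\succ^{P_s}_\omega$, and Proposition \ref{Prop:undominated_strategy} says Assumption \ref{assu:undominance} holds iff $\succ^Q_\omega$ is compatible with $\succ^{P_u}_\omega$; hence both assumptions hold iff $\succ^Q_\omega$ is compatible with each partial order simultaneously. Because the first claim guarantees that $\succ^{P_s}_\omega$ and $\succ^{P_u}_\omega$ are compatible, Lemma \ref{thm:join} applies and equates ``compatible with both'' with ``compatible with the join $\succ^{P_s}_\omega \vee \succ^{P_u}_\omega$,'' which is exactly the stated conclusion.
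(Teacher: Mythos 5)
Your proposal is correct, but it reaches the compatibility claim by a genuinely different route than the paper. The paper's proof is constructive: it exhibits an explicit witness total order $\succ_\omega$ that places all programs in $\text{domain}(\succ^R_\omega)$ at the top, ordered by $\succ^R_\omega$, and all unlisted programs at the bottom (in an arbitrary fixed order below $0$), and then verifies directly that this single order is compatible with both $\succ^{P_s}_\omega$ and $\succ^{P_u}_\omega$ — the verification for $\succ^{P_s}_\omega$ resting, exactly as in your argument, on the mechanical DA identity $\mu(\omega)=\optchoice{\succ^R_\omega}{F_\omega(c)}$, which guarantees $\mu(\omega)$ is listed and ROL-maximal among feasible listed programs. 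You instead reduce compatibility to acyclicity of the union of the two edge sets and then invoke a linear extension (topological sort/Szpilrajn); your cycle analysis — every stability edge emanates from $\mu(\omega)$, type-(ii) edges of $\succ^{P_u}_\omega$ terminate at unlisted sinks, so a minimal cycle forces a feasible listed $j$ with $j\succ^R_\omega \mu(\omega)$, contradicting ROL-maximality, with $\mu(\omega)=0$ handled separately — is sound, including the edge cases. The trade-off: the paper's construction is shorter and yields a concrete compatible order with no case bookkeeping, while your argument is more structural and makes transparent exactly where a conflict between the two partial orders would have to come from (and would generalize to any situation where acyclicity of the union can be checked). Your treatment of the ``moreover'' part — assembling Proposition \ref{Prop:PS}, Proposition \ref{Prop:undominated_strategy}, and Lemma \ref{thm:join} — is identical to the paper's.
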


We revisit the running example to illustrate the joined partial order $\succ^{P_s}_\omega \vee \succ^{P_u}_\omega$ established in the above proposition.

\begin{example}[continues=running_example]
Recall that for student $\omega$:
\begin{itemize}
    \item her feasible set at the equilibrium cutoff is $F_{\omega}(c) = \{4, 5, 6, 0\}$,
    \item she is matched to program $4$,
    \item her ROL is $2 \succ^R_\omega 3 \succ^R_\omega 4 \succ^R_\omega 5 \succ^R_\omega 0$.
\end{itemize}
In this scenario, the combined partial order $\succ^{P_s}_\omega \vee \succ^{P_u}_\omega$ is depicted in Figure \ref{fig:undominance_illustration_stability}. Most binary relations in $\succ^{P_s}_\omega \vee \succ^{P_u}_\omega$ are inherited from the partial order $\succ^{P_u}_\omega$. The stability-induced partial order $\succ^{P_s}_\omega$ contributes the additional relation $4 \succ^{P_s}_\omega 6$, because program $6$ is included in the feasible set but not listed in her ROL.
\end{example}

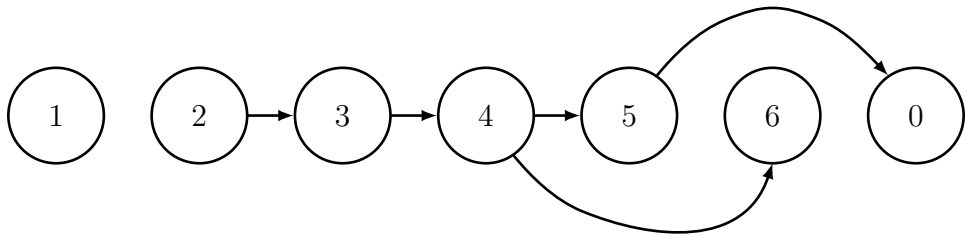
\begin{figure}[h]
    \centering
    \begin{tikzpicture}[>=latex,line join=bevel,]
  \pgfsetlinewidth{1bp}
\begin{scope}
  \pgfsetstrokecolor{black}
  \definecolor{strokecol}{rgb}{1.0,1.0,1.0};
  \pgfsetstrokecolor{strokecol}
  \definecolor{fillcol}{rgb}{1.0,1.0,1.0};
  \pgfsetfillcolor{fillcol}
  \filldraw (0.0bp,0.0bp) -- (0.0bp,85.41bp) -- (360.0bp,85.41bp) -- (360.0bp,0.0bp) -- cycle;
\end{scope}
  \pgfsetcolor{black}
  \draw [->] (90.141bp,44.815bp) .. controls (92.131bp,44.815bp) and (94.121bp,44.815bp)  .. (107.51bp,44.815bp);
  \draw [->] (144.14bp,44.815bp) .. controls (146.13bp,44.815bp) and (148.12bp,44.815bp)  .. (161.51bp,44.815bp);
  \draw [->] (198.14bp,44.815bp) .. controls (200.13bp,44.815bp) and (202.12bp,44.815bp)  .. (215.51bp,44.815bp);
  \draw [->] (190.46bp,29.69bp) .. controls (196.83bp,21.865bp) and (205.71bp,13.077bp)  .. (216.0bp,8.8152bp) .. controls (242.55bp,-2.1804bp) and (277.48bp,-5.3901bp)  .. (288.0bp,26.815bp);
  \draw [->] (244.46bp,59.94bp) .. controls (250.83bp,67.766bp) and (259.71bp,76.553bp)  .. (270.0bp,80.815bp) .. controls (284.78bp,86.938bp) and (291.22bp,86.938bp)  .. (306.0bp,80.815bp) .. controls (312.75bp,78.018bp) and (318.9bp,73.272bp)  .. (331.54bp,59.94bp);
\begin{scope}
  \definecolor{strokecol}{rgb}{0.0,0.0,0.0};
  \pgfsetstrokecolor{strokecol}
  \draw (18.0bp,44.82bp) ellipse (18.0bp and 18.0bp);
  \draw (18.0bp,44.815bp) node {$1$};
\end{scope}
\begin{scope}
  \definecolor{strokecol}{rgb}{0.0,0.0,0.0};
  \pgfsetstrokecolor{strokecol}
  \draw (72.0bp,44.82bp) ellipse (18.0bp and 18.0bp);
  \draw (72.0bp,44.815bp) node {$2$};
\end{scope}
\begin{scope}
  \definecolor{strokecol}{rgb}{0.0,0.0,0.0};
  \pgfsetstrokecolor{strokecol}
  \draw (126.0bp,44.82bp) ellipse (18.0bp and 18.0bp);
  \draw (126.0bp,44.815bp) node {$3$};
\end{scope}
\begin{scope}
  \definecolor{strokecol}{rgb}{0.0,0.0,0.0};
  \pgfsetstrokecolor{strokecol}
  \draw (180.0bp,44.82bp) ellipse (18.0bp and 18.0bp);
  \draw (180.0bp,44.815bp) node {$4$};
\end{scope}
\begin{scope}
  \definecolor{strokecol}{rgb}{0.0,0.0,0.0};
  \pgfsetstrokecolor{strokecol}
  \draw (234.0bp,44.82bp) ellipse (18.0bp and 18.0bp);
  \draw (234.0bp,44.815bp) node {$5$};
\end{scope}
\begin{scope}
  \definecolor{strokecol}{rgb}{0.0,0.0,0.0};
  \pgfsetstrokecolor{strokecol}
  \draw (288.0bp,44.82bp) ellipse (18.0bp and 18.0bp);
  \draw (288.0bp,44.815bp) node {$6$};
\end{scope}
\begin{scope}
  \definecolor{strokecol}{rgb}{0.0,0.0,0.0};
  \pgfsetstrokecolor{strokecol}
  \draw (342.0bp,44.82bp) ellipse (18.0bp and 18.0bp);
  \draw (342.0bp,44.815bp) node {$0$};
\end{scope}
\end{tikzpicture}

     \vspace{-2em}
     \caption{Illustration of $\succ^{P_s}_\omega \vee \succ^{P_u}_\omega$}
    \label{fig:undominance_illustration_stability}
\end{figure}

\section{(Set) Identification of Counterfactual Stable Matchings}\label{sec:counterfactual}
In the previous section, we introduced methods to partially identify students' true preferences by constructing partial orders consistent with observed data. In this section, we explore how these identified preference bounds can be leveraged to derive informative bounds on matching outcomes in various counterfactual scenarios. At the end of the section, we compare our proposed approach to existing methods, highlighting key differences and potential advantages.

Since computing stable matchings in counterfactual scenarios is practically feasible only with a finite number of students, we henceforth focus on the finite-population case, letting $N$ denote the number of students and $\Omega_N = \{1, ..., N\}$ as the set of students. In this setting, it is more natural to represent program capacities as integers. Specifically, we use $q^N_j$ to denote the maximum number of students program $j$ can enroll. As $j=0$ denotes the outside option, $q^N_0 = N$. We let $q^N \coloneqq (q^N_j : j \in \mathcal{J}_0)$. 

The methods introduced in Section \ref{sec:revealed_preference} and later in Section \ref{sec:more_revealed_preference} can all be viewed as procedures that generate, for each student $\omega$, a set of preferences $\mathcal{C}_\omega$ containing the true preference $\succ^Q_\omega$. For instance, under the stability assumption, $\mathcal{C}_\omega$ consists of all total orders compatible with the partial order $\succ^{P_s}_\omega$. To maintain a unified treatment independent of any specific assumption, we treat $\mathcal{C}_\omega$ as given throughout this section without specifying the exact assumptions used to derive it. For convenience, we introduce the shorthand notations $\mathcal{C}^{N} \coloneqq (\mathcal{C}_{\omega} : \omega \in \Omega_N)$.

We focus on matching outcomes under counterfactual changes in programs' capacity constraints and the rules determining priority scores.\footnote{Although perhaps less empirically relevant, our approach can be readily extended to handle counterfactual changes in students' true preferences as well, as long as the mapping from students' preferences to their counterfactual preferences is specified.} We use the notation $\tilde{S}\coloneqq (\tilde{S}_{\omega j}: j\in \mathcal{J}, \omega\in \Omega_N)$ and $\tilde{q}^N \coloneqq (\tilde{q}^N_j: j\in \mathcal{J})$ to denote counterfactual priority scores and program capacities, respectively. This framework encompasses a broad class of counterfactual scenarios, including the ones frequently employed to evaluate affirmative action policies in centralized college admission systems. Notable examples from the existing literature include \cite{barahona2021}, \cite{ngo_preferences_2024}, and \cite{SalesMoses2014}, among many others. To facilitate examining counterfactual matching outcomes across distinct groups of students, we assume the availability of student-level characteristics $X_\omega$ for each student $\omega$.

To formally define the parameters of interest under the counterfactual scenario, we introduce the notion of a \emph{matching matrix} $d = (d_{\omega j}: \omega\in \Omega_N, j\in \mathcal{J}_0)$, where $d_{\omega j} = 1$ indicates that student $\omega$ is matched to program $j$, and $d_{\omega j} = 0$ otherwise. A valid matching matrix must satisfy all the conditions specified in the following definition.

\begin{definition}[Matching Matrix]\label{def:MM}
    An $N \times |\mathcal{J}_0|$ matrix $d$ is called a matching matrix if it satisfies the following conditions:
    \begin{enumerate}
        \item [(1)] $d_{\omega j} \in \{0, 1\}$ for all $\omega \in \Omega_N$ and $j \in \mathcal{J}_0$.
        \item [(2)] Each student is matched to exactly one program in $\mathcal{J}_0$: for each $\omega=1,...,N$, $\sum_{j \in \mathcal{J}_0} d_{\omega j} = 1$.
	\item [(3)] Each program does not exceed its capacity: for each $j\in \mathcal{J}_0$, $\sum_{\omega \in \Omega_N} d_{\omega j} \leq \tilde{q}^N_j$.  
    \end{enumerate}
\end{definition}

We are particularly interested in stable matchings under counterfactual scenarios given the bounds $\mathcal{C}^N$ on students' preferences. The formal definition of the set of counterfactual stable matchings is as follows:

\begin{definition}[Stable Matchings in the Counterfactual]
Let $\mathcal{D}(\mathcal{C}^N, \tilde{q}^N, \tilde{S})$ denote the set of all matching matrices $d$ for which there exist preference $(\succ_\omega: \omega \in \Omega_N)$ satisfying $\succ_\omega \in \mathcal{C}_\omega$ for each student $\omega \in \Omega_N$, such that $d$ is a stable matching given these preferences $(\succ_\omega: \omega \in \Omega_N)$, counterfactual priority scores $\tilde{S}$, and counterfactual capacities $\tilde{q}^N$.
\end{definition}

Given an arbitrary weighting matrix $\gamma = (\gamma_{\omega j}: \omega \in \Omega_N,\, j \in \mathcal{J}_0)$, we define the following optimization problems:
\begin{equation}\label{eq:counterfactual_bound}
	\underline{\theta} \coloneqq \min_{d \in \mathcal{D}(\mathcal{C}^N, \tilde{q}^N, \tilde{S})}
	\sum_{\omega\in \Omega_N} \sum_{j\in \mathcal{J}_0} \gamma_{\omega j} d_{\omega j}
	\quad\text{ and }\quad
	\overline{\theta} \coloneqq \max_{d \in \mathcal{D}(\mathcal{C}^N, \tilde{q}^N, \tilde{S})}
	\sum_{\omega\in \Omega_N} \sum_{j\in \mathcal{J}_0} \gamma_{\omega j} d_{\omega j}.
\end{equation}
Since each student $\omega$'s preferences are bounded within the set $\mathcal{C}_\omega$, the interval $[\underline{\theta}, \overline{\theta}]$ provides valid bounds for the parameter $\theta \coloneqq \sum_{\omega\in \Omega_N} \sum_{j\in \mathcal{J}_0} \gamma_{\omega j} d_{\omega,j}$
across all stable matchings $d$ in the counterfactual scenario. Moreover, if each $\mathcal{C}_\omega$ constitutes a sharp preference bound, such as the one derived in Section \ref{sec:revealed_preference}, then the interval $[\underline{\theta}, \overline{\theta}]$ is a sharp bound for $\theta$.

By choosing $\gamma$ appropriately, we can use $\theta$ and its bounds to represent a variety of parameters of interest in counterfactual analyses. We provide two illustrative examples below:

\begin{itemize}
	\item Let $\mathcal{A}$ be a subset of programs within $\mathcal{J}_0$, and let $\mathcal{B}$ be a subset of students' observable characteristics. For instance, $\mathcal{A}$ could represent all STEM programs, and $\mathcal{B}$ could represent all female students. If we define $\gamma_{\omega j} = \indicator(j\in \mathcal{A},\,X_\omega\in \mathcal{B}) - \indicator(j\in \mathcal{A},\,X_\omega \notin \mathcal{B})$, then
  $\theta$ captures the gender gap in STEM assignment under the counterfactual scenario—specifically, the difference between the number of female and male students allocated to STEM programs.

	\item Suppose we define $\gamma_{\omega j} = \indicator\left(\exists\, \succ_\omega\in \mathcal{C}_\omega:\, j \succ_\omega \mu(\omega)\right)$ where $\mu(\omega)$ is the realized matching in the data. Then, $\overline{\theta}$ provides a valid upper bound for the number of students who prefer their matchings in the counterfactual scenario over their current matchings. Similarly, defining $\gamma_{\omega j} = \indicator\left(\forall\, \succ_\omega\in \mathcal{C}_\omega:\, j \succ_\omega \mu(\omega)\right)$  yields an $\underline{\theta}$ that represents a valid lower bound for the number of students who  prefer their counterfactual assignment, thus providing a lower bound for welfare improvements. Moreover, one can show that they are sharp upper and lower bounds if each $\mathcal{C}_\omega$ is a sharp bound for preferences.
\end{itemize}

\begin{remark}
    Our approach focuses on bounding stable matchings directly in the counterfactual scenario, rather than bounding matchings generated explicitly from students' submitted ROLs under the counterfactual. This choice is justified by the theoretical results in \cite{fack_beyond_2019}, which demonstrate that the matching produced by a DA mechanism converges to a stable matching as the number of students $N$ grows large. 
\end{remark}

\subsection{Characterization of Counterfactual Stable Matchings}
It is computationally infeasible to obtain the bounds defined in equation \eqref{eq:counterfactual_bound} by enumerating all possible elements of the set $\mathcal{D}(\mathcal{C}^N, \tilde{q}^N, \tilde{S})$. To facilitate computation, we first analyze the structure of $\mathcal{D}(\mathcal{C}^N, \tilde{q}^N, \tilde{S})$. In the following theorem, we show that every matching matrix $d \in \mathcal{D}(\mathcal{C}^N, \tilde{q}^N, \tilde{S})$ must satisfy a system of linear inequalities.

\begin{theorem}\label{thm:linear_charact_stable}
    If a matching matrix $d$ is an element of $\mathcal{D}(\mathcal{C}^N, \tilde{q}^N, \tilde{S})$, then it must satisfy the following linear constraints:
    \begin{equation}\label{eq:simple_stable_matching_general}
        \forall \omega \in \Omega_N,\; \forall j\in \mathcal{J}_0,\quad 
        \tilde{q}^N_j d_{\omega j} 
        + \tilde{q}^N_j \sum_{l:\,\exists\,\succ_{\omega}\in\mathcal{C}_{\omega}\text{ s.t. }l\succ_{\omega} j} d_{\omega l} 
        + \sum_{\omega':\,\tilde{S}_{\omega' j}>\tilde{S}_{\omega j}} d_{\omega' j}
        \geq \tilde{q}^N_j,
    \end{equation}
    where the summation in the second term in \eqref{eq:simple_stable_matching_general} is taken over all programs $l$ that could be ranked higher than program $j$ according to at least one preference ordering in $\mathcal{C}_\omega$, and the summation in the third term is taken over all students whose priority score at program $j$ exceeds that of student $\omega$.
\end{theorem}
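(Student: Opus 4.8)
The plan is to fix an arbitrary matching matrix $d \in \mathcal{D}(\mathcal{C}^N, \tilde{q}^N, \tilde{S})$ and, by the definition of this set, extract a preference profile $(\succ_\omega : \omega \in \Omega_N)$ with $\succ_\omega \in \mathcal{C}_\omega$ for every $\omega$ such that $d$ is stable under $(\succ_\omega)$, $\tilde{S}$, and $\tilde{q}^N$. I would then fix an arbitrary pair $(\omega, j)$ and verify the corresponding inequality in \eqref{eq:simple_stable_matching_general} by a case split on the position of $\omega$'s assigned program---call it $\mu(\omega)$, the unique $l$ with $d_{\omega l}=1$---relative to $j$ in the total order $\succ_\omega$. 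Since $\succ_\omega$ is connected, exactly one of three cases arises, and since all three terms on the left-hand side are nonnegative, it suffices to show that each case forces one term to equal $\tilde{q}^N_j$.

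In the first case, $\mu(\omega)=j$, so $d_{\omega j}=1$ and the first term equals $\tilde{q}^N_j$. In the second case, $\mu(\omega)\succ_\omega j$; here $\mu(\omega)$ witnesses the existential condition indexing the second sum, because the realized preference $\succ_\omega$ itself lies in $\mathcal{C}_\omega$, and since $d_{\omega,\mu(\omega)}=1$ while $\omega$ is matched to a unique program, the second term equals $\tilde{q}^N_j$. In the third case, $j\succ_\omega \mu(\omega)$, i.e. $\omega$ strictly prefers $j$ to her assignment; here I would invoke stability directly. The no-waste condition forces $j$ to be filled to capacity, and the no-justified-envy condition forces every student assigned to $j$ to have strictly higher priority at $j$ than $\omega$. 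Because $j\succ_\omega\mu(\omega)$ implies $\mu(\omega)\neq j$, student $\omega$ does not occupy one of these seats, so the third sum counts exactly the $\tilde{q}^N_j$ students filling $j$, making the third term equal $\tilde{q}^N_j$.

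The step requiring the most care is the translation of the stability conditions in Definition~\ref{ass:stable} to the finite-population setting, since the stated no-waste and no-justified-envy conditions are phrased in continuum (positive-measure) language. In the finite model the blocking-pair interpretation yields precisely the implication ``$\omega$ prefers $j$ to her match $\Rightarrow$ $j$ is at capacity and every student assigned to $j$ outranks $\omega$ at $j$,'' which is exactly what the third case needs. A secondary point to confirm is the degenerate program $j=0$: individual rationality rules out the third case for the outside option, so only the first two cases arise there, and the argument carries through unchanged with $\tilde{q}^N_0 = N$.

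The conceptual crux---and the reason this theorem delivers only necessity rather than an equivalence for arbitrary $\mathcal{C}_\omega$---is recognizing that the weaker existential quantifier in the second sum ranges over \emph{all} programs that \emph{some} admissible preference in $\mathcal{C}_\omega$ ranks above $j$, not merely those the realized $\succ_\omega$ ranks above $j$. This only enlarges the index set relative to what the realized preference requires, so the inequality is preserved for whatever $\succ_\omega$ certifies stability. I do not expect any genuine obstacle beyond the bookkeeping of the finite-population stability conditions; the argument is a direct case analysis with no optimization or combinatorial machinery.
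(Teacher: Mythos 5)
Your proof is correct, but it takes a more self-contained route than the paper. The paper's proof is a two-line reduction: it invokes Lemma~1 of \cite{Baiou2000} as a black box, which states that stability under a \emph{known} preference profile $(\succ_\omega)$ is equivalent to the inequalities
\[
\tilde{q}^N_j d_{\omega j} + \tilde{q}^N_j \sum_{l:\ l\succ_{\omega} j} d_{\omega l} + \sum_{\omega':\,\tilde{S}_{\omega' j}>\tilde{S}_{\omega j}} d_{\omega' j} \geq \tilde{q}^N_j,
\]
and then observes, exactly as you do in your final paragraph, that replacing the index set $\{l:\ l\succ_\omega j\}$ by the larger set $\{l:\ \exists \succ\in\mathcal{C}_\omega \text{ s.t. } l\succ j\}$ can only increase the left-hand side since $d_{\omega l}\ge 0$. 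What you do differently is to re-derive the necessity half of that cited lemma from scratch: your three-way case split on the position of $\mu(\omega)$ relative to $j$ under the certifying total order (assignment to $j$ itself; assignment to something better, which feeds the second term; assignment to something worse, where no-waste plus no-justified-envy force the third term to reach $\tilde{q}^N_j$) is precisely the standard proof of the Baiou--Balinski inequality, and your handling of the finite-population stability conditions and of the outside option $j=0$ via individual rationality is sound. The trade-off is the usual one: the paper's argument is shorter and delegates the combinatorial content to the literature, while yours is self-contained and makes transparent \emph{why} each inequality holds --- in every case one of the three terms alone already equals the capacity $\tilde{q}^N_j$. Both proofs share the same conceptual crux, namely that the existential quantifier over $\mathcal{C}_\omega$ only enlarges the second sum, which is also why the result is only a necessary condition for general $\mathcal{C}_\omega$.
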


Theorem \ref{thm:linear_charact_stable} can be viewed as an extension of Lemma 1 in \cite{Baiou2000}. The key distinction is that Lemma 1 in \cite{Baiou2000} characterizes stable matchings under conditions where the true preferences are known, whereas we consider scenarios in which preferences are only set-identified within the sets $\mathcal{C}_\omega$. When every $\mathcal{C}_\omega$ is derived from a partial order\footnote{Without imposing any structural assumptions on the sets $\mathcal{C}_\omega$, obtaining tractable necessary and sufficient conditions for $d \in \mathcal{D}(\mathcal{C}^N, \tilde{q}^N, \tilde{S})$ seems to be challenging.}, such as those discussed in Section \ref{sec:revealed_preference}, we can strengthen Theorem \ref{thm:linear_charact_stable} to obtain a sharp characterization. Specifically, under these conditions, the linear inequalities in \eqref{eq:simple_stable_matching_general} become necessary and sufficient, as stated formally below.

\begin{theorem}\label{thm:iff_stable_matching}
Suppose that, for each student $\omega$, the set $\mathcal{C}_\omega$ consists of all total orders compatible with a partial order $\succ^P_\omega$. Then, a matching matrix $d$ is an element of $\mathcal{D}(\mathcal{C}^N, \tilde{q}^N, \tilde{S})$ if and only if it satisfies the constraints in \eqref{eq:simple_stable_matching_general}. 

Moreover, under these conditions, the constraints in \eqref{eq:simple_stable_matching_general} simplify to:
\begin{equation}\label{eq:simple_stable_matching_partial_order}
    \forall \omega \in \Omega_N,\; \forall j\in \mathcal{J}_0,\quad 
    \tilde{q}^N_j d_{\omega j} + \tilde{q}^N_j \sum_{l:\, l\neq j\text{ and no }j\succ^P_\omega l} d_{\omega l} 
    + \sum_{\omega':\,\tilde{S}_{\omega' j}>\tilde{S}_{\omega j}} d_{\omega' j}
    \geq \tilde{q}^N_j,
\end{equation}
where the summation in the second term of \eqref{eq:simple_stable_matching_partial_order} includes all programs $l$ other than $j$ for which there is no relation $j \succ^P_\omega l$.
\end{theorem}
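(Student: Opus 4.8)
The plan is to prove the statement in two movements: first, that under the partial-order hypothesis the index set appearing in the middle term of \eqref{eq:simple_stable_matching_general} coincides with the one in \eqref{eq:simple_stable_matching_partial_order}, so the two constraint systems are literally identical; and second, that this common system is both necessary and sufficient for $d \in \mathcal{D}(\mathcal{C}^N, \tilde q^N, \tilde S)$. For the equivalence of the two systems I would fix $j$ and show
\[
\{l : \exists\,\succ_\omega\in\mathcal{C}_\omega \text{ with } l\succ_\omega j\} = \{l\neq j : \text{not } j\succ^P_\omega l\}.
\]
The inclusion ``$\subseteq$'' is immediate from irreflexivity and asymmetry together with the fact that every $\succ_\omega\in\mathcal{C}_\omega$ extends $\succ^P_\omega$. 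For ``$\supseteq$'': if $l\succ^P_\omega j$ every linear extension already ranks $l$ above $j$; if $l$ and $j$ are $\succ^P_\omega$-incomparable, I invoke the standard consequence of Szpilrajn's extension theorem that adjoining $l\succ j$ to $\succ^P_\omega$ and taking the transitive closure keeps it a partial order (no cycle is created precisely because $l,j$ are incomparable), which then admits a linear extension in $\mathcal{C}_\omega$ ordering $l$ above $j$. This identifies the two middle sums and reduces everything to showing \eqref{eq:simple_stable_matching_partial_order} characterizes $\mathcal{D}$. Necessity is then inherited directly from Theorem \ref{thm:linear_charact_stable}, since $d\in\mathcal{D}$ forces \eqref{eq:simple_stable_matching_general}, which we have just shown equals \eqref{eq:simple_stable_matching_partial_order}.

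The substantive part is sufficiency. Given a matching matrix $d$ satisfying \eqref{eq:simple_stable_matching_partial_order}, let $\mu$ be the induced matching and write $j^*_\omega=\mu(\omega)$. For each $\omega$ I would single out the upper set $U_\omega=\{l : l\succ^P_\omega j^*_\omega\}$ and construct a linear extension $\succ_\omega\in\mathcal{C}_\omega$ in which exactly the elements of $U_\omega$ sit above $j^*_\omega$: list a linear extension of $U_\omega$, then $j^*_\omega$, then a linear extension of the remaining programs. This is a valid extension of $\succ^P_\omega$ because $U_\omega$ is upward-closed under $\succ^P_\omega$ and no element outside $U_\omega\cup\{j^*_\omega\}$ is forced above $j^*_\omega$ (any such forcing would place it in $U_\omega$). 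With this choice, the programs $\omega$ strictly prefers to her assigned $j^*_\omega$ are precisely those in $U_\omega$.

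It then remains to verify, through the cutoff characterization of Section \ref{sec:stable}, that $d$ is stable for $(\succ_\omega : \omega\in\Omega_N)$: for every $j\succ_\omega j^*_\omega$ the program $j$ must be saturated by strictly higher-priority students, and individual rationality must hold. For $j\in U_\omega$ I instantiate \eqref{eq:simple_stable_matching_partial_order} at $(\omega,j)$. Since $d_{\omega j^*_\omega}=1$ is the only nonzero entry of row $\omega$, and since $j\succ^P_\omega j^*_\omega$ makes the predicate ``no $j\succ^P_\omega l$'' fail at $l=j^*_\omega$, both the first and the middle terms vanish, so the inequality collapses to $\sum_{\omega':\,\tilde S_{\omega'j}>\tilde S_{\omega j}} d_{\omega'j}\ge \tilde q^N_j$; as this sum can never exceed capacity it equals $\tilde q^N_j$, giving no waste and no justified envy at $j$. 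Individual rationality follows from \eqref{eq:simple_stable_matching_partial_order} at $(\omega,0)$: because $\tilde q^N_0=N$ strictly exceeds the third term (at most $N-1$ students have strictly higher priority at $0$ and are matched there), the inequality forces $d_{\omega0}+\sum_{l\neq0,\ \text{no }0\succ^P_\omega l} d_{\omega l}\ge 1$, so either $j^*_\omega=0$ or $0\notin U_\omega$, placing $0$ below $j^*_\omega$ in $\succ_\omega$. Hence $d$ is stable for these preferences and lies in $\mathcal{D}$.

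I expect the main obstacle to be the sufficiency construction—specifically, the realization that one should select, for each student, the linear extension ranking her assigned program as high as the partial order allows (immediately below its upper set). This minimizes the set of ``envied'' programs and is exactly what makes a single inequality per pair strong enough to force saturation; the clean cancellation of the first two terms rests on the interplay between $j\succ^P_\omega j^*_\omega$ and the ``no $j\succ^P_\omega l$'' index set. A secondary point needing separate care is individual rationality, since the outside option is never capacity-binding and so is handled by the $(\omega,0)$ inequality rather than by a saturation argument.
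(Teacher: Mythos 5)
Your proof is correct and follows essentially the same route as the paper's: the same linear-extension (Szpilrajn-type) argument identifies the two index sets, necessity is inherited from Theorem \ref{thm:linear_charact_stable}, and sufficiency uses the identical construction that places the $\succ^P_\omega$-upper set of the assigned program above it and everything else below, with the same cancellation of the first two terms of \eqref{eq:simple_stable_matching_partial_order} for programs in that upper set. The only cosmetic difference is that you certify stability directly from Definition \ref{ass:stable} (forced saturation of envied programs by strictly higher-priority students, plus individual rationality extracted from the $(\omega,0)$ inequality), whereas the paper re-invokes the Baiou--Balinski characterization (Lemma \ref{lem:baiou_lemma}) and checks its inequalities for all pairs; both verifications rest on exactly the same computations.
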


Combining the definition of a matching matrix (Definition \ref{def:MM}) and the results from Theorem \ref{thm:linear_charact_stable}, we can calculate a lower bound for the parameter $\theta$ by solving the following optimization problem:
\begin{eqnarray}
	\underline{\theta}^\dagger = &\min_{d} & \sum_{\omega\in \Omega_N} \sum_{j\in \mathcal{J}_0} \gamma_{\omega, j} d_{\omega,j} \label{eq:ILP}\\
&\text{s.t.} & \forall \omega\in \Omega_N,\; \forall j\in \mathcal{J}_0,\quad d_{\omega, j} \in \{0, 1\}\nonumber\\
& & \forall \omega\in \Omega_N,\quad \sum_{j\in \mathcal{J}_0} d_{\omega,j} = 1.\nonumber\\
&& \forall j\in \mathcal{J}_0,\quad \sum_{\omega\in \Omega_N}q_{\omega, j} < \tilde{q}_j,\nonumber\\
&&\forall \omega \in \Omega_N,\; \forall j\in \mathcal{J}_0,\quad 
        \tilde{q}^N_j d_{\omega j} + \tilde{q}^N_j \sum_{l:\,\exists\,\succ_{\omega}\in\mathcal{C}_{\omega}\text{ s.t. }l\succ_{\omega} j} d_{\omega l} + \sum_{\omega':\,\tilde{S}_{\omega' j}>\tilde{S}_{\omega j}} d_{\omega' j}
        \geq \tilde{q}^N_j.\nonumber
\end{eqnarray}
Here, the first three constraints directly follow from the definition of a matching matrix, while the final constraint is the same as equation \eqref{eq:simple_stable_matching_general}. By Theorem \ref{thm:linear_charact_stable}, we immediately have that $\underline{\theta}^{\dagger} \le \underline{\theta}$. Furthermore, given the additional structures specified in Theorem \ref{thm:iff_stable_matching}, the equality $\underline{\theta}^{\dagger} = \underline{\theta}$ holds, ensuring that the bound is sharp. The upper bound $\overline{\theta}$ can be computed similarly.

We now discuss the computational complexity of solving the optimization problem given by \eqref{eq:ILP}. The first challenge arises because the problem is an integer linear programming (ILP) problem,
\footnote{\cite{Baiou2000} develops a sophisticated linear programming (LP) formulation whose optimal solution always coincides with that of the corresponding ILP when the exact preference order $\succ^Q_\omega$ is known. Unfortunately, their formulation significantly increases the number of linear constraints, which we find it computationally impractical to solve even with known preferences. Extending their results to our set-identified scenario is also nontrivial and left for future research.}
 which is inherently more computationally intensive than linear programming (LP) problems. A common approach to mitigate complexity is to relax the integer constraints $d_{\omega, j}\in \{0, 1\}$ to continuous constraints $d_{\omega, j}\in [0, 1]$. Solving this relaxed LP formulation yields a conservative lower bound for $\underline{\theta}^\dagger$. However, modern integer programming solvers employ advanced techniques, such as cutting-plane methods, branch-and-bound, and branch-and-cut algorithms, to generate significantly tighter bounds for integer solutions. Therefore, we recommend leveraging these specialized integer-programming algorithms instead of solely working with the relaxed LP problem.

The second issue pertains to the sheer scale of the optimization problem presented in \eqref{eq:ILP}. It involves $N + |\mathcal{J}_0| + N \times |\mathcal{J}_0|$ linear constraints and $N \times |\mathcal{J}_0|$ decision variables. When the matching system has a moderate number of programs and students, for instance, $|\mathcal{J}_0| \approx 20$ and $N \approx 1,000$, this scale remains tractable using modern ILP or LP solvers. However, for larger-scale matching markets, the optimization problem quickly becomes very high-dimensional. To illustrate, consider a scenario with $N = 10,000$ students and $|\mathcal{J}_0| = 1,000$ programs. In this case, the decision matrix $d$ has 10 million dimensions, and the constraints involve approximately 10 million inequalities. Simply storing all the nonzero elements of these constraints would require more than 410 GB of memory. Without additional strategies to simplify or reduce the dimensionality of this problem, directly solving it would be computationally infeasible, even with state-of-the-art ILP and LP solvers. These considerations motivate the dimension-reduction technique we develop in the next section, which significantly reduces the active dimensions without losing any relevant information.

\subsection{Dimension reduction and generalized Gale-Shapley DA algorithm} \label{section: Dimension-Red}
The key to reducing the dimensionality in equation \eqref{eq:ILP} lies in the following observation: there exist certain student-program pairs that are never matched in any stable matching in the counterfactual scenario. If we can identify these student-program pairs, we can directly set the corresponding $d_{\omega j}$ to zero, thereby effectively reducing the dimensionality of the optimization problem. Similarly, certain student-program pairs are always matched across all stable matchings. Identifying them allows us to fix their corresponding decision variables at one.

Before illustrating how this dimension-reduction approach applies to set-identified preferences, it is instructive to first consider the simpler scenario in which students' preferences are exactly known. Note that the standard Gale-Shapley DA algorithm can be used to construct bounds for equilibrium cutoffs associated with stable matchings. Specifically, the DA algorithm with programs proposing to students yields the highest equilibrium cutoffs (upper bounds), whereas the DA algorithm with students proposing to programs generates the lowest equilibrium cutoffs (lower bounds). 

The DA algorithm with programs proposing can be described in Algorithm~\ref{alg:DA} using the cutoff characterization. In words, the algorithm involves the following main steps:

\begin{algorithm}[h!]
\caption{Deferred Acceptance with Program-Proposing to Students}\label{alg:DA}
\begin{algorithmic}
\Require Preferences $(\succ^Q_\omega: \omega\in \Omega_N)$, scores $(S_\omega:\omega\in \Omega_N)$ and capacities $(q^N_j: j\in \mathcal{J}_0)$.
\Ensure Generates a sequence $c^{(0)}, c^{(1)}, \ldots$ such that $\lim_{k \to \infty} c^{(k)}$ converges to the cutoff scores in a stable matching.\vspace{0.1em}\\

\State \textbf{Initialize:} 
\begin{itemize}
    \item Set initial cutoffs: $c^{(0)} \gets (1, 1, \ldots, 1)\in \real^{|\mathcal{J}_0|}$
    \item Set iteration counter: $k \gets 0$
\end{itemize}

\Repeat
\State For each student $\omega$ and each program $j \in \mathcal{J}_0$, define 
\begin{equation}\label{eq:DA_d}
d^{(k)}_{\omega j} = \indicator\left[\optchoice{\succ^Q_\omega}{F(S_\omega, c^{(k)})} = j\right].
\end{equation}
\ForAll{$j \in \mathcal{J}_0$}
    \If{ $\sum_\omega \indicator[S_{\omega j} \geq c^{(k)}_{j},\ d^{(k)}_{\omega j} = 1] + \sum_\omega \indicator[S_{\omega j} < c^{(k)}_{j}] < q^N_j$}
    \State Let $c^{(k+1)}_j \gets 0$.
    \Else
    \State Let $\tau = q^{N}_j - \sum_\omega \indicator(S_{\omega j} \geq c^{(k)}_{j},\ d^{(k)}_{\omega j} = 1)$.
	    \If {$\tau > 0$}
	    \State Let cutoff $c^{(k+1)}_j$ be the $\tau$-th highest value in $\{S_{\omega j}: \omega \in \Omega_N,\ S_{\omega j} < c_j^{(k)}\}$.
	    \Else
	    \State Let $c_j^{(k + 1)} \gets c_j^{(k)}$.
	    \EndIf
    \EndIf
\EndFor
\State Increment iteration: $k \gets k + 1$
\Until{The sequence $c^{(k)}$ converges.}
\end{algorithmic}
\end{algorithm}

\begin{enumerate}
	\item At each iteration $k$, every program $j$ maintains a cutoff score denoted by $c^{(k)}_j$. Initially, no student receives an offer, and the cutoffs are initialized at their maximum possible value, $c_j^{(0)} = 1$ for all programs. (Recall that priority scores are normalized to be distributed within the interval $[0,1]$.)

    \item Given the cutoff vector $c^{(k)}$, each student $\omega$ tentatively accepts the offer from her most preferred feasible program according to her preference $\succ^Q_\omega$. Here, $d^{(k)}_{\omega j}$ in Algorithm \ref{alg:DA} will equal $1$ if student $\omega$ accepts the offer tentatively. 

    \item Each program $j$ updates its cutoff to reflect the tentative acceptances. Specifically, 
	    \begin{itemize}
	    \item Compute the number of students who tentatively accepted an offer from program $j$: $\sum_\omega \indicator[S_{\omega j} \geq c^{(k)}_{j},\ d^{(k)}_{\omega j} = 1]$.
	    \item Determine the number of students to whom program $j$ has not yet made an offer: $\sum_\omega \indicator[S_{\omega j} < c^{(k)}_{j}]$.
	    \end{itemize}
	    If the sum of these two numbers is less than program $j$'s capacity $q_j^N$, we know that program $j$'s capacity constraint is not binding. In this case, set the cutoff $c^{(k+1)}_j = 0$.

	     Otherwise, program $j$ needs to make additional offers to fill its capacity. Specfically, the number of new offers is $\tau = q^{N}_j - \sum_\omega \indicator(S_{\omega j} \geq c^{(k)}_{j},\ d^{(k)}_{\omega j} = 1)$. Program $j$ then makes offers to the $\tau$ students with the highest priority scores among those not yet receiving an offer from program $j$, setting the updated cutoff $c^{(k+1)}_j$ to the priority score of the $\tau$-th highest-ranked student in this group.
\end{enumerate}

The following theorem summarizes two key properties of the cutoff sequence $(c^{(k)})_k$ generated by Algorithm~\ref{alg:DA}. Specifically, this sequence is nonincreasing and converges to an upper bound on equilibrium cutoffs for all stable matchings.\footnote{We have not found this exact theorem in the existing literature, but we doubt it is a novel result. Nonetheless, for completeness, we provide its proof in Appendix \ref{proof:DA_program_proposing}.}

\begin{theorem}\label{thm:DA_convergence_properties}
Given any preferences $(\succ^Q_\omega: \omega\in \Omega_N)$, any priority scores $(S_\omega: \omega\in \Omega_N)$ and any program capacities $(q^N_j: j\in \mathcal{J}_0)$, the sequence of cutoffs $(c^{(k)})_k$ produced by Algorithm~\ref{alg:DA} satisfies the following properties:
\begin{enumerate}
    \item For each program $j\in \mathcal{J}_0$, the sequence $(c^{(k)}_j)_k$ is nonincreasing in $k$ and converges to a limit cutoff $\overline{c}_j$ after a finite number of iterations.
    \item For any stable matching with equilibrium cutoff vector $c$, $\overline{c}_j \ge c_j$ for all $j\in \mathcal{J}_0$.
\end{enumerate}
\end{theorem}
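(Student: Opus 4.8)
The plan is to treat the two assertions separately. Part~1 (monotonicity and finite convergence) follows almost mechanically from the update rule, whereas Part~2 (the limit cutoffs dominate every stable cutoff) is proved by an induction showing that the iterates stay above any stable cutoff at \emph{every} step. Throughout I would fix an arbitrary stable matching with cutoff vector $c$ and associated assignment $\mu$, and use the cutoff characterization of stability established in Section~\ref{sec:Fram}, namely that $\mu(\omega)=\optchoice{\succ^Q_\omega}{F_\omega(c)}$ for each $\omega$, together with the fact that $c_j>0$ forces program $j$ to be filled to capacity.

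For Part~1, I would first read off monotonicity directly from the three branches of Algorithm~\ref{alg:DA}: in the slack branch $c^{(k+1)}_j=0\le c^{(k)}_j$; in the $\tau>0$ branch the new cutoff is selected among scores strictly below $c^{(k)}_j$, so $c^{(k+1)}_j<c^{(k)}_j$; and in the $\tau\le 0$ branch the cutoff is held fixed. (En route one checks that the $\tau$-th highest score below $c^{(k)}_j$ is well defined when $\tau>0$: the branch condition ``acceptances plus not-yet-approached $\ge q^N_j$'' guarantees at least $\tau$ students remain with scores below $c^{(k)}_j$.) For convergence, each coordinate $c^{(k)}_j$ is nonincreasing, bounded below by $0$, and confined to the finite set $\{0,1\}\cup\{S_{\omega j}:\omega\in\Omega_N\}$, hence eventually constant; taking the maximum of the finitely many coordinatewise stabilization times yields a common $k^\ast$ beyond which the whole vector $c^{(k)}$ is constant and equal to its limit $\overline{c}$.

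For Part~2 I would prove by induction on $k$ that $c^{(k)}_j\ge c_j$ for all $j$, whence $\overline{c}_j=\lim_k c^{(k)}_j\ge c_j$. The base case is immediate since $c^{(0)}_j=1\ge c_j$. The engine of the inductive step is a monotone-comparative-statics observation: since $c^{(k)}_l\ge c_l$ for all $l$ by hypothesis, feasible sets shrink, $F_\omega(c^{(k)})\subseteq F_\omega(c)$ for every $\omega$. Consequently, if $\omega$ is matched to $j$ in the stable matching and is currently being offered $j$ (i.e.\ $S_{\omega j}\ge c^{(k)}_j$, so $j\in F_\omega(c^{(k)})$), then because $j=\optchoice{\succ^Q_\omega}{F_\omega(c)}$ is her top choice in the larger set, $j$ remains her top choice in the smaller set $F_\omega(c^{(k)})$, so she tentatively accepts and $d^{(k)}_{\omega j}=1$. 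Assuming $c_j>0$ (the case $c_j=0$ being trivial), set $M_j\coloneqq\{\omega:\mu(\omega)=j\}$, so $|M_j|=q^N_j$. Every member of $M_j$ is accounted for at iteration $k$---those with $S_{\omega j}\ge c^{(k)}_j$ accept $j$, those with $S_{\omega j}<c^{(k)}_j$ are not yet approached---so the branch quantity is at least $q^N_j$, ruling out the slack branch. The held-fixed branch gives $c^{(k+1)}_j=c^{(k)}_j\ge c_j$ directly. In the substantive case $\tau>0$, partition $M_j$ into $A=\{\omega\in M_j:S_{\omega j}\ge c^{(k)}_j\}$ and $B=\{\omega\in M_j:c_j\le S_{\omega j}<c^{(k)}_j\}$; since $A$ lies inside the set of acceptances, whose size is $q^N_j-\tau$, we get $|B|=q^N_j-|A|\ge\tau$, so at least $\tau$ students have scores in $[c_j,c^{(k)}_j)$, forcing $c^{(k+1)}_j$ (the $\tau$-th highest score below $c^{(k)}_j$) to be at least $c_j$.

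The hard part will be exactly this last counting step in the $\tau>0$ case: one must convert the qualitative fact that the stable match is ``still wanted'' into the quantitative guarantee that enough high-scoring students occupy the window $[c_j,c^{(k)}_j)$ to stop the cutoff from overshooting below $c_j$. The delicate point is that iteration-$k$ acceptances may include students outside $M_j$, so the argument cannot equate acceptances with $M_j$; it is the one-sided inclusion $A\subseteq\{\text{acceptances}\}$---delivered precisely by the feasible-set shrinkage $F_\omega(c^{(k)})\subseteq F_\omega(c)$---that is needed, and everything else (monotonicity, finiteness, and the base case) is routine.
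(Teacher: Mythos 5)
Your proposal is correct and follows essentially the same route as the paper's proof: coordinatewise monotonicity plus finiteness of the value set for Part~1, and for Part~2 an induction on $k$ whose engine is the feasible-set shrinkage $F_\omega(c^{(k)})\subseteq F_\omega(c)$ forcing stably-matched students with $S_{\omega j}\ge c^{(k)}_j$ to accept, followed by the window-counting argument on $[c_j,c^{(k)}_j)$. Your sets $A$ and $B$ are just an explicit repackaging of the indicator-sum decomposition the paper uses, so there is no substantive difference.
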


Let $\overline{c} \coloneqq (\overline{c}_j : j\in \mathcal{J}_0)$ denote the upper bound on equilibrium cutoffs computed by Algorithm~\ref{alg:DA}, given the true preference $(\succ^Q_\omega: \omega\in \Omega_N)$, priority scores $(S_\omega: \omega\in \Omega_N)$ and program capacities $(q^N_j: j\in \mathcal{J}_0)$. Similarly, we can obtain the lower bound $\underline{c}$ from the DA algorithm with students proposing, given the same counterfactual priority scores and capacities.

Equipped with these cutoff bounds, we are now prepared to discuss how to effectively utilize them for dimension reduction. Consider an arbitrary stable matching $d$ and its corresponding cutoff vector $c$. By the cutoff characterization established in \cite{azevedo_supply_2016}, we have
\[
    d_{\omega j} = \indicator\left[\optchoice{\succ^Q_\omega}{F_\omega(S_\omega, c)} = j\right],
\]
indicating that each student $\omega$ is matched to the most preferred program within her feasible set. Since a program $j$ must belong to the student's feasible set to be matched, the matching variable $d_{\omega j}$ is weakly decreasing in the cutoff $c_j$. Moreover, as other programs $j' \neq j$ compete with program $j$ for the student's selection, $d_{\omega j}$ is weakly increasing in  cutoff $c_{j'}$ for all $j' \neq j$. 

Given that each program's cutoff satisfies $c_j \in [\underline{c}_j, \overline{c}_j]$, the above monotonicity relationships imply the following bounds:
\begin{equation}\label{eq:d_bound}
\underline{d}_{\omega j} \leq d_{\omega j} \leq \overline{d}_{\omega j},
\end{equation}
where the lower and upper bounds are defined as:
\begin{equation}\label{eq:d_bound_def}
\begin{aligned}
	\underline{d}_{\omega j} &= \indicator\left[ \optchoice{\succ^Q_\omega}{F(S_\omega, (\underline{c}_0,\dots,\underline{c}_{j-1}, \overline{c}_j, \underline{c}_{j+1}, \dots,\underline{c}_J))} = j\right], \\[5pt]
    \overline{d}_{\omega j} &= \indicator\left[ \optchoice{\succ^Q_\omega}{F(S_\omega, (\overline{c}_0,\dots,\overline{c}_{j-1}, \underline{c}_j, \overline{c}_{j+1}, \dots,\overline{c}_J))} = j\right].
\end{aligned}
\end{equation}
Suppose that for a given student-program pair $(\omega,j)$ we have $\underline{d}_{\omega j} = \overline{d}_{\omega j} = 0$. In this case, student $\omega$ and program $j$ will never be matched to each other in a stable matching. Conversely, if $\underline{d}_{\omega j} = \overline{d}_{\omega j} = 1$, then this student-program pair must always be matched together. Since the cutoff bounds $\underline{c}$ and $\overline{c}$ can be efficiently computed via the deferred acceptance algorithm, this approach provides a computationally feasible way to reduce the dimensionality of the optimization problem without explicitly solving the integer linear program in equation \eqref{eq:ILP}.

The same logic extends naturally to the scenario in which preferences are set-identified. However, the primary challenge is that the standard DA algorithm is not applicable when preferences are only partially known (set-identified). To address this issue, we generalize the standard DA algorithm to handle set-identified preferences, introducing what we term the \emph{generalized deferred acceptance} algorithm. The program-proposing version of this generalized algorithm, which generates the upper bounds on equilibrium cutoffs, is presented in Algorithm~\ref{alg:M-DA}. 

\begin{algorithm}[h!]
\caption{Generalized Deferred Acceptance (Upper bounds)}\label{alg:M-DA}
\begin{algorithmic}
\Require Set of Preferences $(\mathcal{C}_\omega: \omega\in \Omega_N)$, scores $(S_\omega:\omega\in \Omega_N)$ and capacities $(q^N_j: j\in \mathcal{J}_0)$.
\Ensure Generates a sequence $\overline{c}^{(0)}, \overline{c}^{(1)}, \ldots$ such that $\lim_{k \to \infty} \overline{c}^{(k)}$ converges to an upper bound.

\State \textbf{Initialize:} 
\begin{itemize}
    \item Set initial cutoffs: $\overline{c}^{(0)} \gets (1, 1, \ldots, 1) \in \mathbb{R}^J$
    \item Set iteration counter: $k \gets 0$
\end{itemize}

\Repeat
\State For each student $\omega$ and each program $j \in \mathcal{J}$, define:
\begin{equation}
	d^{(k)}_{\omega j} = \indicator\left[ \exists \succ^Q \in \mathcal{C}_\omega \text{ such that } \optchoice{\succ^Q}{\mathcal{F}_\omega(\overline{c}^{(k)})} = j \right].
\end{equation}

\ForAll{$j \in \mathcal{J}_0$}
\If{ $\sum_\omega \indicator[S_{\omega j} \geq \overline{c}^{(k)}_{j},\ d^{(k)}_{\omega j} = 1] + \sum_\omega \indicator[S_{\omega j} < \overline{c}^{(k)}_{j}] < q^N_j$}
\State Let $\overline{c}^{(k+1)}_j \gets 0$.
    \Else
    \State Let $\tau = q^{N}_j - \sum_\omega \indicator(S_{\omega j} \geq \overline{c}^{(k)}_{j},\ d^{(k)}_{\omega j} = 1)$.
	    \If {$\tau > 0$}
	    \State Let cutoff $\overline{c}^{(k+1)}_j$ be the $\tau$-th highest value in $\{S_{\omega j}: \omega \in \Omega_N,\ S_{\omega j} < \overline{c}_j^{(k)}\}$.
	    \Else
	    \State Let $\overline{c}_j^{(k + 1)} \gets \overline{c}_j^{(k)}$.
	    \EndIf
    \EndIf
\EndFor

\State Increment iteration: $k \gets k + 1$
\Until{The sequence $\overline{c}^{(k)}$ converges.}
\end{algorithmic}
\end{algorithm}

The primary difference between Algorithms~\ref{alg:DA} and \ref{alg:M-DA} lies in the definition of $d^{(k)}_{\omega j}$. Specifically, the definition of $d^{(k)}_{\omega j}$ in Algorithm~\ref{alg:DA} (given by equation \eqref{eq:DA_d}) utilizes the known preferences $\succ^Q_\omega$. In contrast, Algorithm~\ref{alg:M-DA} searches across all possible preferences within the preference bounds $\mathcal{C}_\omega$ to determine whether student $\omega$ might accept a proposed offer. Despite this difference, the next theorem demonstrates that Algorithm~\ref{alg:M-DA} retains the essential properties established in Theorem~\ref{thm:DA_convergence_properties} for Algorithm~\ref{alg:DA}.

\begin{theorem}\label{thm:upper_bound_simple}
Given arbitrary preference bounds $(\mathcal{C}_\omega: \omega\in \Omega_N)$, priority scores $(S_\omega: \omega\in \Omega_N)$, and program capacities $(q^N_j: j\in \mathcal{J}_0)$, the sequence of cutoffs $(\overline{c}^{(k)})_k$ produced by Algorithm~\ref{alg:M-DA} satisfies the following properties:
\begin{enumerate}
    \item For each program $j\in \mathcal{J}_0$, the sequence $(\overline{c}^{(k)}_j)_k$ is nonincreasing in $k$ and converges to a limit cutoff $\overline{c}_j$ in a finite number of iterations.
    \item For any matching $d\in \mathcal{D}(\mathcal{C}^N, q^N, S)$ with associated equilibrium cutoff vector $c$, we have $\overline{c}_j \ge c_j$ for all $j\in \mathcal{J}_0$.
\end{enumerate}
Moreover, if each set $\mathcal{C}_\omega$ is a singleton (i.e., preferences are exactly known for all $\omega$), Algorithm~\ref{alg:M-DA} reduces precisely to Algorithm~\ref{alg:DA}.
\end{theorem}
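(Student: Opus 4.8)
The plan is to mirror the proof of Theorem~\ref{thm:DA_convergence_properties}, isolating the single place where the set-identified definition of $d^{(k)}_{\omega j}$ actually matters. The monotonicity and finite-convergence claim in Part~1 is in fact independent of how $d^{(k)}$ is defined: the update rule by itself guarantees $\overline{c}^{(k+1)}_j \le \overline{c}^{(k)}_j$, since the new cutoff is either $0$, or the $\tau$-th highest score strictly below $\overline{c}^{(k)}_j$, or $\overline{c}^{(k)}_j$ itself. Hence $(\overline{c}^{(k)}_j)_k$ is nonincreasing. Because each cutoff only ever takes a value in the finite set $\{0\}\cup\{S_{\omega j}:\omega\in\Omega_N\}$, a nonincreasing sequence in this set must stabilize after finitely many iterations; and once $\overline{c}^{(k+1)}=\overline{c}^{(k)}$, the feasibility sets, and hence the $d^{(k)}$, are unchanged, so the sequence has reached a genuine fixed point. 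This yields Part~1 with essentially no new work.

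The substantive content is Part~2, which I would establish by showing, by induction on $k$, that $\overline{c}^{(k)}_j \ge c_j$ for every $j$ and every stable $d\in\mathcal{D}(\mathcal{C}^N,q^N,S)$ with equilibrium cutoff $c$; letting $k\to\infty$ then gives $\overline{c}_j\ge c_j$. The base case holds since $\overline{c}^{(0)}_j=1\ge c_j$. For the inductive step, fix $j$ with $c_j>0$ (the case $c_j=0$ is trivial) and let $(\succ_\omega\in\mathcal{C}_\omega)$ be the preferences that rationalize $d$ as stable. The inductive hypothesis $\overline{c}^{(k)}_l\ge c_l$ for all $l$ yields the feasibility-set containment $F_\omega(\overline{c}^{(k)})\subseteq F_\omega(c)$. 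The key observation is that if student $\omega$ is matched to $j$ in $d$ (so $j=\optchoice{\succ_\omega}{F_\omega(c)}$) and moreover $S_{\omega j}\ge\overline{c}^{(k)}_j$, then $j\in F_\omega(\overline{c}^{(k)})$; since a $\succ_\omega$-maximal element of the larger set that still belongs to the smaller set remains maximal in the smaller set, we get $\optchoice{\succ_\omega}{F_\omega(\overline{c}^{(k)})}=j$, and hence $d^{(k)}_{\omega j}=1$, because the existential quantifier in the definition of $d^{(k)}_{\omega j}$ is witnessed by this very $\succ_\omega$.

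With this fact in hand the counting is routine. Partition the $q^N_j$ students matched to $j$ in $d$ into the $m_1$ with $S_{\omega j}\ge\overline{c}^{(k)}_j$ and the $m_2$ with $c_j\le S_{\omega j}<\overline{c}^{(k)}_j$, so $m_1+m_2=q^N_j$. The observation above shows all $m_1$ of the former are counted in the ``accepting and feasible'' total $A=\sum_\omega\indicator[S_{\omega j}\ge\overline{c}^{(k)}_j,\,d^{(k)}_{\omega j}=1]$, so $A\ge m_1$; the $m_2$ latter students are counted in the ``not yet offered'' total $B=\sum_\omega\indicator[S_{\omega j}<\overline{c}^{(k)}_j]$, so $B\ge m_2$. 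Thus $A+B\ge q^N_j$, which rules out the branch that sets $\overline{c}^{(k+1)}_j=0$. In the remaining branch $\tau=q^N_j-A\le m_2$; if $\tau\le 0$ the cutoff is unchanged and stays $\ge c_j$, while if $\tau>0$ the new cutoff is the $\tau$-th highest score strictly below $\overline{c}^{(k)}_j$, and since at least $m_2\ge\tau$ such scores (those of the $m_2$ matched students) are $\ge c_j$, this $\tau$-th highest value is itself $\ge c_j$. Either way $\overline{c}^{(k+1)}_j\ge c_j$, closing the induction. Finally, the reduction to Algorithm~\ref{alg:DA} when each $\mathcal{C}_\omega=\{\succ^Q_\omega\}$ is immediate: the existential over $\mathcal{C}_\omega$ collapses to evaluating $\optchoice{\succ^Q_\omega}{F_\omega(\overline{c}^{(k)})}$, which recovers \eqref{eq:DA_d}, and every other step of the two algorithms coincides verbatim.

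The main obstacle is precisely the key observation in Part~2: one must check that the existential definition of $d^{(k)}_{\omega j}$ is the ``right'' relaxation, namely large enough to capture every student-program pair realized in any admissible stable matching (so the algorithm never under-counts demand and hence never drops its cutoff below $c_j$), while noting that the proof needs only the single witnessing preference $\succ_\omega$ rather than control over all of $\mathcal{C}_\omega$. Getting the direction of the feasibility-set containment and the preservation of the $\succ_\omega$-maximal element exactly right is where the care lies; the counting that follows is the same bookkeeping as in the point-identified argument underlying Theorem~\ref{thm:DA_convergence_properties}.
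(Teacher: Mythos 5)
Your proposal is correct and follows essentially the same route as the paper: Part 1 via monotonicity of the update rule plus finiteness of the value set, and Part 2 by induction on $k$, using the feasible-set containment $F_\omega(\overline{c}^{(k)})\subseteq F_\omega(c)$ and the fact that the stable matching's rationalizing preference $\succ_\omega\in\mathcal{C}_\omega$ witnesses the existential in the definition of $d^{(k)}_{\omega j}$, so the algorithm never under-counts demand. Your $m_1/m_2/A/B$ partition is just a repackaging of the paper's summation inequalities, so the two arguments coincide in substance.
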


Using Theorem~\ref{thm:upper_bound_simple}, we obtain an upper bound $\overline{c}$ on the equilibrium cutoffs associated with all stable matchings in the counterfactual scenario. A corresponding lower bound $\underline{c}$ can similarly be derived by adapting the deferred acceptance algorithm with students proposing.  Equipped with these equilibrium cutoff bounds, we can compute $\underline{d}_{\omega j}$ and $\overline{d}_{\omega j}$ in equation \eqref{eq:d_bound_def} for set-identified preferences. Finally, we apply the inequality from equation \eqref{eq:d_bound} to bound each dimension of $d\in\mathcal{D}(\mathcal{C}^N, \tilde{q}^N, \tilde{S})$, achieving dimensionality reduction with set-identified preferences.


\subsection{Comparing to an existing approach}\label{subsec:para}
In this subsection, we outline the approach proposed by \cite{fack_beyond_2019}, which has subsequently been adopted in numerous papers, including \cite{ngo_preferences_2024} and \cite{barahona2021}, among others. We begin by restating two critical assumptions from \cite{fack_beyond_2019}.

\begin{assumption}[Parametric Utility]\label{Ass:Para}
Each student's true preference ordering $\succ^Q_\omega$ can be represented by a vector of real-valued utility functions $U_\omega \equiv (U_{\omega j}: j \in \mathcal{J})$, where
\[
U_{\omega j} = h(X_\omega; \gamma) + \epsilon_{\omega j},
\]
$h(\cdot)$ is a known function, and the conditional distribution of the preference shocks $(\epsilon_{\omega j}: j \in \mathcal{J})$ given $X_{\omega}$ is known up to a finite-dimensional parameter.
\end{assumption}

\begin{assumption}[Exogeneity of Priority Scores]\label{Ass:Exo}
Conditional on student characteristics $X_\omega$, the vector of priority scores $S_\omega$ is independent of the preference shocks. Formally, 
\[
S_\omega \independent \epsilon_{\omega j}\mid X_\omega.
\]
\end{assumption}

The primary role of Assumptions \ref{Ass:Para} and \ref{Ass:Exo} is to fully characterize the joint distribution of preferences and priority scores. Under the stability assumption (Assumption \ref{assu:stability}), this joint distribution provides the essential information required for point identification. Most empirical studies conducting counterfactual analyses in Deferred Acceptance (DA)-type models follow the procedure outlined below:

\begin{enumerate}
    \item[(Step 1)] {\bf Estimate Preferences ($\succ^Q_\omega$):}
     Under Assumptions \ref{assu:stability}, \ref{Ass:Para}, and \ref{Ass:Exo}, researchers estimate the parametric utilities $ \hat{U}_\omega $ and thus recover the joint distribution of priority scores and true preference orderings $ \succ^Q_\omega $ for each student. Detailed discussions of this step can be found in \cite{fack_beyond_2019} and \cite{ngo_preferences_2024}. 

    \item[(Step 2)] {\bf Simulate the DA Algorithm for the Counterfactual Economy:}
    Using the estimated preferences $ \hat{U}_\omega $, counterfactual priority scores $ \tilde{S}_\omega $, and capacities $ \tilde{q}^N $, researchers simulate the DA mechanism under the counterfactual setting and solve the stable matching and its associated cutoffs, recovering counterfactual outcomes of interest.
\end{enumerate}

In Step 2, researchers very often utilize the estimated preferences and typically assume that the matching outcomes in the counterfactual scenario are determined directly by students' true preferences rather than their submitted ROLs. See \cite{Artemov2023}, section V.B for a more detailed discussion. Our approach adopts this assumption as well.

The validity of the approach described above relies on consistently recovering students' true preference $\succ^Q_\omega$. However, as discussed and illustrated in Section \ref{sec:stable}, the stability assumption alone offers limited nonparametric identification power regarding students' true preferences. Indeed, the partial orders induced solely by stability are often insufficiently informative. This is also noted by \cite{kapor2024aftermarket} in practice. Consequently, the validity of the existing approach crucially depends on Assumptions \ref{Ass:Para} and \ref{Ass:Exo}.

However, these two critical assumptions could be restrictive and lack robustness in certain empirical contexts. It is well known that ad-hoc parametric approaches such as Assumption \ref{Ass:Para} can introduce substantial misspecification bias. Avoiding overly strong parameterizations enhances the credibility of the results.  
Assumption \ref{Ass:Exo}, raises more fundamental concerns. For example, in the Roy model—a commonly used framework in labor economics—skills and preferences are closely intertwined. Assumption \ref{Ass:Exo} presumes that cognitive and non-cognitive skills that are fundamental determinants of labour markets outcomes are fully captured by priority scores and covariates. 
For example, \cite{fack_beyond_2019} includes existing exam scores as covariates, which are intended to proxy students' underlying abilities.
However, previous research (e.g. \cite{heckmankautz2012}, \cite{borghans2008}, \cite{almlund2011}) demonstrates that test scores often do not account for soft skills and personality traits that determine the outcomes of the labor market and are then correlated with preferences. This underscores the presence of unobserved skills  that influence preferences, further challenging the validity of the approach.  

The key distinction between our proposed methodology and this existing approach is that we do not impose Assumptions~\ref{Ass:Para} and~\ref{Ass:Exo}, resulting in preferences that are set-identified rather than point-identified. To nevertheless obtain informative results without relying on Assumptions~\ref{Ass:Para} and~\ref{Ass:Exo}, we develop a flexible analytical framework capable of incorporating alternative assumptions, such as the undominated strategies (Assumption \ref{assu:undominance}) introduced in Section~\ref{sec:undominance} and its variants discussed later in Section~\ref{sec:more_revealed_preference}. These alternative assumptions are grounded in economic theory and enable us to leverage the rich information embedded in students' submitted ROLs. Although our relaxation of assumptions makes the counterfactual analysis more challenging, it remains computationally feasible by utilizing the ILP representation and dimension-reduction techniques developed earlier in this section.

\section{Additional methods for revealing preferences}\label{sec:more_revealed_preference}

In this section, we introduce two additional methods for inferring student preferences from observed data. The first method, which we call the \emph{robust undominated strategy}, addresses robustness concerns related to the standard undominated strategy assumption, as previously discussed in Remark~\ref{remark:undominance}. Compared to the standard undominated strategy assumption, this robust approach yields preference bounds that are less informative but more reliable. The second method aims to infer the selectivity ordering among programs and leverages this additional information to further sharpen our inference about student preferences. 

\subsection{Robust Undominated Strategy Assumptions}\label{sec:robust_undominated}
As discussed in Remark~\ref{remark:undominance}, a primary concern with the standard undominated strategy assumption is that it implicitly assumes student $\omega$ considers all possible feasible sets when deciding which ROLs to submit. To address this concern, we introduce the notion of a \emph{consideration set} of feasible sets. This set, denoted by $\mathcal{F}^*_\omega$, includes only those feasible sets that student $\omega$ believes could occur with positive probability. Given this concept, it is natural to assume students submit ROLs that are undominated with respect to their consideration sets $\mathcal{F}^*_\omega$. We refer to this as the \emph{robust undominated strategy}, formally defined as follows:

\begin{definition}[Undominated strategies given consideration set]\label{def:robust_undominance}
Given student $\omega$'s true preference $\succ^Q_\omega$ and her consideration set $\mathcal{F}^*_\omega$, a submitted ROL $\succ^R_\omega$ is said to be \emph{dominated given $\mathcal{F}^*_\omega$} by another ROL $\succ^{R'}_\omega$ if both of the following conditions hold:
\begin{enumerate}
    \item For every feasible set $F \in \mathcal{F}^*_\omega$, either 
    $ \optchoice{\succ^{R'}_\omega}{F} = \optchoice{\succ^{R}_\omega}{F}$ or $\optchoice{\succ^{R'}_\omega}{F} \succ^Q_\omega \optchoice{\succ^{R}_\omega}{F}$.
    \item There exists at least one feasible set $F \in \mathcal{F}^*_\omega$  for which 
    \[
        \optchoice{\succ^{R'}_\omega}{F} \succ^Q_\omega \optchoice{\succ^{R}_\omega}{F}.
    \]
\end{enumerate}
A submitted ROL $\succ^R_\omega$ is said to be \emph{undominated given $\mathcal{F}^*_\omega$} if there is no alternative ROL that dominates it given the consideration set $\mathcal{F}^*_\omega$.
\end{definition}

In general, undominated strategies given a consideration set $\mathcal{F}^*_\omega$, as defined in Definition~\ref{def:robust_undominance}, differ from the undominated strategies presented in Definition~\ref{def:undominance}. Specifically, Definition~\ref{def:robust_undominance} features a weaker Condition (i) and a stronger Condition (ii) compared to those in Definition~\ref{def:undominance}. Consequently, neither definition implies the other in general. However, the two definitions coincide in the special case where the consideration set $\mathcal{F}^*_\omega$ includes every subset of $\mathcal{J}_0$ that contains the outside option.  Thus, the undominated strategy introduced in Definition~\ref{def:undominance} provides the same empirical content as that the Definition~\ref{def:robust_undominance} whenever $\mathcal{F}^*_\omega$ includes every feasible subset in $\mathcal{J}_0$.

Another interesting extreme case to consider is when student $\omega$ has no uncertainty about the equilibrium cutoff $c$ and, hence, her feasible set $F_\omega(c)$. In this case, her consideration set is $\mathcal{F}^*_\omega = \{F_\omega(c)\}$. One can show that $\succ^R_\omega$ is undominated given this $\mathcal{F}^*_\omega$ if and only if this $\succ^R_\omega$ leads her to be matched with her most preferred feasible program, i.e., $\optchoice{\succ^Q_\omega}{F_\omega(c)} = \optchoice{\succ^R_\omega}{F_\omega(c)}$. This coincides with equation \eqref{eq:stability_immediate}, the implication of the stability assumption. In other words, the stability assumption provides the same empirical content on preference revealing as the undominance in Definition~\ref{def:robust_undominance} given $\mathcal{F}^*_\omega = \{F_\omega(c)\}$.

Because the consideration set $\mathcal{F}^*_\omega$ is unobservable, we do not attempt to infer student $\omega$'s true preferences directly under the assumption that $\succ^R_\omega$ is undominated given a known $\mathcal{F}^*_\omega$. Instead, we first construct a set $\mathcal{F}_\omega$ from observable data, assuming that $\mathcal{F}_\omega \subseteq \mathcal{F}^*_\omega$. We then assume that $\succ^R_\omega$ is undominated given $\mathcal{F}^*_\omega$. These assumptions are formalized in the following assumption.

\begin{assumption}[Robust undominated strategy]\label{assu:robust_undominance}
For each student $\omega$, $\mathcal{F}_\omega \subseteq \mathcal{F}^*_\omega$ and $\succ^R_\omega$ is an undominated strategy given $\mathcal{F}^*_\omega$.
\end{assumption}

Assumption \ref{assu:robust_undominance} is a weaker assumption than the undominated stratgy assumption (Assumption \ref{assu:undominance}), as it allows for all consideration set $\mathcal{F}^*_\omega$ that includes $\mathcal{F}_\omega$. In fact, Assumption \ref{assu:undominance} is can be viewed as a special case of Assumption \ref{assu:robust_undominance} when the consideration set includes all possible feasible sets, i.e., $\mathcal{F}_\omega = \{F \subseteq \mathcal{J}_0: 0 \in F\}$.

We defer the discussion of methods for constructing $\mathcal{F}_\omega$ based on historical cutoff data to the end of this subsection. Before that, we examine the implications of Assumption~\ref{assu:robust_undominance} on students' revealed preferences. As before, we construct partial orders compatible with students' true preferences. However, unlike previous instances, we need to define two partial orders. 

\begin{itemize}
\item Define the binary relation $\succ^{u1}_\omega$ where, for each $B \in \mathcal{F}_\omega$ and each $j \in B \setminus {\optchoice{\succ^R_\omega}{B}}$, let $\optchoice{\succ^R_\omega}{B} \succ^{u1}_\omega j$.

\item Define the binary relation $\succ^{u2}_\omega$ as follows:
\begin{enumerate}
	\item For any $j \in \text{domain}(\succ^R_\omega)\setminus \{0\}$, let $j \succ^{u2}_\omega 0$.

    \item For each $B \in \mathcal{F}_\omega$ and each $j \in B \cap \text{domain}(\succ^R_\omega) \setminus \{\optchoice{\succ^R_\omega}{B}\}$, let $\optchoice{\succ^R_\omega}{B} \succ^{u2}_\omega j$.
\end{enumerate}

\item Define $\succ^{P_{u1}}_\omega$ and $\succ^{P_{u2}}_\omega$ to be the transitive closure of $\succ^{u1}_\omega$ and $\succ^{u2}_\omega$ respectively.
\end{itemize}
Here, binary relations $\succ^{u1}_\omega$ and $\succ^{u2}_\omega$ are not partial orders because they are not necessarily transitive. Their transitive closures $\succ^{P_{u1}}_\omega$ and $\succ^{P_{u2}}_\omega$ are partial orders as shown in Lemma \ref{lem:partial_order_u1_u2} in Appendix \ref{sec:supp_robust_undominated}. The transitive closure is defined in the notation and preliminaries section. Let us illustrate these concepts using the running example.

\begin{example}[continues=running_example]
Recall that in the running example, we have $\mathcal{J}_0 = \{0, 1, 2, \dots, 6\}$ and $K = 5$. As before, we focus on student $\omega$, whose ROL is  $ 2 \succ^R_\omega 3 \succ^R_\omega 4 \succ^R_\omega 5 \succ^R_\omega 0$ and feasible set is $\{4, 5, 6, 0\}$.

Suppose we construct $\mathcal{F}_\omega$ as $\mathcal{F}_{\omega} = \{\{6, 0\}, \{5, 6, 0\}, \{4, 5, 6, 0\}, \{3, 4, 5, 6, 0\} \}$. To construct $\succ^{u1}_\omega$, we enumerate every $B$ in $\mathcal{F}_{\omega}$ and let $\max(\succ^R_\omega, B)$ dominate the other elements in $B$. For instance, from $B = \{6, 0\}\in \mathcal{F}_\omega$, we get $\optchoice{\succ^R_\omega}{B} = 0 \succ^{u1}_\omega 6$; from $B = \{5, 6, 0\}$, we get $5\succ^{u1}_\omega 0$ and $5\succ^{u1}_\omega 6$. The resulted $\succ^{P_{u1}}_\omega$  is $3 \succ^{P_{u1}}_\omega 4 \succ^{P_{u1}}_\omega 5 \succ^{P_{u1}}_\omega 0 \succ^{P_{u1}}_\omega 6$, as illustrated by Figure \ref{fig:robust_undominated_illustration_u1}. 

\begin{figure}[h]
    \centering
    \begin{tikzpicture}[>=latex,line join=bevel,]
  \pgfsetlinewidth{1bp}
\begin{scope}
  \pgfsetstrokecolor{black}
  \definecolor{strokecol}{rgb}{1.0,1.0,1.0};
  \pgfsetstrokecolor{strokecol}
  \definecolor{fillcol}{rgb}{1.0,1.0,1.0};
  \pgfsetfillcolor{fillcol}
  \filldraw (0.0bp,0.0bp) -- (0.0bp,67.53bp) -- (360.0bp,67.53bp) -- (360.0bp,0.0bp) -- cycle;
\end{scope}
  \pgfsetcolor{black}
  \draw [->] (144.14bp,26.936bp) .. controls (146.13bp,26.936bp) and (148.12bp,26.936bp)  .. (161.51bp,26.936bp);
  \draw [->] (198.14bp,26.936bp) .. controls (200.13bp,26.936bp) and (202.12bp,26.936bp)  .. (215.51bp,26.936bp);
  \draw [->] (244.46bp,42.061bp) .. controls (250.83bp,49.887bp) and (259.71bp,58.675bp)  .. (270.0bp,62.936bp) .. controls (284.78bp,69.059bp) and (291.22bp,69.059bp)  .. (306.0bp,62.936bp) .. controls (312.75bp,60.14bp) and (318.9bp,55.394bp)  .. (331.54bp,42.061bp);
  \draw [->] (342.0bp,7.9365bp) .. controls (342.0bp,-5.5649bp) and (326.57bp,1.0342bp)  .. (302.93bp,16.319bp);
\begin{scope}
  \definecolor{strokecol}{rgb}{0.0,0.0,0.0};
  \pgfsetstrokecolor{strokecol}
  \draw (18.0bp,26.94bp) ellipse (18.0bp and 18.0bp);
  \draw (18.0bp,26.936bp) node {$1$};
\end{scope}
\begin{scope}
  \definecolor{strokecol}{rgb}{0.0,0.0,0.0};
  \pgfsetstrokecolor{strokecol}
  \draw (72.0bp,26.94bp) ellipse (18.0bp and 18.0bp);
  \draw (72.0bp,26.936bp) node {$2$};
\end{scope}
\begin{scope}
  \definecolor{strokecol}{rgb}{0.0,0.0,0.0};
  \pgfsetstrokecolor{strokecol}
  \draw (126.0bp,26.94bp) ellipse (18.0bp and 18.0bp);
  \draw (126.0bp,26.936bp) node {$3$};
\end{scope}
\begin{scope}
  \definecolor{strokecol}{rgb}{0.0,0.0,0.0};
  \pgfsetstrokecolor{strokecol}
  \draw (180.0bp,26.94bp) ellipse (18.0bp and 18.0bp);
  \draw (180.0bp,26.936bp) node {$4$};
\end{scope}
\begin{scope}
  \definecolor{strokecol}{rgb}{0.0,0.0,0.0};
  \pgfsetstrokecolor{strokecol}
  \draw (234.0bp,26.94bp) ellipse (18.0bp and 18.0bp);
  \draw (234.0bp,26.936bp) node {$5$};
\end{scope}
\begin{scope}
  \definecolor{strokecol}{rgb}{0.0,0.0,0.0};
  \pgfsetstrokecolor{strokecol}
  \draw (288.0bp,26.94bp) ellipse (18.0bp and 18.0bp);
  \draw (288.0bp,26.936bp) node {$6$};
\end{scope}
\begin{scope}
  \definecolor{strokecol}{rgb}{0.0,0.0,0.0};
  \pgfsetstrokecolor{strokecol}
  \draw (342.0bp,26.94bp) ellipse (18.0bp and 18.0bp);
  \draw (342.0bp,26.936bp) node {$0$};
\end{scope}
\end{tikzpicture}

    \vspace{-3em}
    \caption{Illustration of $\succ^{P_{u1}}_\omega$}
    \label{fig:robust_undominated_illustration_u1}
\end{figure}

To construct $\succ^{u2}_\omega$, we first ensure that the programs listed in $\succ^R_\omega$ dominate the outside option,  so $2$, $3$, $4$, $5$ should $\succ^{u2}_\omega 0$. Next, for each $B$ in $\mathcal{F}_\omega$, we let $\optchoice{\succ^R_\omega}{B}$ dominate all the programs in both $B$ and the domain of $\succ^R_\omega$. For example, from $B = \{3, 4, 5, 6, 0\}\in \mathcal{F}_\omega$, we get $\optchoice{\succ^R_\omega}{B} = 3$ so that $3 \succ^{u2}_\omega 4$, $3 \succ^{u2}_\omega 5$ and $3 \succ^{u2}_\omega 0$. The resulted $\succ^{P_{u2}}_\omega$ is $2\succ^{P_{u2}}_\omega 0$ and $3 \succ^{P_{u2}}_\omega 4 \succ^{P_{u2}}_\omega 5 \succ^{P_{u2}}_\omega 0$, as illustrated in Figure \ref{fig:robust_undominated_illustration_u2}. 

\begin{figure}[h]
    \centering
    \begin{tikzpicture}[>=latex,line join=bevel,]
  \pgfsetlinewidth{1bp}
\begin{scope}
  \pgfsetstrokecolor{black}
  \definecolor{strokecol}{rgb}{1.0,1.0,1.0};
  \pgfsetstrokecolor{strokecol}
  \definecolor{fillcol}{rgb}{1.0,1.0,1.0};
  \pgfsetfillcolor{fillcol}
  \filldraw (0.0bp,0.0bp) -- (0.0bp,66.63bp) -- (360.0bp,66.63bp) -- (360.0bp,0.0bp) -- cycle;
\end{scope}
  \pgfsetcolor{black}
  \draw [->] (82.462bp,33.125bp) .. controls (88.829bp,40.951bp) and (97.711bp,49.738bp)  .. (108.0bp,54.0bp) .. controls (148.65bp,70.838bp) and (265.35bp,70.838bp)  .. (306.0bp,54.0bp) .. controls (312.75bp,51.203bp) and (318.9bp,46.457bp)  .. (331.54bp,33.125bp);
  \draw [->] (144.14bp,18.0bp) .. controls (146.13bp,18.0bp) and (148.12bp,18.0bp)  .. (161.51bp,18.0bp);
  \draw [->] (198.14bp,18.0bp) .. controls (200.13bp,18.0bp) and (202.12bp,18.0bp)  .. (215.51bp,18.0bp);
  \draw [->] (244.46bp,33.125bp) .. controls (250.83bp,40.951bp) and (259.71bp,49.738bp)  .. (270.0bp,54.0bp) .. controls (284.78bp,60.123bp) and (291.22bp,60.123bp)  .. (306.0bp,54.0bp) .. controls (312.75bp,51.203bp) and (318.9bp,46.457bp)  .. (331.54bp,33.125bp);
\begin{scope}
  \definecolor{strokecol}{rgb}{0.0,0.0,0.0};
  \pgfsetstrokecolor{strokecol}
  \draw (18.0bp,18.0bp) ellipse (18.0bp and 18.0bp);
  \draw (18.0bp,18.0bp) node {$1$};
\end{scope}
\begin{scope}
  \definecolor{strokecol}{rgb}{0.0,0.0,0.0};
  \pgfsetstrokecolor{strokecol}
  \draw (72.0bp,18.0bp) ellipse (18.0bp and 18.0bp);
  \draw (72.0bp,18.0bp) node {$2$};
\end{scope}
\begin{scope}
  \definecolor{strokecol}{rgb}{0.0,0.0,0.0};
  \pgfsetstrokecolor{strokecol}
  \draw (126.0bp,18.0bp) ellipse (18.0bp and 18.0bp);
  \draw (126.0bp,18.0bp) node {$3$};
\end{scope}
\begin{scope}
  \definecolor{strokecol}{rgb}{0.0,0.0,0.0};
  \pgfsetstrokecolor{strokecol}
  \draw (342.0bp,18.0bp) ellipse (18.0bp and 18.0bp);
  \draw (342.0bp,18.0bp) node {$0$};
\end{scope}
\begin{scope}
  \definecolor{strokecol}{rgb}{0.0,0.0,0.0};
  \pgfsetstrokecolor{strokecol}
  \draw (180.0bp,18.0bp) ellipse (18.0bp and 18.0bp);
  \draw (180.0bp,18.0bp) node {$4$};
\end{scope}
\begin{scope}
  \definecolor{strokecol}{rgb}{0.0,0.0,0.0};
  \pgfsetstrokecolor{strokecol}
  \draw (234.0bp,18.0bp) ellipse (18.0bp and 18.0bp);
  \draw (234.0bp,18.0bp) node {$5$};
\end{scope}
\begin{scope}
  \definecolor{strokecol}{rgb}{0.0,0.0,0.0};
  \pgfsetstrokecolor{strokecol}
  \draw (288.0bp,18.0bp) ellipse (18.0bp and 18.0bp);
  \draw (288.0bp,18.0bp) node {$6$};
\end{scope}
\end{tikzpicture}

    \vspace{-2em}
    \caption{Illustration of $\succ^{P_{u2}}_\omega$}
    \label{fig:robust_undominated_illustration_u2}
\end{figure}
\end{example}

We are now able to characterize the implications of Assumption ~\ref{assu:robust_undominance} for student preferences using $\succ^{P_{u1}}_\omega$ and $\succ^{P_{u2}}_\omega$.

\begin{proposition}\label{prop:robust_undominance}
There exists $(\mathcal{F}^*_\omega: \omega \in \Omega)$ satisfying Assumption~\ref{assu:robust_undominance} if and only if, the following two conditions are satisfied:
\begin{enumerate}
\item for student $\omega$ with $|\succ^R_\omega| < K$, $\succ^Q_\omega$ is compatible with $\succ^{P_{u1}}_\omega$.
\item for student $\omega$ with $|\succ^R_\omega| = K$, $\succ^Q_\omega$ is compatible with at least one of $\succ^{P_{u1}}_\omega$ and $\succ^{P_{u2}}_\omega$. 
\end{enumerate}
\end{proposition}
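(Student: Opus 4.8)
The plan is to fix a single student $\omega$ (Assumption~\ref{assu:robust_undominance} is imposed student-by-student, so the existence of a suitable $(\mathcal{F}^*_\omega)$ decomposes into independent problems) and to reduce ``undominated given $\mathcal{F}^*_\omega$'' to a statement about the selected programs $r_B := \optchoice{\succ^R_\omega}{B}$. I would first record two auxiliary facts. \emph{Fact A (reordering).} For a fixed listed set $L$ with $0 \in L$, the ROL ordering $L$ by $\succ^Q_\omega$ selects, in every feasible set $F$, the $\succ^Q_\omega$-maximal element of $L \cap F$; hence it weakly dominates every other ordering of $L$ on \emph{all} feasible sets. \emph{Fact B (used for necessity).} If $\succ^R_\omega$ is undominated given $\mathcal{F}^*_\omega$, then applying Fact~A with $L = \text{domain}(\succ^R_\omega)$ shows that reordering into $\succ^Q_\omega$-order cannot strictly improve on any $B \in \mathcal{F}^*_\omega$, so $r_B$ equals the $\succ^Q_\omega$-best \emph{listed} program in $B$ and thus $r_B \succ^Q_\omega j$ for every listed $j \in B \setminus \{r_B\}$. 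Since $\mathcal{F}_\omega \subseteq \mathcal{F}^*_\omega$, Fact~B already delivers the ``listed'' relations common to $\succ^{P_{u1}}_\omega$ and $\succ^{P_{u2}}_\omega$; what remains is to pin down the relations involving \emph{unlisted} programs and the outside option.

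For the non-binding case $|\succ^R_\omega| < K$ I would show compatibility with $\succ^{P_{u1}}_\omega$ is necessary and sufficient. For necessity, suppose some $B_0 \in \mathcal{F}_\omega$ contains an unlisted $j^\ast$ with $j^\ast \succ^Q_\omega r_{B_0}$; since the list is not full I may append $j^\ast$ in its $\succ^Q_\omega$-position, and by Fact~A this new ROL weakly improves on every feasible set and strictly improves on $B_0$, contradicting undominatedness. Combined with Fact~B this gives $r_B \succ^Q_\omega j$ for all $j \in B$, i.e. compatibility with $\succ^{P_{u1}}_\omega$. For sufficiency I take $\mathcal{F}^*_\omega = \mathcal{F}_\omega$: compatibility makes each $r_B$ the $\succ^Q_\omega$-maximum of $B$, so no alternative can strictly beat $r_B$ on any $B \in \mathcal{F}_\omega$ and Definition~\ref{def:robust_undominance}(2) can never be met.

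For the binding case $|\succ^R_\omega| = K$, sufficiency of either partial order comes from two explicit choices of $\mathcal{F}^*_\omega$. If $\succ^Q_\omega$ is compatible with $\succ^{P_{u1}}_\omega$, the choice $\mathcal{F}^*_\omega = \mathcal{F}_\omega$ works exactly as above. If instead $\succ^Q_\omega$ is compatible with $\succ^{P_{u2}}_\omega$, I augment the consideration set to $\mathcal{F}^*_\omega = \mathcal{F}_\omega \cup \{\{j,0\} : j \in \text{domain}(\succ^R_\omega)\setminus\{0\}\}$. On each ``last-resort'' pair $\{j,0\}$ the relation $j \succ^Q_\omega 0$ (part~1 of $\succ^{u2}_\omega$) forces any weakly dominating alternative to select, hence to list, $j$; since there are $K$ such programs and lists have length at most $K$, any weakly dominating alternative must list \emph{exactly} $\text{domain}(\succ^R_\omega)$ and so differs from $\succ^R_\omega$ only by reordering. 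But the part-2 relations of $\succ^{u2}_\omega$ make each $r_B$ the $\succ^Q_\omega$-best listed program in $B$, and each $\{j,0\}$ already yields its $\succ^Q_\omega$-best, so by Fact~A no reordering strictly improves on any set in $\mathcal{F}^*_\omega$; thus $\succ^R_\omega$ is undominated.

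The crux is the necessity half of the binding case. By Fact~B it suffices to prove the dichotomy: either (i$'$) $r_B \succ^Q_\omega j$ for every unlisted $j \in B$ and all $B \in \mathcal{F}_\omega$, \emph{or} (ii$'$) $j \succ^Q_\omega 0$ for every listed $j \ne 0$. I would assume both fail, obtaining a bad listed program $j_0$ with $0 \succ^Q_\omega j_0$ and a set $B_0 \in \mathcal{F}_\omega$ with an unlisted $j^\ast \in B_0$ such that $j^\ast \succ^Q_\omega r_{B_0}$, and then construct $\succ^{R'}_\omega$ by deleting $j_0$, inserting $j^\ast$, and sorting by $\succ^Q_\omega$ (still length $K$). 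The key verification is that $\succ^{R'}_\omega$ weakly dominates $\succ^R_\omega$ on \emph{every} feasible set: if $r_B \neq j_0$ then $r_B$ is still listed and feasible, so the new selection is $\succeq^Q_\omega r_B$; if $r_B = j_0$ then, because $0$ is always listed and feasible with $0 \succ^Q_\omega j_0$, the new selection is $\succeq^Q_\omega 0 \succ^Q_\omega j_0 = r_B$. On $B_0$ the new selection is $\succeq^Q_\omega j^\ast \succ^Q_\omega r_{B_0}$, a strict gain, so $\succ^{R'}_\omega$ dominates $\succ^R_\omega$ given any $\mathcal{F}^*_\omega \supseteq \mathcal{F}_\omega$, contradicting undominatedness. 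This swap argument — and the care needed so that deleting $j_0$ never harms the outcome on sets where $j_0$ was pivotal — is where the real work lies; the remaining bookkeeping (that the constructed relations are exactly the generators of $\succ^{P_{u1}}_\omega$ and $\succ^{P_{u2}}_\omega$, which are partial orders by Lemma~\ref{lem:partial_order_u1_u2}) is routine.
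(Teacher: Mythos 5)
Your two sufficiency constructions are correct and are essentially the paper's own (the same two choices of $\mathcal{F}^*_\omega$, and the same two-case check that alternatives with a different domain lose on some pair $\{j,0\}$ while same-domain reorderings never strictly gain). The necessity direction — which you rightly call the crux — has a genuine gap, in two linked places.

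First, the swap construction does not do what you claim. You delete one bad program $j_0$, insert $j^\ast$, re-sort keeping length $K$, and assert that on any feasible set $B$ with $r_B = j_0$ the new selection is $\succeq^Q_\omega 0$. That is false whenever other unacceptable programs remain listed: the new selection is then the $\succ^Q_\omega$-best \emph{remaining listed} program in $B$, which can be strictly worse than $j_0$. Concretely, let $\mathcal{J}_0=\{1,3,4,0\}$, $K=3$, true preferences $1 \succ^Q_\omega 0 \succ^Q_\omega 3 \succ^Q_\omega 4$, submitted ROL $3 \succ^R_\omega 4 \succ^R_\omega 0$, and $\mathcal{F}_\omega=\{\{1,0\}\}$. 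Both (i$'$) and (ii$'$) fail (take $B_0=\{1,0\}$, $j^\ast=1$, $j_0=3$). Your construction gives $1 \succ^{R'}_\omega 4 \succ^{R'}_\omega 0$, which on the feasible set $\{3,4,0\}$ selects $4$, whereas $\succ^R_\omega$ selects $3 \succ^Q_\omega 4$ — strictly worse. So $\succ^{R'}_\omega$ fails to dominate for any $\mathcal{F}^*_\omega$ containing $\{3,4,0\}$, and since you need a dominating list for \emph{every} $\mathcal{F}^*_\omega \supseteq \mathcal{F}_\omega$, no contradiction is obtained. (The ROL here is in fact dominated — by $1 \succ 0$, i.e., by deleting \emph{all} unacceptable programs — so the dichotomy is true; it is your proof of it that breaks.)

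Second, the root cause: your Fact B cannot produce any relation of the form $j \succ^Q_\omega 0$, yet such relations are part of what must be proved. Since $0 \in B$ for every $B$, the generators of $\succ^{P_{u1}}_\omega$ include $r_B \succ^{u1}_\omega 0$, and part (1) of $\succ^{u2}_\omega$ consists entirely of such relations; but comparisons between two orderings of the \emph{same} domain (all that Fact A provides) never pit a listed program against the outside option, because both lists place $0$ last and select $0$ on exactly the same feasible sets. These relations can only be extracted by comparing $\succ^R_\omega$ with a \emph{shorter} list that drops unacceptable programs — an argument absent from your proposal. Without it, the reduction ``Fact B $+$ dichotomy $\Rightarrow$ compatibility'' is unjustified, and you also cannot certify $j^\ast \succ^Q_\omega 0$, which your weak-domination check needs on sets where $r_B=0$ (and which your non-binding-case appending argument needs as well). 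The paper's necessity proof takes a route that sidesteps both problems: it splits on whether $\mathcal{A} = \{\optchoice{\succ^Q_\omega}{B} : B \in \mathcal{F}^*_\omega\}\cup\{0\}$ has at most $K$ elements and builds its dominating lists out of $\mathcal{A}$, whose nonzero elements are acceptable by construction. Your argument can be repaired by first proving, via the dropping comparison, that an undominated list never has an unacceptable program chosen on $\mathcal{F}^*_\omega$; then taking $j_0$ to be the $\succ^Q_\omega$-worst listed program; and checking domination only over $\mathcal{F}^*_\omega$, as Definition~\ref{def:robust_undominance} requires, rather than over all feasible sets.
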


The empirical content of the partial orders $\succ^{P_{u1}}_\omega$ and $\succ^{P_{u2}}_\omega$ derived from Assumption~\ref{assu:robust_undominance}  depends on the richness of the consideration set $\mathcal{F}_{\omega}$. If we assume students are highly uncertain about their feasible sets and set $\mathcal{F}_{\omega} = \{B\subseteq \mathcal{J}_0: 0\in B\}$, then the partial orders $\succ^{P_{u1}}_\omega$ and $\succ^{P_{u2}}_\omega$ become as informative as $\succ^{P_u}_\omega$, and Proposition~\ref{prop:robust_undominance} simplifies to Proposition~\ref{Prop:undominated_strategy}. Specifically, under this maximal uncertainty scenario, $\succ^{P_{u1}}_\omega$ coincides with $\succ^{P_u}_\omega$ when $|\succ^R_\omega| < K$, and $\succ^{P_{u2}}_\omega$ coincides with $\succ^{P_u}_\omega$ when $|\succ^R_\omega| = K$. Furthermore, in this case, $\succ^{P_{u2}}_\omega$ is always weaker than $\succ^{P_{u1}}_\omega$.

On the other hand, if students are very confident about their feasible sets, we may choose the minimal consideration set containing only the realized feasible set, $\mathcal{F}_{\omega} = \{F_{\omega}\}$. In this scenario, $\succ^{P_{u1}}_\omega$ reduces to the partial order $\succ^{P_s}_\omega$ constructed under the stability assumption in Proposition~\ref{Prop:PS}. Consequently, in this special case, Assumption~\ref{assu:robust_undominance} provides weakly less empirical content than Assumption~\ref{assu:stability}.

This brings us to the practical question of how to construct the consideration set $\mathcal{F}_{\omega}$. Suppose that we have observed the cutoff scores for all programs over multiple periods. One natural approach to constructing $\mathcal{F}_{\omega}$ is to assume that students believe that the historical cutoffs might recur in the current period. Formally, let $c_{j,t}$ denote the cutoff for program $j$ in period $t$, and let $S_{j,t}(\omega)$ represent student $\omega$'s priority score at program $j$ in period $t$. Then, the set $F_{t,t', \omega} \coloneqq \{j \in \mathcal{J}: S_{j,t}(\omega) \ge c_{j,t'}\}$ represents student $\omega$'s feasible set in period $t$ when using the cutoff from period $t'$ as a reference. By defining the consideration set for student $\omega$ in period $t$ as $\mathcal{F}_{\omega} = \{F_{t,t',\omega} : t' \le t\}$, we explicitly assume that students account for historical cutoffs when they decide their reported ROL $\succ^R_\omega$.

Alternatively, we can construct a richer $\mathcal{F}_{\omega}$ by allowing perturbations around historical cutoffs. Given a slack parameter $\delta\ge 0$, define
\[
\underline{F}_{\omega, t,t'}=\{\,j\in\mathcal{J}_0:\; S_{j,t}(\omega)\ge c_{j,t'}+\delta\,\},\quad
\overline{F}_{\omega,t,t'}=\{\,j\in\mathcal{J}_0:\; S_{j,t}(\omega)\ge c_{j,t'}-\delta\,\}.
\]
For fixed period $t$, let $\underline{F}_{\omega,t}\coloneqq \bigcap_{t'\le t}\underline{F}_{\omega,t,t'}$ be the programs that are always feasible to $\omega$ even under the least favorable perturbation, and let $\overline{F}_{\omega,t}\coloneqq \bigcup_{t'\le t}\overline{F}_{\omega, t,t'}$ be the programs that may be feasible in some years under favorable perturbations. A sensible construction is then
\begin{equation}\label{eq:perturbation_construct_F}
\mathcal{F}_{\omega}=\{\,F\in\mathscr{F}_0:\; \underline{F}_{\omega,t}\subseteq F \subseteq \overline{F}_{\omega,t}\,\},\text{ where }\mathscr{F}_0=\{F\subseteq \mathcal{J}_0: 0\in F\}.
\end{equation}
This construction applies to both multi-period and single-period data. 




\subsection{Selectivity ranking}\label{sec:selective}
In practice, certain programs are often perceived as more prestigious or selective than others. In this section, we formalize this notion of selectivity ranking and demonstrate how incorporating it can yield additional information about students' true preferences from their submitted ROLs. Specifically, we explore how integrating selectivity information can enrich the information obtained from the robust undominated strategy outlined in Assumption \ref{assu:robust_undominance}. 

Following \cite{bertanha_causal_2024}, we model the selectivity ranking as a binary relation derived from student's consideration sets.

\begin{definition}[Selectivity ranking]\label{def:selectiveness}
We say program $j$ is \emph{more selective} than program $j'$, denoted as $j \gg^* j'$, if, for any student $\omega$ and any feasible set $B$ within her consideration set $\mathcal{F}^*_\omega$, the inclusion of $j$ implies the inclusion of $j'$, i.e., $j\in B$ always implies $j'\in B$.
\end{definition}

According to Definition~\ref{def:selectiveness}, if program $j$ is more selective than program $j'$, any student qualified for program $j$ is necessarily qualified for program $j'$. Equivalently, it implies that admission into program $j$ is always weakly more difficult than admission into program $j'$. By construction, the selectivity ranking $\gg^*$ is reflexive and transitive. In fact, this $\gg^*$ is a nonstrict partial order.

Because students' consideration sets are not observable, the selectivity ranking $\gg^*$ is also not observable. Similar as the approach adopted in the previous subsection, instead of working directly on $\gg^*$, we would first construct a binary relation $\gg$ from the data and assume $\gg$ is consistent with the true selectivity ranking induced from the true consideration sets.

\begin{assumption}[Selectiveness]\label{assu:selectiveness}
	Assume $\gg$ is included within the $\gg^*$ induced from $(\mathcal{F}^*_\omega:\omega\in \Omega)$. That is, for any $j,j'\in \mathcal{J}_0$,  $j \gg j'$ implies that for any $\omega\in \Omega$, there does not exist some $B\in \mathcal{F}^*_\omega$ such that $j\in B$ and $j'\notin B$.
\end{assumption}

Assumption~\ref{assu:selectiveness} imposes additional structure on students' consideration sets, enabling further empirical implications to be derived from submitted ROLs beyond those implied by Assumption~\ref{assu:robust_undominance}. As before, we defer the discussion on how to construct the selectivity ranking $\gg$ to the end of this section. For now, we only require the constructed $\gg$ and $(\mathcal{F}_\omega: \omega\in \Omega)$ are compatible in the sense that for any $j,j'$ with $j \gg j'$, there cannot exist some $\omega$ and $B\in \mathcal{F}_\omega$ such that $j\in B$ but $j'\notin B$. Otherwise, there cannot exist $(\mathcal{F}^*_\omega:\omega\in \Omega)$ that satisfies both Assumption \ref{assu:robust_undominance} and \ref{assu:selectiveness}. Moreover, because the outside option is always within each feasible set, we know $j\gg 0$ for each $j\in \mathcal{J}$.

To discuss the joint implication of Assumptions \ref{assu:robust_undominance} and \ref{assu:selectiveness} for preference revealing given a constructed selectivity ranking $\gg$, we need to define $\succ^{P_{sel}}_\omega$ as follows:

\begin{itemize}
\item Define a binary relation $\succ^{sel}_\omega$ based on student $\omega$'s ROL $\succ^R_\omega$ and $\gg$:
\begin{enumerate}
	\item[(i)] For any $j \in \text{domain}(\succ^R_\omega)\setminus\{0\}$and any $j' \in \mathcal{J}_0\setminus\{j\}$ satisfying $j \gg j'$ and no $j'\succ^R_\omega j$,\footnote{Note that, because $\succ^R_\omega$ is a partial order, no $j'\succ^R_\omega j$ does not mean $j \succ^R_\omega j'$. For example, if $\mathcal{J}_0 = \{1, 2, 3, 0\}$ and $\succ^R_\omega$ is defined as $1 \succ^R_\omega 2 \succ^R_\omega 0$, then there is no $1 \succ^R_\omega 3$ and no $3 \succ^R_\omega 1$. Of course, there is also no $2 \succ^R_\omega 1$ in this example.} let $j \succ^{sel}_\omega j'$.
\item[(ii)] For each feasible set $B \in \mathcal{F}_\omega$, and for each program $j \in B \cap \text{domain}(\succ^{R}_\omega)$ that is different from $\optchoice{\succ^{R}_\omega}{B} $, let $\optchoice{\succ^{R}_\omega}{B} \succ^{sel}_\omega j$.
\end{enumerate}
\item Define $\succ^{P_{sel}}_\omega$ as the transitive closure of the binary relation $\succ^{sel}_\omega$ constructed above.
\end{itemize}
Comparing the definition of $\succ^{sel}_\omega$ and $\succ^{u2}_\omega$, we can see that $\succ^{sel}_\omega$ has a stronger Condition (i) and the same Condition (ii) in the definition. Lemma \ref{lem:select} in Appendix \ref{sec:lemma_sel} shows that $\succ^{P_{sel}}_\omega$ is a partial order as long as $\succ^R_\omega$ and $\gg$ are compatible with each other. Let us illustrate $\succ^{sel}_\omega$ and $\succ^{P_{sel}}_\omega$ using the running example.

\begin{example}[continues=running_example]
Recall the example used to illustrate $\succ^{P_{u1}}_\omega $ and $\succ^{P_{u2}}_\omega $. Suppose the $\gg$ is constructed as $1 \gg 2 \gg 3 \gg 0$. 

To work out the $\succ^{sel}_\omega$, note that Condition (i) in its definition implies $2 \succ^{sel}_\omega 3 \succ^{sel}_\omega 0$. Condition (ii) in its definition is the same as the Condition (ii) in the definition of $\succ^{u2}_\omega$, which implies $3 \succ^{sel}_\omega 4 \succ^{sel}_\omega 5 \succ^{sel}_\omega 0$. Then, its transitive closure $\succ^{P_{sel}}_\omega$ is $2\succ^{P_{sel}}_\omega 3 \succ^{P_{sel}}_\omega 4 \succ^{P_{sel}}_\omega 5 \succ^{P_{sel}}_\omega 0$, as illustrated in Figure \ref{fig:selectiveness_illustration_u3}. Compared to $\succ^{P_{u2}}_\omega$ in Figure \ref{fig:robust_undominated_illustration_u2}, $\succ^{P_{sel}}_\omega$ includes extra relations between $2$ and $\{3, 4, 5, 6\}$. Note that $\gg$ and $\succ^{P_{sel}}_\omega$ are generally different from each other. For example, there is no $1 \succ^{P_{sel}}_\omega 2$ even though there is $1 \gg 2$ in this case.
\end{example}

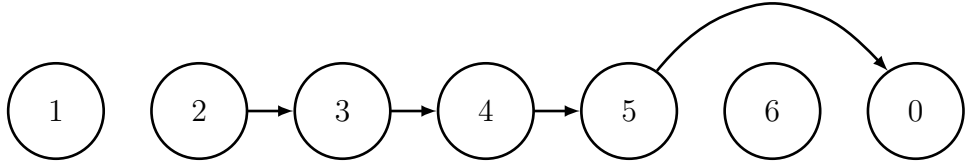
\begin{figure}[h]
    \centering
    \begin{tikzpicture}[>=latex,line join=bevel,]
  \pgfsetlinewidth{1bp}
\begin{scope}
  \pgfsetstrokecolor{black}
  \definecolor{strokecol}{rgb}{1.0,1.0,1.0};
  \pgfsetstrokecolor{strokecol}
  \definecolor{fillcol}{rgb}{1.0,1.0,1.0};
  \pgfsetfillcolor{fillcol}
  \filldraw (0.0bp,0.0bp) -- (0.0bp,58.59bp) -- (360.0bp,58.59bp) -- (360.0bp,0.0bp) -- cycle;
\end{scope}
  \pgfsetcolor{black}
  \draw [->] (90.141bp,18.0bp) .. controls (92.131bp,18.0bp) and (94.121bp,18.0bp)  .. (107.51bp,18.0bp);
  \draw [->] (144.14bp,18.0bp) .. controls (146.13bp,18.0bp) and (148.12bp,18.0bp)  .. (161.51bp,18.0bp);
  \draw [->] (198.14bp,18.0bp) .. controls (200.13bp,18.0bp) and (202.12bp,18.0bp)  .. (215.51bp,18.0bp);
  \draw [->] (244.46bp,33.125bp) .. controls (250.83bp,40.951bp) and (259.71bp,49.738bp)  .. (270.0bp,54.0bp) .. controls (284.78bp,60.123bp) and (291.22bp,60.123bp)  .. (306.0bp,54.0bp) .. controls (312.75bp,51.203bp) and (318.9bp,46.457bp)  .. (331.54bp,33.125bp);
\begin{scope}
  \definecolor{strokecol}{rgb}{0.0,0.0,0.0};
  \pgfsetstrokecolor{strokecol}
  \draw (18.0bp,18.0bp) ellipse (18.0bp and 18.0bp);
  \draw (18.0bp,18.0bp) node {$1$};
\end{scope}
\begin{scope}
  \definecolor{strokecol}{rgb}{0.0,0.0,0.0};
  \pgfsetstrokecolor{strokecol}
  \draw (72.0bp,18.0bp) ellipse (18.0bp and 18.0bp);
  \draw (72.0bp,18.0bp) node {$2$};
\end{scope}
\begin{scope}
  \definecolor{strokecol}{rgb}{0.0,0.0,0.0};
  \pgfsetstrokecolor{strokecol}
  \draw (126.0bp,18.0bp) ellipse (18.0bp and 18.0bp);
  \draw (126.0bp,18.0bp) node {$3$};
\end{scope}
\begin{scope}
  \definecolor{strokecol}{rgb}{0.0,0.0,0.0};
  \pgfsetstrokecolor{strokecol}
  \draw (180.0bp,18.0bp) ellipse (18.0bp and 18.0bp);
  \draw (180.0bp,18.0bp) node {$4$};
\end{scope}
\begin{scope}
  \definecolor{strokecol}{rgb}{0.0,0.0,0.0};
  \pgfsetstrokecolor{strokecol}
  \draw (234.0bp,18.0bp) ellipse (18.0bp and 18.0bp);
  \draw (234.0bp,18.0bp) node {$5$};
\end{scope}
\begin{scope}
  \definecolor{strokecol}{rgb}{0.0,0.0,0.0};
  \pgfsetstrokecolor{strokecol}
  \draw (288.0bp,18.0bp) ellipse (18.0bp and 18.0bp);
  \draw (288.0bp,18.0bp) node {$6$};
\end{scope}
\begin{scope}
  \definecolor{strokecol}{rgb}{0.0,0.0,0.0};
  \pgfsetstrokecolor{strokecol}
  \draw (342.0bp,18.0bp) ellipse (18.0bp and 18.0bp);
  \draw (342.0bp,18.0bp) node {$0$};
\end{scope}
\end{tikzpicture}
    \caption{Illustration of $\succ^{P_{sel}}_{\omega}$} \label{fig:selectiveness_illustration_u3}
\end{figure}

We are now ready to state the formal result. 

\begin{proposition}\label{prop:selectiveness}
	Suppose $\gg$ and $(\mathcal{F}_\omega:\omega\in \Omega)$ are compatible with each other. There exists $(\mathcal{F}^*_\omega: \omega\in \Omega)$ satisfying Assumptions \ref{assu:robust_undominance} and \ref{assu:selectiveness} if and only if, the following conditions are satisfied:
\begin{enumerate}
\item for student $\omega$ with $|\succ^R_\omega| < K$ or $\succ^R_\omega$ not compatile with $\gg$, $\succ^Q_\omega$ is compatible with $\succ^{P_{u1}}_\omega$
\item for student $\omega$ with $|\succ^R_\omega| = K$ and $\succ^R_\omega$ compatile with $\gg$, $\succ^Q_\omega$ is compatible with at least one of  $\succ^{P_{u1}}_\omega$ and $\succ^{P_{sel}}_\omega$.
\end{enumerate}
\end{proposition}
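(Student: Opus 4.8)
The plan is to build on Proposition~\ref{prop:robust_undominance}, treating Assumption~\ref{assu:selectiveness} as an extra restriction on the admissible consideration sets $(\mathcal{F}^*_\omega)$ that can only sharpen the characterization already obtained under Assumption~\ref{assu:robust_undominance} alone. Since $\succ^{sel}_\omega$ shares Condition~(ii) with $\succ^{u2}_\omega$ and strengthens its Condition~(i) using $\gg$, the partial order $\succ^{P_{sel}}_\omega$ refines $\succ^{P_{u2}}_\omega$; the entire content of the proposition is to show that on the branch $|\succ^R_\omega|=K$ with $\succ^R_\omega$ compatible with $\gg$ this refinement is exactly the additional empirical content contributed by selectivity, whereas on the complementary branch the $\succ^{P_{u2}}_\omega$-explanation is no longer available and only $\succ^{P_{u1}}_\omega$ survives. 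Throughout I would invoke Lemma~\ref{lem:select} to guarantee that $\succ^{P_{sel}}_\omega$ is a genuine partial order precisely on the branch where it is used, and Lemma~\ref{thm:join} when closing unions of relations under transitivity.

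For the ``only if'' direction, suppose $(\mathcal{F}^*_\omega)$ satisfies both assumptions. Assumption~\ref{assu:robust_undominance} already yields, via Proposition~\ref{prop:robust_undominance}, compatibility of $\succ^Q_\omega$ with $\succ^{P_{u1}}_\omega$ when $|\succ^R_\omega|<K$ and with at least one of $\succ^{P_{u1}}_\omega,\succ^{P_{u2}}_\omega$ when $|\succ^R_\omega|=K$. The key new step is a deviation argument showing that, under Assumption~\ref{assu:selectiveness}, in the $\succ^{P_{u2}}_\omega$-explanation every relation added by Condition~(i) of $\succ^{sel}_\omega$ is a valid preference implication. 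Concretely, fix a pair with $j\in\text{domain}(\succ^R_\omega)\setminus\{0\}$, $j\gg j'$, and no $j'\succ^R_\omega j$, and suppose toward a contradiction that $j'\succ^Q_\omega j$. Because $j\gg j'$ forces $j'\in B$ whenever $j\in B$ for every $B\in\mathcal{F}^*_\omega$, an alternative ROL that raises $j'$ to just above $j$ turns every set on which $j$ was selected into one on which the strictly preferred $j'$ is selected, while leaving unchanged the sets whose match lay above $j$; showing that the remaining sets (where $j'$ may become newly pivotal) are not worsened then forces $\succ^R_\omega$ to be dominated, contradicting Assumption~\ref{assu:robust_undominance}. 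Combined with the Condition~(ii) relations already carried by $\succ^{P_{u2}}_\omega$ and closed under transitivity via Lemma~\ref{thm:join}, this delivers Condition~(2). For Condition~(1) it remains to treat $\succ^R_\omega$ not compatible with $\gg$: incompatibility produces listed programs $j_1\succ^R_\omega j_2$ with $j_2\gg j_1$, which under selectivity makes $j_2$ unattainable in every $B\in\mathcal{F}^*_\omega$ (whenever $j_2\in B$ then $j_1\in B$ and $j_1$ is ranked above $j_2$); in the all-acceptable $\succ^{P_{u2}}_\omega$ reading this renders the slot occupied by $j_2$ wasteful and $\succ^R_\omega$ dominated, so the operative explanation must be the feasibility-based $\succ^{P_{u1}}_\omega$ one.

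For the ``if'' direction I would, student by student, pick a branch according to the hypothesis and exhibit a consideration set $\mathcal{F}^*_\omega$ with $\mathcal{F}_\omega\subseteq\mathcal{F}^*_\omega$ that makes $\succ^R_\omega$ undominated and whose induced selectivity ranking contains $\gg$. Compatibility of $\gg$ with $(\mathcal{F}_\omega)$ ensures each member of $\mathcal{F}_\omega$ is already $\gg$-downward closed, so it suffices to take $\mathcal{F}^*_\omega$ to consist only of $\gg$-downward-closed feasible sets; by Definition~\ref{def:selectiveness} any such family induces $\gg^*\supseteq\gg$, discharging Assumption~\ref{assu:selectiveness}. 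On the $\succ^{P_{u1}}_\omega$ branch I reuse the construction from the sufficiency part of Proposition~\ref{prop:robust_undominance}, intersected with the downward-closure requirement. On the $\succ^{P_{sel}}_\omega$ branch (available only when $|\succ^R_\omega|=K$ and $\succ^R_\omega$ is compatible with $\gg$, so that $\succ^{P_{sel}}_\omega$ is a partial order by Lemma~\ref{lem:select}) I take $\mathcal{F}^*_\omega$ to be the maximal family of $\gg$-downward-closed feasible sets and verify, using compatibility of $\succ^Q_\omega$ with $\succ^{P_{sel}}_\omega$, that no alternative ROL dominates $\succ^R_\omega$ over this family, the recorded relations being exactly those needed to defeat each candidate deviation.

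The main obstacle I anticipate is the deviation bookkeeping in both directions. In the necessity argument the delicate point is that promoting (or substituting) $j'$ may, on sets where $j'$ is newly feasible, displace a lower-ranked listed program $m$ that could itself be preferred to $j'$; ruling this out must exploit the caveat ``no $j'\succ^R_\omega j$'' together with compatibility of $\succ^R_\omega$ with $\gg$ and the $\gg$-closure of $\mathcal{F}^*_\omega$ to constrain such $m$, and the incompatibility sub-case similarly hinges on correctly formalizing ``a wasteful slot implies domination.'' On the sufficiency side, the crux is proving that compatibility with $\succ^{P_{sel}}_\omega$ alone suffices to kill every deviation over the maximal $\gg$-downward-closed family, which is precisely where the strengthened Condition~(i) must be shown to carry its full weight. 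I expect the cleanest route is to phrase both the necessity deviation and the sufficiency verification as statements about $\optchoice{\succ^R_\omega}{B}$ as $B$ ranges over $\gg$-downward-closed sets, reducing everything to comparisons of maximal elements, and to lean on Lemma~\ref{thm:join} and Lemma~\ref{lem:select} so that the transitive-closure and partial-order technicalities are dispatched abstractly rather than by hand.
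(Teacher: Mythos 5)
Your sufficiency construction fails, and the failure is not repairable by bookkeeping. On the $\succ^{P_{sel}}_\omega$ branch you take $\mathcal{F}^*_\omega$ to be the \emph{maximal} family of $\gg$-downward-closed sets, but undominatedness is not monotone in the consideration set, and compatibility with $\succ^{P_{sel}}_\omega$ protects nothing on the extra sets you add: Condition (ii) of $\succ^{sel}_\omega$ constrains only $B\in\mathcal{F}_\omega$, so on downward-closed sets outside $\mathcal{F}_\omega$ containing two or more listed programs, nothing ties $\succ^Q_\omega$ to the submitted order. Concretely, let $\mathcal{J}_0=\{1,2,3,4,0\}$, let $\gg$ consist only of $j\gg 0$, let the length cap bind at three listed programs, let $\succ^R_\omega$ be $1\succ^R_\omega 2\succ^R_\omega 3\succ^R_\omega 0$ (compatible with $\gg$), and let $\mathcal{F}_\omega=\{\{4,0\}\}$. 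Then $\succ^{P_{u1}}_\omega=\{0\succ 4\}$ and $\succ^{P_{sel}}_\omega=\{1\succ 0,\,2\succ 0,\,3\succ 0\}$, so $\succ^Q_\omega$ given by $3\succ^Q_\omega 2\succ^Q_\omega 1\succ^Q_\omega 4\succ^Q_\omega 0$ is compatible with $\succ^{P_{sel}}_\omega$ but not with $\succ^{P_{u1}}_\omega$; your $\succ^{P_{sel}}$ branch is the operative one. Over the maximal family (here: all sets containing $0$), the same-domain reordering $3\succ^{R'}_\omega 2\succ^{R'}_\omega 1\succ^{R'}_\omega 0$ \emph{dominates} $\succ^R_\omega$: outcomes coincide whenever $B\cap\{1,2,3\}$ has at most one element, and on $B=\{1,3,0\}$ the deviation obtains $3\succ^Q_\omega 1$. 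So Assumption \ref{assu:robust_undominance} is violated by your $\mathcal{F}^*_\omega$. The paper instead enlarges $\mathcal{F}_\omega$ only by the singleton closures $\{j\}\cup\{j'\in\mathcal{J}_0:j\gg j'\}$ for listed $j$; on each such set, $\gg$-compatibility of $\succ^R_\omega$ forces $\optchoice{\succ^R_\omega}{B}=j$, and Condition (i) of $\succ^{sel}_\omega$ makes $j$ the overall $\succ^Q_\omega$-optimum of $B$, so no deviation can strictly improve on any added set while dropped programs are still punished.

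The necessity direction also has a gap. Your local deviation (``raise $j'$ to just above $j$'') cannot deliver the Condition (i) relations: when $j'$ is unlisted and $|\succ^R_\omega|=K$ the insertion violates the cap, and, more fundamentally, when $j$ is never equal to $\optchoice{\succ^R_\omega}{B}$ for any $B\in\mathcal{F}^*_\omega$ the deviation produces no strict improvement, so no contradiction with Assumption \ref{assu:robust_undominance} is available; the displacement problem you flag is real and is not resolved by ``no $j'\succ^R_\omega j$'' plus downward closure, since Condition (ii) says nothing about unlisted programs. The idea your proposal is missing is the paper's dichotomy on $\mathcal{A}\coloneqq\{\optchoice{\succ^Q_\omega}{B}:B\in\mathcal{F}^*_\omega\}\cup\{0\}$: if $|\mathcal{A}|\le K$, one proves compatibility with $\succ^{P_{u1}}_\omega$ directly (the deviation listing $\mathcal{A}$ in $\succ^Q_\omega$-order attains the first best on every $B\in\mathcal{F}^*_\omega$), and the $\succ^{P_{sel}}$ claim is never needed; if $|\mathcal{A}|>K$, there exists an \emph{unlisted} program that is the strict $\succ^Q_\omega$-optimum of some $B^*\in\mathcal{F}^*_\omega$, and it is this program that supplies the strict improvement for deviations that drop a redundant or unacceptable listed program and reorder the whole domain by $\succ^Q_\omega$ (which makes weak improvement elsewhere automatic). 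Relatedly, your step for Condition (1) --- ``a wasteful slot renders $\succ^R_\omega$ dominated'' --- is false as stated: a redundant listed program contradicts undominatedness only if there is something strictly valuable to insert in the freed slot, which is exactly what a violation of $\succ^{P_{u1}}_\omega$-compatibility provides; the paper runs this as a contradiction argument, not as an unconditional implication.
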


We now describe how to construct $\gg$. Suppose we observe priority scores and cutoffs over multiple periods. Let $c_{j,t}$ denote the cutoff for program $j$ in period $t$, and let $S_{j,t}(\omega)$ be student $\omega$'s priority score at program $j$ in period $t$. One way to construct $\gg$ for period $t$ is to define it as follows:
\begin{equation}\label{eq:selectiveness_history}
    j \gg j' \quad \text{iff} \quad \forall\, t' \le t:\ 
\{\omega : S_{j,t'}(\omega) \le c_{j,t'}\} \subseteq \{\omega : S_{j',t'}(\omega) \le c_{j',t'}\}.
\end{equation}
That is, $j \gg j'$ if in every observed period up to $t$, the set of students with scores below the cutoff for school $j$ is a subset of those with scores below the cutoff for school $j'$.

Note that this construction does not require programs $j$ and $j'$ to use the same priority score (i.e., it does not require $S_{j,t}(\omega)=S_{j',t}(\omega)$). Under this construction, Assumption~\ref{assu:selectiveness} admits a behavioral interpretation: students regard $j$ as more selective than $j'$ when $j$ has consistently been harder to enter than $j'$ across all observed periods.


\section{Empirical Analysis}\label{Sec:App}

The choice of college major has long been recognized as a key determinant of labor market outcomes. Since \cite{james1989college} highlighted that major choice is more consequential than college choice, a growing literature has examined both the determinants of major choice and its implications for labor market success.  Within this literature, STEM fields have received particular attention. Numerous studies document that STEM programs yield high labor-market returns, foster innovation and technological progress, and are associated with a smaller gender wage gap (e.g., \cite{black2008college,blau2017gender,dahl2023gender,beede2011women}). However, women remain persistently underrepresented in STEM programs and careers, a pattern thought to contribute meaningfully to the overall gender wage gap (see \cite{daymont1984}; \cite{zafar2013}). Reducing barriers to women's participation in STEM has therefore become a policy priority for governments and international organizations, as emphasized by UNESCO (2017). 

The literature points to two broad explanations for this persistent underrepresentation. First, gendered preferences and expectations influence students' program choices (see \cite{kahn_women_2017}). Second, institutional features of admissions systems, such as how priority scores are constructed and how assignment mechanisms are designed, may contribute to gender imbalances in access to STEM fields (see \cite{MontolioTaberner2021}).

This section examines how admission structures shape women's enrollment in STEM. We focus on Chile, where university admissions rely on a composite priority score that combines standardized high-stakes exam scores and high-school GPA. The exams include subject-specific assessments in Mathematics, Science, Language, and History and are typically administered on a single day, whereas GPA aggregates performance across multiple years of secondary education. Using the tools developed earlier, we evaluate policies that modify how the priority score is constructed, either by reweighting the GPA and exam components or by rescaling raw exam scores. We then quantify the resulting changes in the gender gap in STEM enrollment.

We document two facts in Subsection \ref{sec:descriptive_stat}. First, although each academic program sets its own weights, nearly all programs assign most of the weight to standardized exams, typically around 60-80\% (Figure \ref{fig:GPA_weight}). Second, as shown in Table \ref{tab:summary_gender} and Figure \ref{fig:scores_distribution}, male applicants tend to score higher on high-stakes exams, whereas female applicants tend to have higher GPAs. Similar patterns appear in other competitive educational contexts (see \cite{ArenasCalsamiglia2025} and references therein). 

These facts raise a concern: heavier weighting of exams may systematically favor male applicants and reinforce imbalances in selective, male-dominated fields. Because emphasizing high-stakes exams over GPA can have meaningful distributional consequences, it is empirically important to quantify how these weights shape gender gaps in STEM.\footnote{A related question concerns allocative efficiency: do these weights improve match quality and subsequent educational outcomes? The relative predictive power of GPA versus high-stakes exams for such outcomes remains poorly understood in the literature. Table \ref{tab:glm_results} provides suggestive evidence that GPA appears to be a stronger predictor of successful graduation than high-stakes exams.}

To shed light on this issue, we consider a series of counterfactual scenarios that progressively increase the weights of the GPA. We quantify how much these policies reduce the gender gap in STEM enrollment and identify the minimum weight of GPA required to produce a substantively meaningful reduction in current disparities. The results of this counterfactual analysis are presented in Subsection \ref{sec:more_weights_GPA}.

Alternatively, we consider a counterfactual policy that standardizes high-stakes exam scores within gender. Under this policy, each student's exam performance is evaluated relative to same-gender peers rather than all students who take the exam, removing, by construction, any between-gender differences in the distribution of exam scores. This design is motivated by evidence of gender differences in high-stakes exam performance across countries \citep{JurajdaMunich2011,Saygin2019,MontolioTaberner2021}. In particular, \citet{IriberriReyBiel2019} document heightened anxiety and cortisol responses among women under time pressure and in competitive settings, suggesting that high-stakes exams may impose disproportionate physiological and psychological costs on women and thereby compromise the fairness of admissions decisions, especially in competitive STEM programs. This gender-based standardization aims to neutralize differential stress effects on measured performance and, in turn, reduce gender gaps in access to selective programs, particularly in STEM. In Subsection \ref{sec:gender_standardization}, we assess whether such a policy could reduce observed gender disparities in STEM admissions, and by how much.

\subsection{Data}
We use publicly available data on Chile's centralized college application and assignment system from year 2005 to year 2010. The institutional setting has been described in detail by \citet{hastingsneilsonzimmerman2013} and \citet{larroucaurios2020,larroucaurios2021}, among others.  

College choice in Chile is organized as a semi-centralized system: a subset of universities participate in a centralized market in which a clearinghouse collects applicants' rank-order lists and determines assignments using a variant of the DA algorithm. Applicants may submit rank-order lists of up to eight major–university pairs (``programs'') out of more than 1,000.\footnote{The cap was increased to ten in 2011.} Program-specific priorities are determined by a weighted average of scores on the national standardized test (the PSU, \textit{prueba de selecci\'on universitaria}) and high-school GPA.\footnote{In 2013, students' relative rank within their high school was added as one of the “primary'' scores used to construct priorities.} 

We analyze gender gaps in both STEM and medical programs for two reasons. First, medical programs attract many of the highest-scoring applicants. Second, while women's underrepresentation in STEM receives the most attention, there is a growing call in Latin America to address gender gaps in medicine as well \cite{Negrotto2025GenderGap}. Thus, we study STEM and medical programs jointly in the empirical analysis.

We classify a program as ``STEM \& Med'' if it belongs to Engineering and Manufacturing, Information and Communication Technologies, Natural Sciences and Mathematics, or Medical Sciences. Operationally, we map all Chilean programs to 2020 CIP codes.\footnote{CIP codes are six-digit identifiers in the Classification of Instructional Programs, a national standard maintained by the U.S. Department of Education's National Center for Education Statistics (NCES) for classifying fields of study and reporting program completions.} We label a program as STEM if its CIP code appears on the \emph{U.S. Department of Homeland Security STEM-Designated Degree Program List} (as of July 12, 2023),\footnote{This list enumerates CIP codes that qualify for the STEM OPT extension in the United States.} and as Med if its CIP code begins with 51 (“Health Professions and Related Programs”), 60 (“Health Professions Residency”), or 61 (“Medical Residency”). We refer to the union of these categories as STEM \& Med in what follows.

\subsection{Descriptive statistics}\label{sec:descriptive_stat}
\begin{figure}
    \centering
    \includegraphics[width=0.9\linewidth]{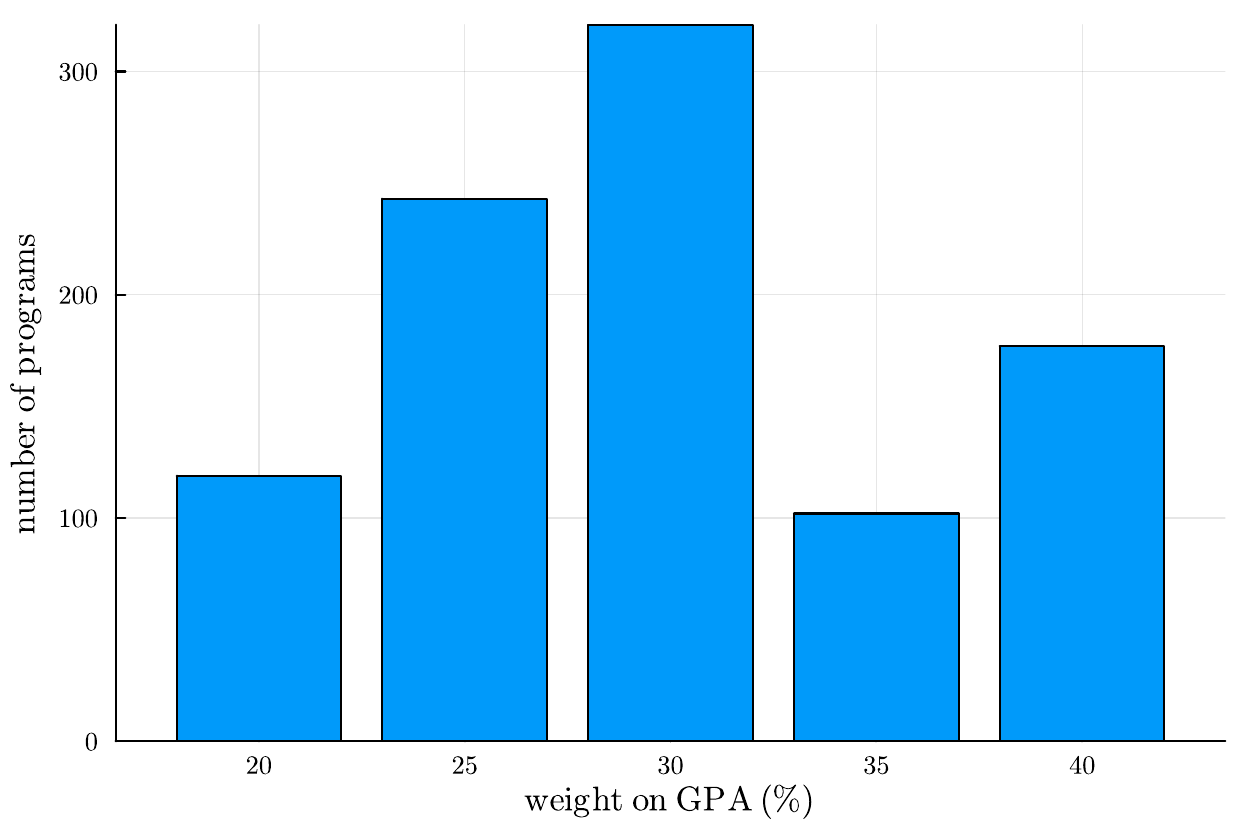}
    \caption{Weight distribution of High School GPA in the calculation of admission scores, 2010.}
    \label{fig:GPA_weight}
\end{figure}

We first document the weights that programs assign to GPA and high-stakes exams in their admissions priority scores. Figure \ref{fig:GPA_weight} plots the distribution of program-level GPA weights in year 2010. All programs allocate $20$--$40\%$ of the composite priority score to GPA, implying $60$--$80\%$ to high-stakes exams. Of all the $962$ programs, $683$ ($\approx 71\%$) programs assign at most $30\%$ to GPA. Overall, programs in Chile place the majority of the composite weight on high-stakes exams rather than GPA, and similar patterns were observed in other years.

\begin{table}[htbp]\centering
\caption{Academic, Choice, and Demographic Summary Statistics by Gender}
\label{tab:summary_gender}
\begin{tabular}{lccc}
\hline\hline
 & Male (1) & Female (2) & Difference (3) \\
\hline\hline

\\[-0.9em]
\multicolumn{4}{l}{\textbf{Academic achievement}} \\ \\

Language (High-Stakes Exam) & 578.394 & 576.114 & 2.281 \\
 & (0.389) & (0.381) & (0.545) \\

Math (High-Stakes Exam) & 596.972 & 562.302 & 34.671 \\
 & (0.415) & (0.384) & (0.566) \\

Science (High-Stakes Exam) & 571.148 & 549.714 & 21.434 \\
 & (0.495) & (0.496) & (0.701) \\

History (High-Stakes Exam) & 590.067 & 560.823 & 29.244 \\
 & (0.550) & (0.512) & (0.752) \\

High School GPA & 573.993 & 602.181 & -28.188 \\
 & (0.455) & (0.429) & (0.625) \\

Admission Score (avg. weights) & 562.936 & 555.957 & 6.979 \\
 & (0.348) & (0.332) & (0.480) \\

Admission Score (STEM \& Med, among science takers) & 592.471 & 584.338 & 8.134 \\
 & (0.408) & (0.402) & (0.573) \\

Admission Score (non-STEM \& Med, among history takers) & 542.224 & 533.033 & 9.191 \\
 & (0.497) & (0.451) & (0.671) \\

\hline \\[-0.9em]
\multicolumn{4}{l}{\textbf{School choices}} \\ \\

Total number of choices listed & 4.700 & 4.663 & 0.037 \\
 & (0.010) & (0.010) & (0.014) \\

ROL strictly shorter than permitted & 0.817 & 0.827 & -0.010 \\
 & (0.002) & (0.002) & (0.002) \\

Average weights of high-stakes exams in the ROL & 71.351 & 71.370 & -0.019 \\
 & (0.021) & (0.021) & (0.030) \\

Percent of choices that are STEM \& Med & 0.621 & 0.487 & 0.135 \\
 & (0.002) & (0.002) & (0.003) \\

\hline \\[-0.9em]
\multicolumn{4}{l}{\textbf{Program type conditional on assignment}} \\ \\

Rank of assigned program (conditional) & 1.331 & 1.280 & 0.052 \\
 & (0.008) & (0.008) & (0.011) \\

STEM \& Med assigned program & 0.359 & 0.228 & 0.131 \\
 & (0.002) & (0.002) & (0.003) \\

Number of observations & 47,644 & 49,494 & -1,850 \\

\hline\hline
\end{tabular}
\vspace{0.3em}
\begin{minipage}{0.9\textwidth}\footnotesize
\textit{Notes}: Calculations are for all students in the analysis sample for year 2010. 
Standard errors are in parentheses. 
\end{minipage}
\end{table}

\begin{figure}[htbp]
\centering 
\captionsetup[subfigure]{justification=centering}

\begin{subfigure}[t]{0.45\textwidth}
    \centering
    \includegraphics[width=\linewidth]{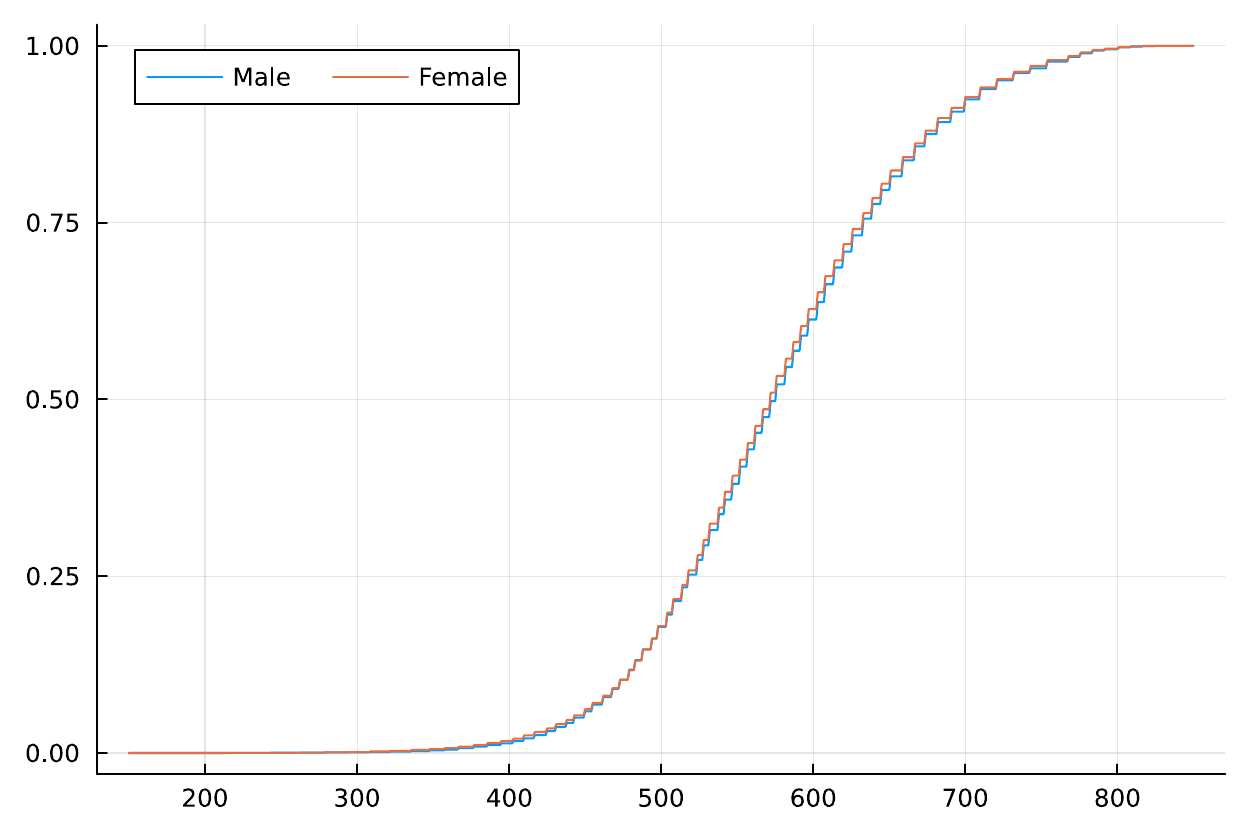}
    \caption{Language}
    \label{fig:sub1}
\end{subfigure}
\hfill
\begin{subfigure}[t]{0.45\textwidth}
    \centering
    \includegraphics[width=\linewidth]{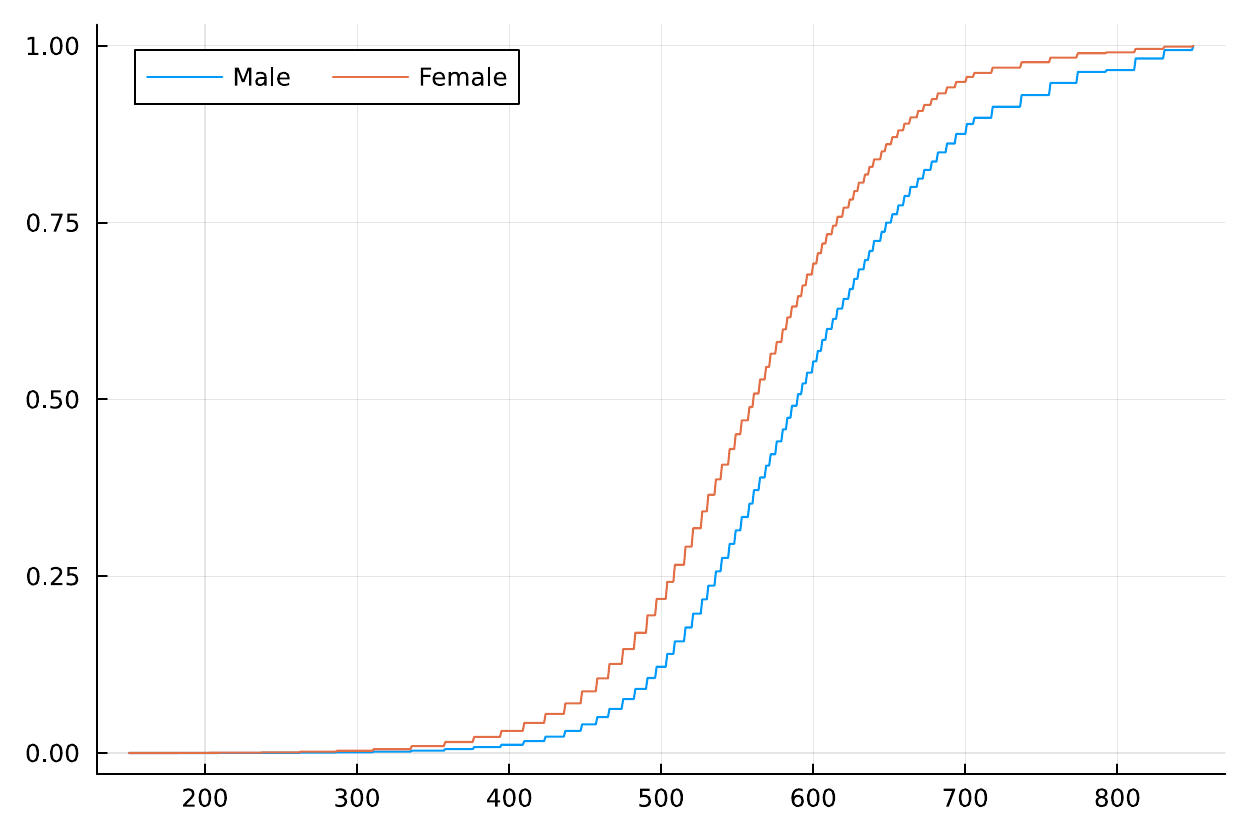}
    \caption{Math}
    \label{fig:sub2}
\end{subfigure}

\vspace{1em} 

\begin{subfigure}[t]{0.45\textwidth}
    \centering
    \includegraphics[width=\linewidth]{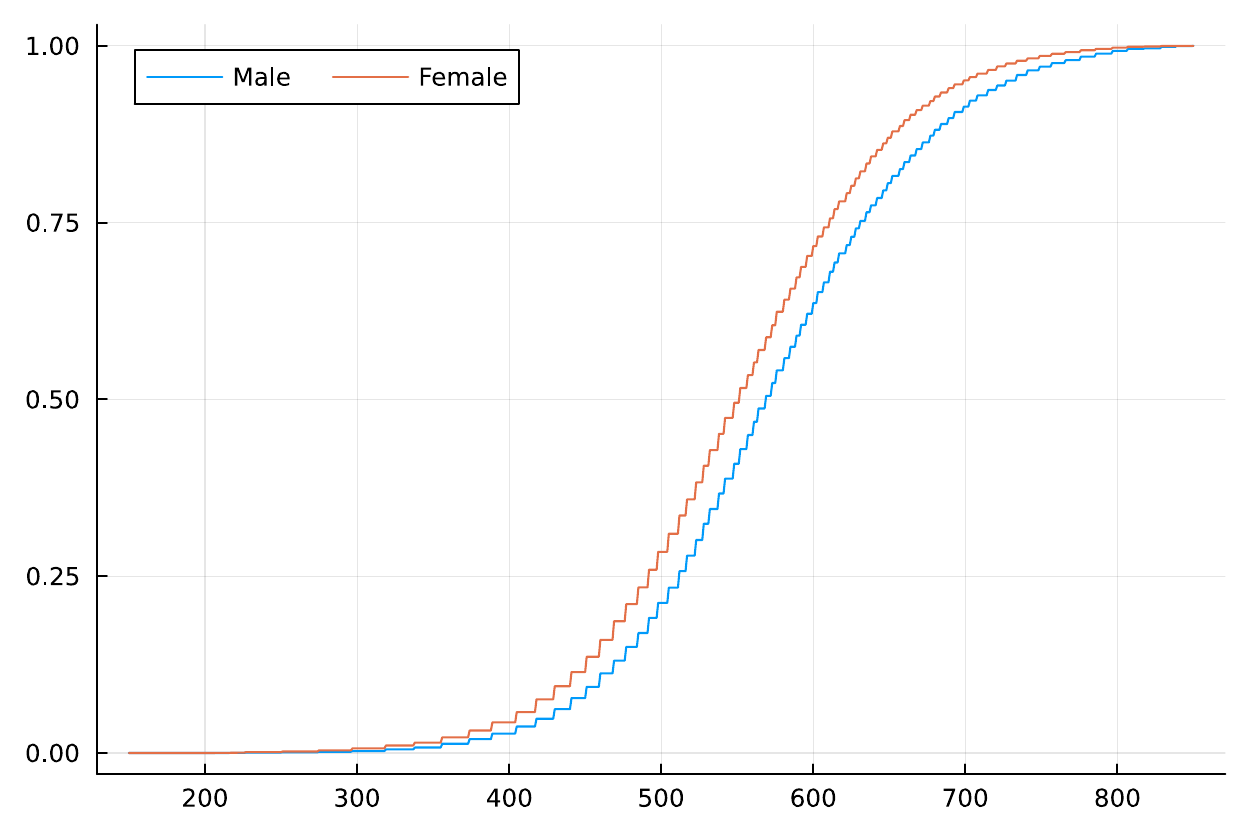}
    \caption{Science}
    \label{fig:sub3}
\end{subfigure}
\hfill
\begin{subfigure}[t]{0.45\textwidth}
    \centering
    \includegraphics[width=\linewidth]{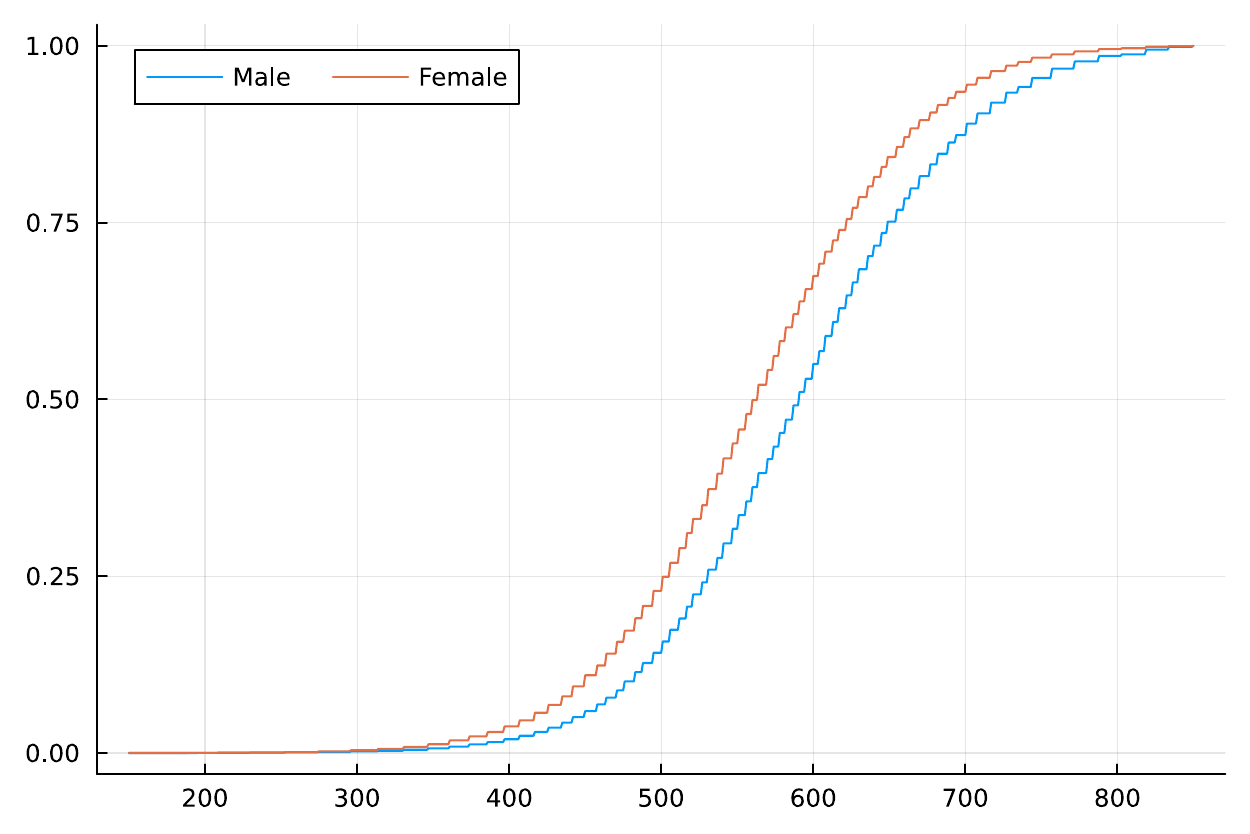}
    \caption{History}
    \label{fig:sub4}
\end{subfigure}

\vspace{1em} 


\begin{subfigure}[t]{0.45\textwidth}
    \centering
    \includegraphics[width=\linewidth]{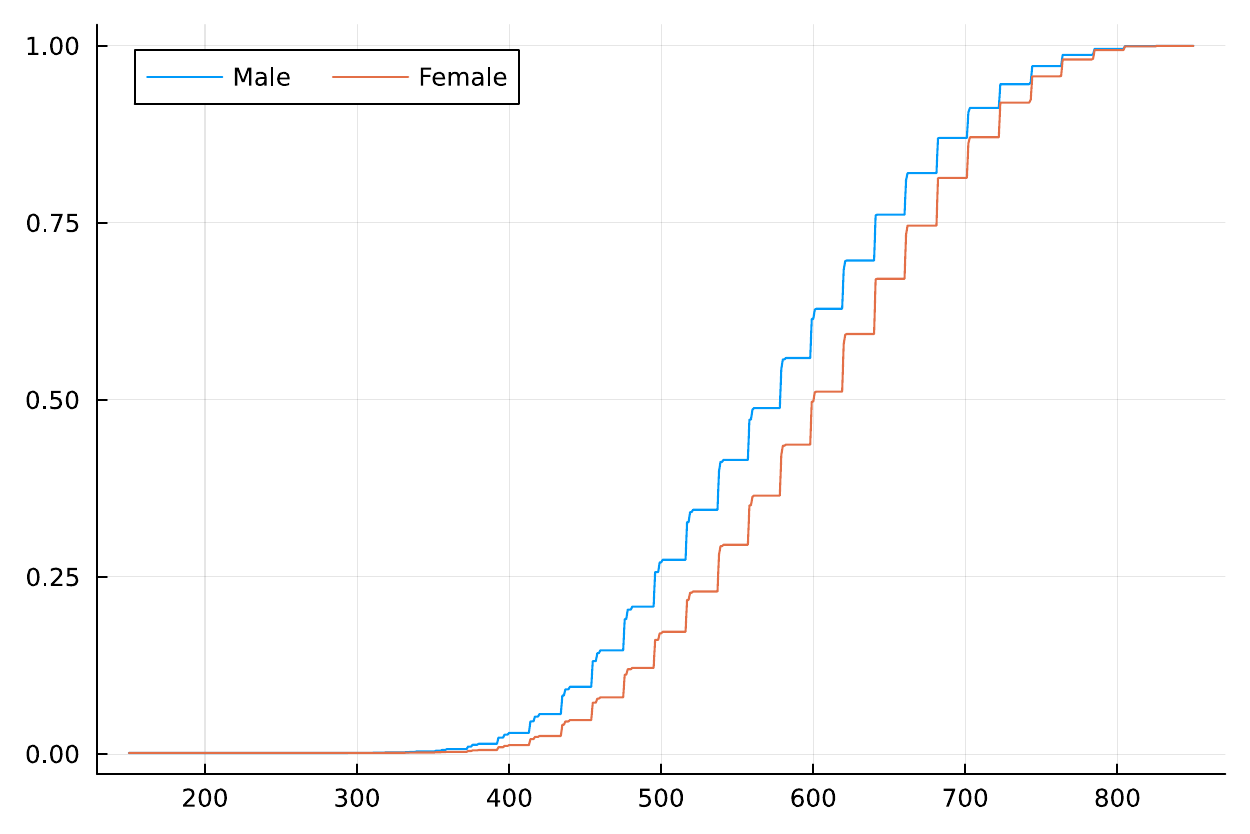}
    \caption{GPA}
    \label{fig:sub5}
\end{subfigure}
\hfill
\begin{subfigure}[t]{0.45\textwidth}
    \centering
    \includegraphics[width=\linewidth]{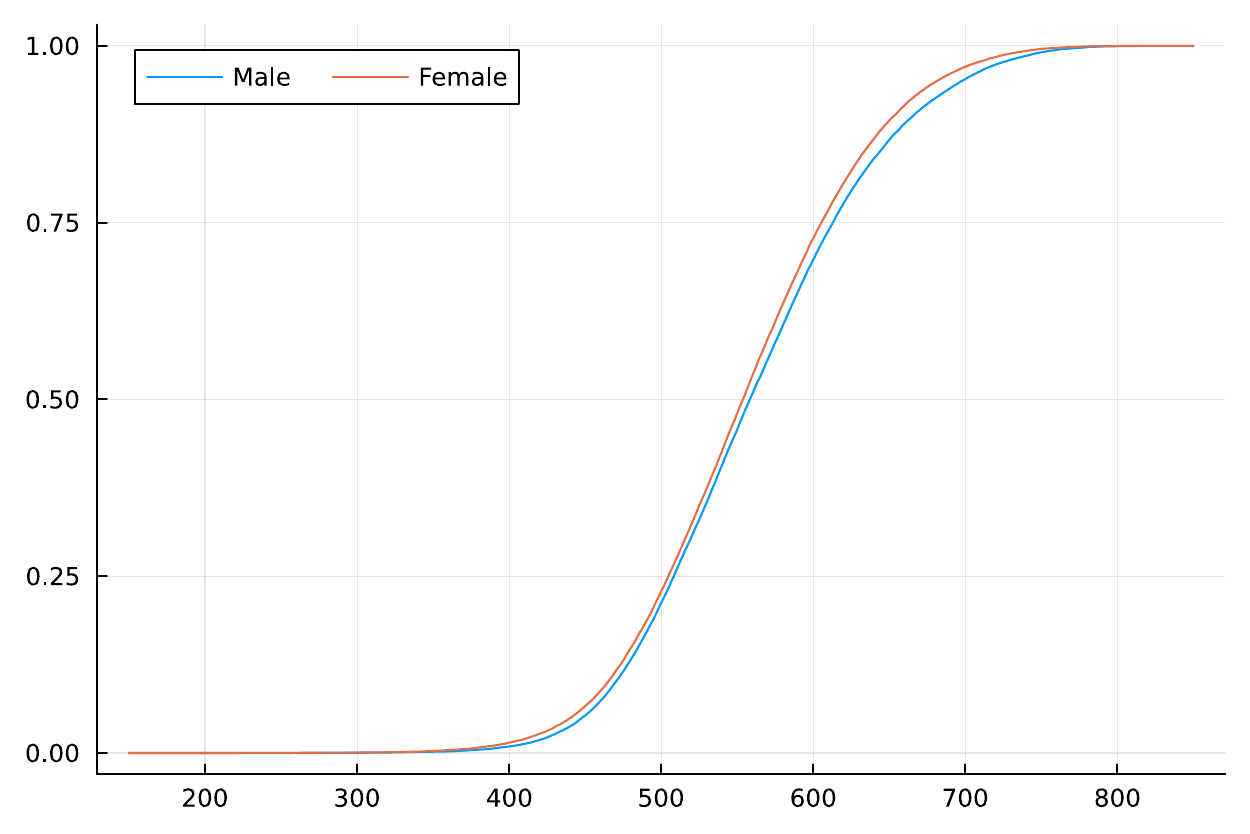}
    \caption{Average Admission Score}
    \label{fig:sub6}
\end{subfigure}


\caption{Cumulative distribution functions by subject area. Panels (a)–(d) display distributions for Language, Math, Science, and History, respectively. Panel (e) presents the GPA distribution, and Panel (f) presents the admission score using average program weights.}
\label{fig:scores_distribution}
\end{figure}

We next document gender differences in exam scores, GPA, ROLs, and match outcomes. Table \ref{tab:summary_gender} reports summary statistics for the 2010 cohort. Across subjects, male and female applicants display distinct patterns: in History, Mathematics, and Science, males score on average 29.24, 34.67, and 21.43 points higher than females, respectively, whereas Language scores differ only slightly but are still statistically significant. By contrast, high-school GPA reverses the pattern: females average a 28.19-point advantage over males. Given that exam components typically account for $60$--$80\%$ of the composite priority score, these performance gaps imply that men will, on average, have higher composite priority scores than women. Indeed, as shown in the sixth row of Table \ref{tab:summary_gender}, applying the average program weights yields priority scores that are 6.97 points higher for men than for women on average. A similar male advantage persists when we use weights only from STEM \& Med programs and restrict the sample to students who took the Science exam, and when we use weights only from non-STEM \& Med programs and restrict the sample to ones who took the History exam.

To see these differences in more detail, Figure~\ref{fig:scores_distribution} compares the cumulative distribution functions (CDFs) of subject-specific scores by gender and of the composite priority score (constructed using average program weights). Panels~(b)--(d) show that, in Mathematics, Science, and History, the distribution of male scores first-order stochastically dominates that of female scores. Panel~(a) shows a similar pattern for Language, though the gap is observationally small. In contrast, the GPA distribution in Panel (e) exhibits the opposite pattern: female scores first-order stochastically dominate male scores. Panel~(f) shows that the composite priority score likewise favors men, with the male CDF lying below the female CDF throughout. Although gaps are present across the distribution, they are largest in the upper tail: at the 90th percentile, males score about 12.46 points more than females.



Turning to program choices (middle panel of Table \ref{tab:summary_gender}), male applicants list slightly more programs on their rank-order lists (mean length: $4.7$ vs.~$4.6$). They also exhibit a stronger preference for STEM \& Med programs, which comprise $62.1\%$ of their ranked choices versus $48.7\%$ for female applicants. The average exam weight of the programs they rank is nearly identical across genders ($71.35$ vs.~$71.37$), suggesting that the weighting scheme does not differentially induce strategic program selection by gender. 

\begin{figure}[h!]
    \centering
    \includegraphics[width=0.9\linewidth]{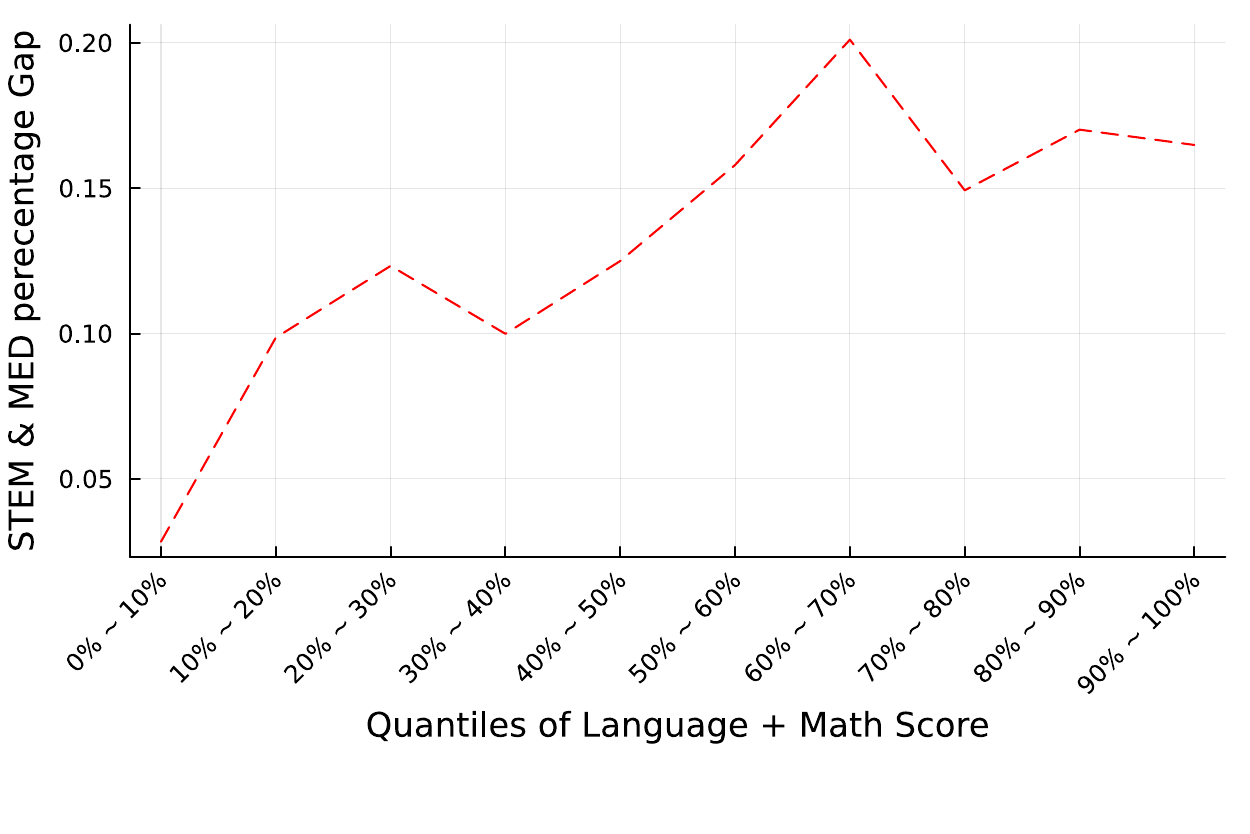}
    \caption{Percentage Gender gap in STEM \& Med enrollment across percentiles of Math and Language high-stakes exam scores, 2010.}
    \label{fig:quant_STEM}
\end{figure}

These differences in ROL composition, together with the score gaps documented above, translate into substantial disparities in final assignments, as shown in the last panel of Table \ref{tab:summary_gender}: $39.5\%$ of male applicants are assigned to a STEM \& Med program, compared with $22.8\%$ of female applicants, roughly half as many.
Figure~\ref{fig:quant_STEM} provides a more detailed view of the gender gap in STEM and Medicine enrollment (in percentage points) across percentiles of the Mathematics and Language exam score distributions. 
The gap is largest among high-performing students—peaking at about 20 percentage points between the 60th and 70th percentiles—and narrows considerably among lower-performing ones. 
This pattern suggests that gender disparities in STEM enrollment are particularly pronounced among students with strong academic preparation.

Finally, we estimate a logistic regression of student's graduation outcome on high-stakes exam scores (weighting subject-specific high-stakes exam scores with the admitting program's weights) and GPA scores. The estimates in Table \ref{tab:glm_results} indicate that both coefficients are statistically significant. Moreover, the GPA coefficient is roughly twice the magnitude of the exam coefficient, indicating a stronger association with graduation. Thus, while both components matter, sustained academic performance---as captured by GPA---appears to be the more powerful predictor. These findings raise questions about placing greater weights on high-stakes exams in admissions, particularly if weighting choices also shape gender gaps in STEM outcomes. We examine this in the next section.

\begin{table}[htbp]
\centering
\caption{Logit Model of Graduation on High-stakes exam and GPA Scores}
\label{tab:glm_results}
\begin{tabular}{lcccccc}
\toprule
 & \textbf{Coef.} & \textbf{Std. Error} & \textbf{z} & \textbf{Pr(>|z|)} & \textbf{Lower 95\%} & \textbf{Upper 95\%} \\
\midrule
Intercept        & -1.56836  & 0.09117  & -17.20 & $<$1e--65 & -1.74705 & -1.38967 \\
High-stakes exam &  0.00271  & 0.00013  &  21.30 & $<$1e--99 &  0.00246 &  0.00296 \\
GPA score        &  0.00563  & 0.00022  &  25.58 & $<$1e--99 &  0.00520 &  0.00606 \\
\bottomrule
\end{tabular}
\end{table}

\subsection{Counterfactuals}

We consider two counterfactual policies that could reduce the STEM \& Med gender gap: (i) increasing the weight on GPA relative to high-stakes exams in the priority score, and (ii) standardizing exam scores within gender before constructing priority scores. 

Our main goal is to evaluate the extent to which these policies may affect the STEM \& Med gender gap. Formally, our parameter of interest is defined as
\[
\theta \coloneqq \sum_{\omega\in \Omega_N} \sum_{j\in \mathcal{J}_0} \gamma_{\omega j} d_{\omega j},
\]
across all stable matchings $d$ in the counterfactual scenario, where
\[
\gamma_{\omega j} \;=\; \frac{1}{N_{\text{men}}}\,\indicator\!\big(j\in \text{STEM \& Med},\, \omega \in \text{Men}\big)
  \;-\; \frac{1}{N_{\text{women}}}\,\indicator\!\big(j\in \text{STEM \& Med},\, \omega\in \text{Women}\big).
\]
Here, $N_{\text{men}}$ ($N_{\text{women}}$) denotes the number of male (female) students, $\omega\in\text{Men}$ ($\omega\in\text{Women}$) indicates that $\omega$ is male (female). And $j\in\text{STEM \& Med}$ indicates that program $j$ is a STEM \& Med program.


\subsubsection{Revealed preferences under various assumptions}
As discussed earlier, our framework allows us to impose different combinations of assumptions. In what follows, we focus on Assumption~\ref{ass:stable}, Assumption~\ref{assu:undominance}, Assumption~\ref{assu:robust_undominance}, and Assumption~\ref{assu:selectiveness}. For the 2010 cohort, we construct $\mathcal{F}_\omega$ for Assumption~\ref{assu:undominance} using \eqref{eq:perturbation_construct_F} with $\delta=20$ and data from 2008--2010. We construct $\gg$ for Assumption~\ref{assu:selectiveness} using \eqref{eq:selectiveness_history} with the same data. \footnote{When imposing Assumptions~\ref{assu:robust_undominance} and \ref{assu:selectiveness} jointly, we replace $\mathscr{F}_0$ in \eqref{eq:perturbation_construct_F} with $\mathscr{F}_0^* \coloneqq \{F\in \mathscr{F}_0 : \nexists (j,j') \text{ with } j\gg j',\, j\in F,\, j'\notin F\}$ to ensure compatibility between $\gg$ and $\mathcal{F}_\omega$.}

To illustrate the identification power of each assumption set, Table \ref{tab:Iden-power} reports the number of pairwise orderings (binary relations) implied by each set, by gender and at selected percentiles of the Mathematics and Language high-stakes exam score distributions. When an assumption set implies a single partial order, we count the number of binary relations within that partial order. When it implies two partial orders, we report the number of binary relations that exist in both partial orders. More binary relations indicate greater identification power.

\begin{table}[htbp]
\centering
\caption{Identification power of different sets of assumptions by Gender and Percentile Group}
\label{tab:Iden-power}
\resizebox{\textwidth}{!}{
\begin{tabular}{lccccccccccc}
\toprule
 & \multicolumn{2}{c}{\textbf{S + US + Sel}} & \multicolumn{2}{c}{\textbf{S + US}} & 
   \multicolumn{2}{c}{\textbf{S + RUS + Sel}} & \multicolumn{2}{c}{\textbf{S + RUS}} & 
   \multicolumn{2}{c}{\textbf{S}} \\
\cmidrule(lr){2-3} \cmidrule(lr){4-5} \cmidrule(lr){6-7} \cmidrule(lr){8-9} \cmidrule(lr){10-11}
\textbf{Percentile Group} & Male & Female & Male & Female & Male & Female & Male & Female & Male & Female \\
\midrule
\textbf{Average (all)} & 5334.85 & 5284.51 & 4185.54 & 4231.49 & 1323.13 & 1225.91 & 1169.23 & 1103.48 & 395.71 & 374.35 \\
\midrule
0--10\%   & 4819.74 & 4763.96 & 3577.00 & 3496.61 & 108.91 & 31.33 & 94.22 & 27.88 & 25.66 & 8.84 \\
10--20\%  & 5241.00 & 5105.69 & 3829.80 & 3891.73 & 392.31 & 345.31 & 327.87 & 303.33 & 86.96 & 87.91 \\
20--30\%  & 5412.40 & 5374.88 & 4004.55 & 4071.24 & 666.29 & 563.64 & 567.57 & 480.44 & 146.81 & 138.14 \\
30--40\%  & 5456.75 & 5473.44 & 4127.79 & 4113.42 & 975.16 & 817.56 & 831.95 & 700.55 & 224.13 & 205.87 \\
40--50\%  & 5552.34 & 5529.71 & 4266.91 & 4287.26 & 1252.59 & 1095.43 & 1076.33 & 949.06 & 315.68 & 292.16 \\
50--60\%  & 5566.97 & 5523.79 & 4325.59 & 4389.32 & 1521.88 & 1344.73 & 1306.57 & 1185.85 & 411.39 & 387.41 \\
60--70\%  & 5572.55 & 5519.89 & 4466.32 & 4560.55 & 1792.98 & 1621.71 & 1574.15 & 1457.23 & 517.47 & 487.86 \\
70--80\%  & 5595.28 & 5450.95 & 4546.15 & 4598.19 & 2127.22 & 1965.68 & 1862.78 & 1762.35 & 623.74 & 590.65 \\
80--90\%  & 5345.86 & 5261.76 & 4474.70 & 4528.59 & 2298.47 & 2277.07 & 2067.49 & 2090.19 & 737.40 & 704.89 \\
90--100\% & 4786.05 & 4834.58 & 4230.29 & 4366.98 & 2076.16 & 2168.80 & 1965.06 & 2052.36 & 859.64 & 831.00 \\
\bottomrule
\end{tabular}
}
\vspace{0.3em}
\begin{minipage}{0.9\textwidth}\footnotesize
\textit{Notes}: ``US'' denotes an \textit{Undominated Strategy}, while ``RUS'' refers to a \textit{Robust Undominated Strategy}. 
\end{minipage}
\end{table}

\noindent
First, we observe that the \textit{Stability} assumption on its own provides limited identifying power. However, when combined with the \textit{Undominated Strategy} assumption, the number of revealed binary relationships increases substantially---from 395 to 4,185 for males and from 374 to 4,185 for females. This highlights the strong identification role played by the \textit{Undominated Strategy}. 

\medskip

The \textit{Robust Undominated Strategy} assumption also exhibits considerable identifying power relative to \textit{Stability}, though its contribution remains modest compared to the \textit{Undominated Strategy}. Specifically, the number of revealed binary relationships increases from 395 to 1,169 for males and from 374 to 1,103 for females when moving from \textit{Stability} alone to \textit{Stability + Robust Undominated Strategy}. Enhancing its usefulness would require a richer specification of the sets $\mathcal{F}_\omega$.

\medskip

Finally, the marginal contribution of the \textit{Selectiveness} assumption depends critically on the set of accompanying assumptions. When combined with \textit{Stability} and the \textit{Undominated Strategy}, it provides substantially more identifying information than when paired with \textit{Stability} and \textit{Robust Undominated Strategy}. 

\medskip

Regarding gender differences, under the \textit{Stability} assumption, males reveal on average 5.7\% more information than females. However, this gap narrows considerably when \textit{Stability} is combined with the \textit{Undominated Strategy}, the gap falls to about 1\% in favor of females. This reduction may stem from the fact that males and females report a similar average number of programs on their rank-ordered lists (4.7 for males versus 4.6 for females).


In what follows, we implement the counterfactual analysis under the most informative set of assumptions, namely \textit{Stability + Undominated Strategy + Selectiveness} (Assumptions~\ref{ass:stable}, \ref{assu:undominance}, and~\ref{assu:selectiveness}). To conduct this exercise, we apply the dimension-reduction approach introduced in Section~\ref{section: Dimension-Red}. The counterfactual results presented below are computed using a representative 10\% subsample of the student population.

\subsubsection{More Weight on GPA}\label{sec:more_weights_GPA}

The admission score is computed as a weighted average of the high-stakes exam scores and the GPA, as follows\footnote{Very rarely, some programs compute $S_j$ as 
$S_j = \alpha_{1j} S_{j}^{\text{Lang}} + \alpha_{2j} S_{j}^{\text{Math}} + \alpha_{3j} \max(S_{j}^{\text{Sci}}, S_{j}^{\text{Hist}}) + \alpha_{5j} S_{j}^{\text{GPA}}.$
}:
\[
S_j = \alpha_{1j} S_{j}^{\text{Lang}} 
    + \alpha_{2j} S_{j}^{\text{Math}} 
    + \alpha_{3j} S_{j}^{\text{Sci}} 
    + \alpha_{4j} S_{j}^{\text{Hist}} 
    + \alpha_{5j} S_{j}^{\text{GPA}}.
\]

In the counterfactual exercise, we retain the original scores but modify the weighting scheme to place greater emphasis on the GPA component. Specifically, we define:
\[
\tilde{S}_j = \tilde{\alpha}_{1j} S_{j}^{\text{Lang}} 
            + \tilde{\alpha}_{2j} S_{j}^{\text{Math}} 
            + \tilde{\alpha}_{3j} S_{j}^{\text{Sci}} 
            + \tilde{\alpha}_{4j} S_{j}^{\text{Hist}} 
            + \tilde{\alpha}_{5j} S_{j}^{\text{GPA}},
\]
where, for each program $j$, the recalibrated weights $\{\tilde\alpha_{kj}\}_{k=1}^5$ allocate at least $X\%$ to GPA, with $X\in\{30,40,50,60,70,80\}$. That is, $\tilde{\alpha}_{5j} = \max(\alpha_{5j}, X\%)$. The relative weights across the four exam subjects are preserved and the weights are renormalized to keep the total unchanged in the counterfactual. This counterfactual design allows us to examine how increasing the relative importance of GPA—often viewed as a measure of sustained academic effort—affects the gender gap in the percentage of students admitted to STEM and Medicine programs.

Figure~\ref{fig:more_weights} summarizes the counterfactual results. The dashed red line marks the observed gender gap in STEM \& Med admissions, $0.132$ (i.e., the 13.2 percentage points). The vertical blue segments plot the bounds on the counterfactual gap under policies that require each program to allocate at least $X\%$ of the total weight to GPA, with $X\in\{30,40,50,60,70,80\}$. The upper bound falls below the observed gap once programs allocate at least $70\%$ to GPA. At $80\%$, the upper bound declines to $0.121$, indicating that placing more weight on sustained academic performance (GPA) can narrow the observed STEM \& Med gender gap.

Figure~\ref{fig:quantiles-GPA} extends the analysis by plotting bounds on the counterfactual gender gap by percentile of the sum of Mathematics and Language exam-score distributions, for $X\in\{50,60,70,80\}$. Panel~(a) ($X=50$) shows that, while the aggregate gap remains large, the gender gap narrows among the top decile. As the lower bound for GPA weights rises (Panels~(b)–(d), $X=60,70,80$), the reduction extends progressively down the distribution---reaching roughly the 70th percentile---indicating that higher GPA weights reduces the gender gap in a broader set of high-performing students.

This first set of counterfactuals yields two insights. First, the weighting placed on GPA versus exams has a measurable impact on the gender gap in STEM \& Med admissions. Second, raising the GPA weight produces heterogeneous effects across achievement levels, with especially pronounced reductions among higher-achieving applicants. Taken together, these findings indicate that emphasizing GPA in admissions has meaningful redistributive effects and can promote greater gender equity in access to STEM \& Med programs, primarily by shifting outcomes in the upper tail of the achievement distribution.

\begin{figure}[h!]
   \centering
   \includegraphics[width=0.9\linewidth]{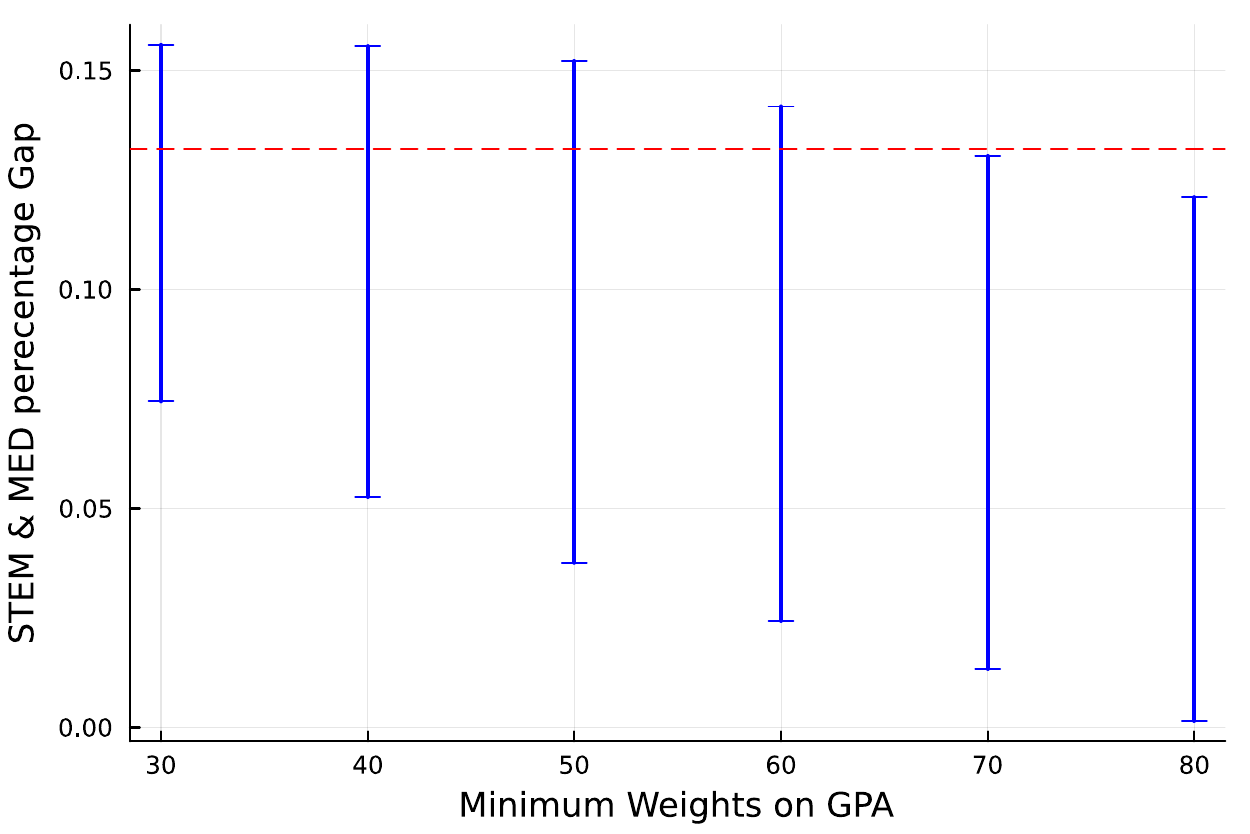}
   \caption{Counterfactuals: Percentage gender gap in STEM \& Medicine admissions, 2010. The red dashed line denotes the observed gap ($0.132$), while the blue lines indicate bounds under alternative weighting schemes with increasing GPA importance.}
   \label{fig:more_weights}
\end{figure}

\begin{figure}[htbp]
\centering
\captionsetup[subfigure]{justification=centering}

\begin{subfigure}[t]{0.45\textwidth}
    \centering
    \includegraphics[width=\linewidth]{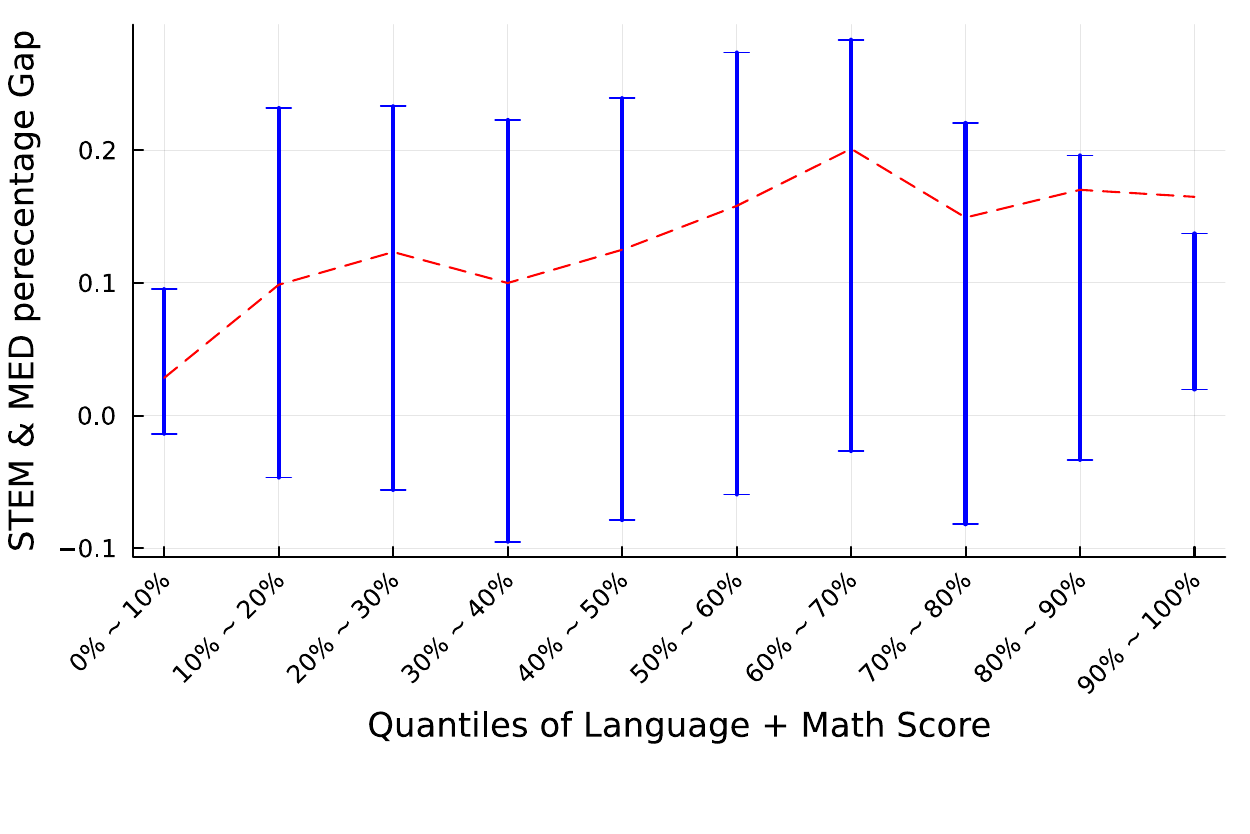}
    \caption{Panel A: At least 50\% GPA weight}
    \label{fig:subA}
\end{subfigure}
\hfill
\begin{subfigure}[t]{0.45\textwidth}
    \centering
    \includegraphics[width=\linewidth]{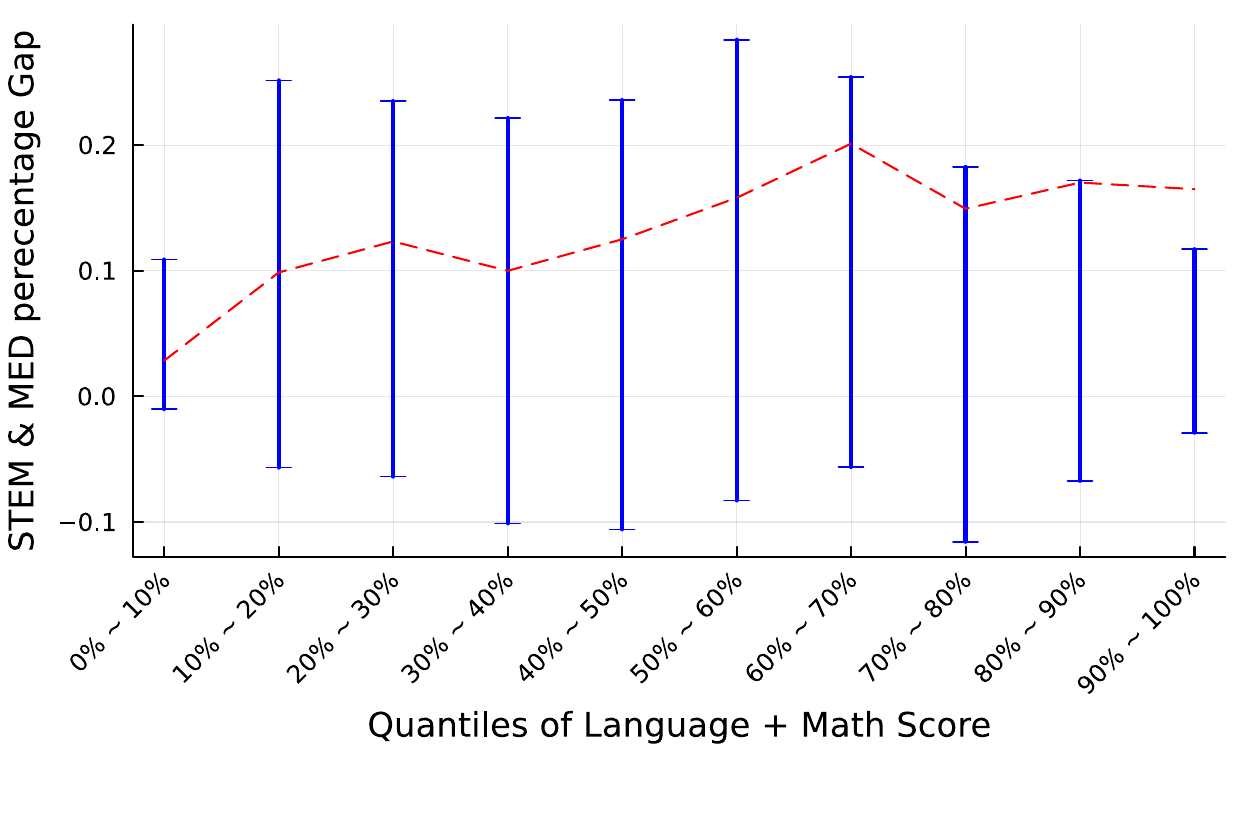}
    \caption{Panel B: At least 60\% GPA weight}
    \label{fig:subB}
\end{subfigure}

\vspace{1em} 

\begin{subfigure}[t]{0.45\textwidth}
    \centering
    \includegraphics[width=\linewidth]{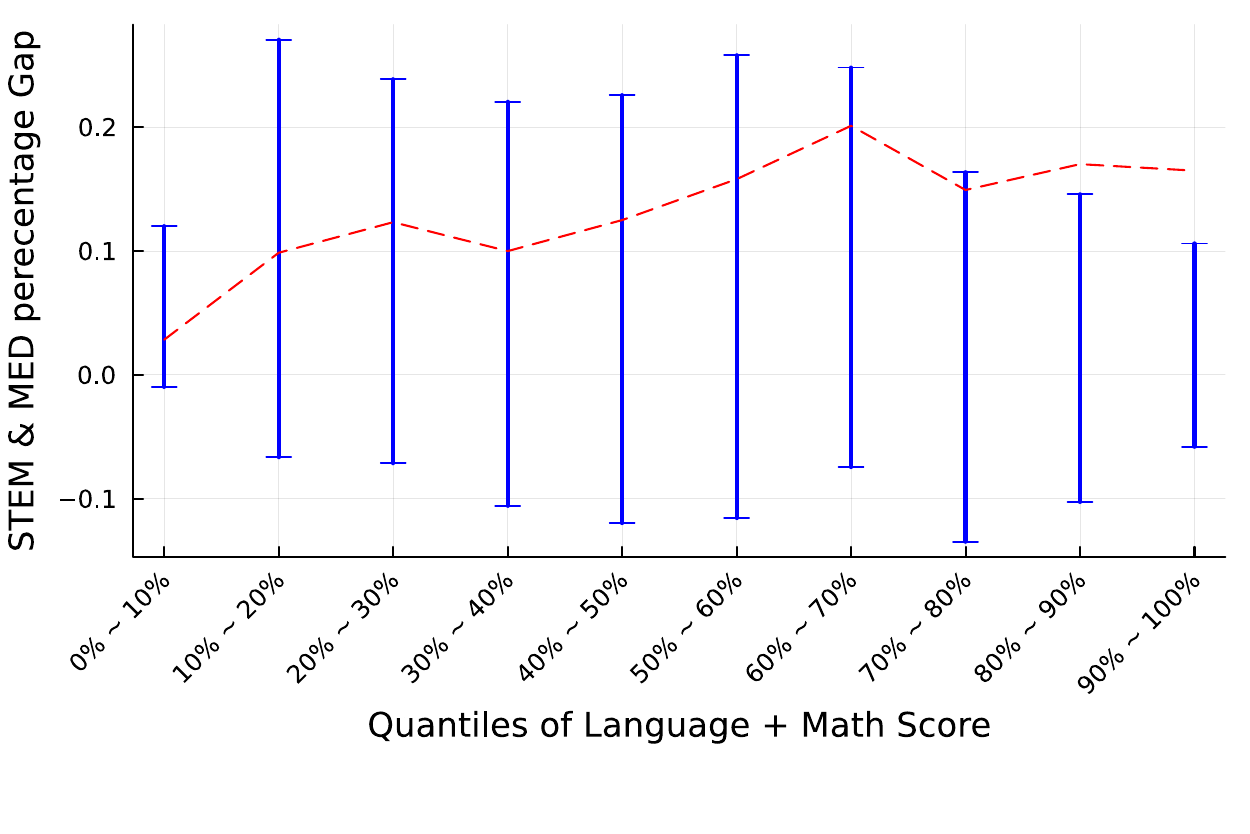}
    \caption{Panel C: At least 70\% GPA weight}
    \label{fig:subC}
\end{subfigure}
\hfill
\begin{subfigure}[t]{0.45\textwidth}
    \centering
    \includegraphics[width=\linewidth]{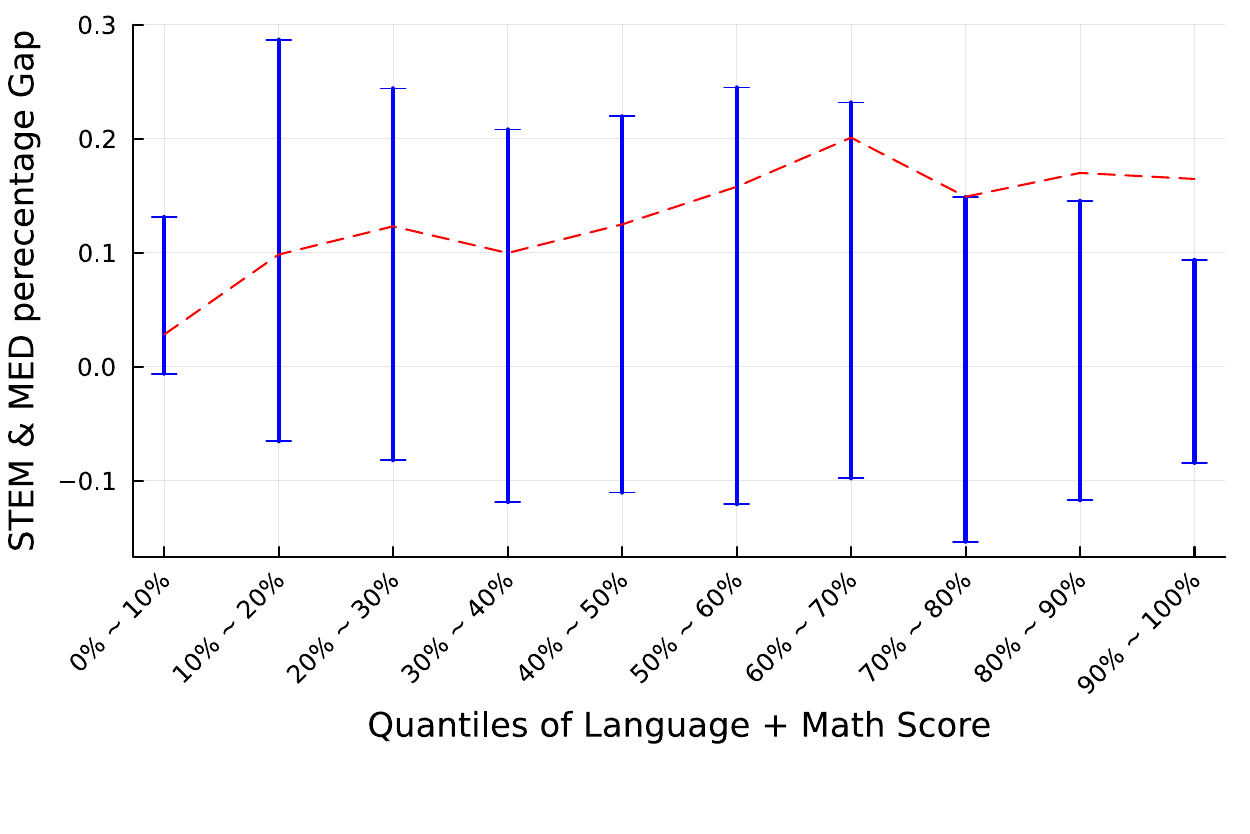}
    \caption{Panel D: At least 80\% GPA weight}
    \label{fig:subD}
\end{subfigure}

\caption{Counterfactual gender gap bounds across Math and Language percentiles for different GPA weighting thresholds.}
\label{fig:quantiles-GPA}
\end{figure}

\subsubsection{Gender-based standardization.} \label{sec:gender_standardization}

In this counterfactual exercise, we keep the weighting scheme currently used by each academic program but replace raw exam scores with within-gender standardized scores. Specifically,
\[
\tilde{S}_j = \alpha_{1j} \tilde{S}_{j}^{\text{Lang}} 
            + \alpha_{2j} \tilde{S}_{j}^{\text{Math}} 
            + \alpha_{3j} \tilde{S}_{j}^{\text{Sci}} 
            + \alpha_{4j} \tilde{S}_{j}^{\text{Hist}} 
            + \alpha_{5j} S_{j}^{\text{GPA}},
\]
where $\tilde{S}_{j}^{X}$ denotes the gender-normalized score for exam $X$, obtained by standardizing each student's score relative to the mean and standard deviation within their gender group.
Figure~\ref{fig:Stand} displays the estimated bounds on the STEM and Medicine gender gaps under this gender-based standardization scenario. The results show a substantial reduction in the upper bound of the gap—from 0.132 in the observed data to 0.098 in the counterfactual. For comparison, when implementing an alternative policy that assigns at least 80\% weight to the GPA, the corresponding upper bound is 0.121. Thus, gender-based standardization appears more effective at narrowing the STEM and Medicine gender gaps than increasing the relative weight on the GPA.

Figure~\ref{fig:quantile_Stand} examines how within-gender standardization affects the gender gap across the exam-performance distribution. At the 90th percentile of Mathematics and Language scores, the upper bound of the gap declines from 0.164 in the observed data to 0.079 under standardization; at the 80th percentile, it falls from 0.170 to 0.118. These patterns indicate that gender-based standardization of high-stakes exam scores can meaningfully reduce disparities in access to STEM \& Med programs, with the largest effects among high-achieving students. This echoes our first set of counterfactuals: policies that target the admissions scoring structure have their most pronounced impact in the upper tail.

\begin{figure}[h!]
   \centering
   \includegraphics[width=0.9\linewidth]{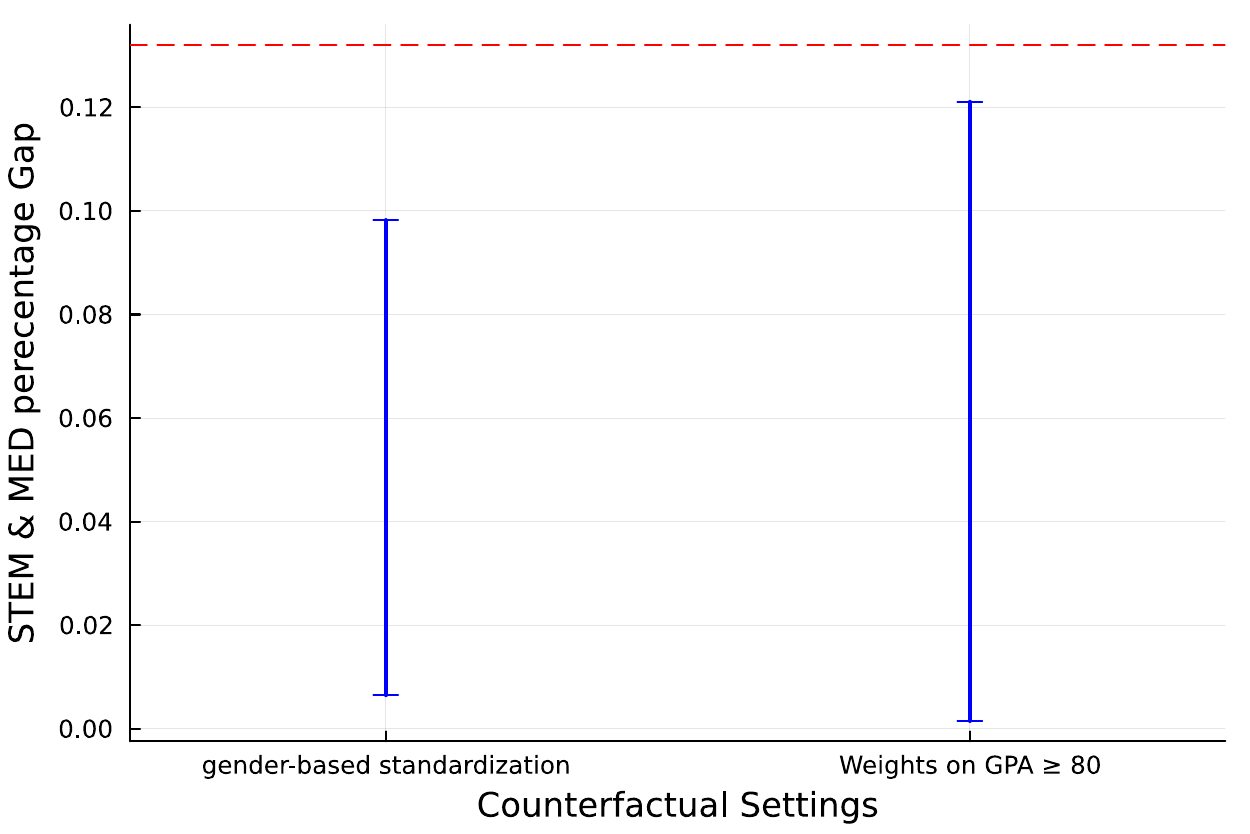}
   \caption{Counterfactual bounds on the STEM \& Med gender gaps under gender-based score standardization and under a GPA-weighting policy. The figure compares the observed bounds with those obtained when (i) exam scores are standardized within gender groups, and (ii) at least 80\% of the admission score weight is assigned to GPA.}
   \label{fig:Stand}
\end{figure}

\begin{figure}[h!]
   \centering
   \includegraphics[width=0.9\linewidth]{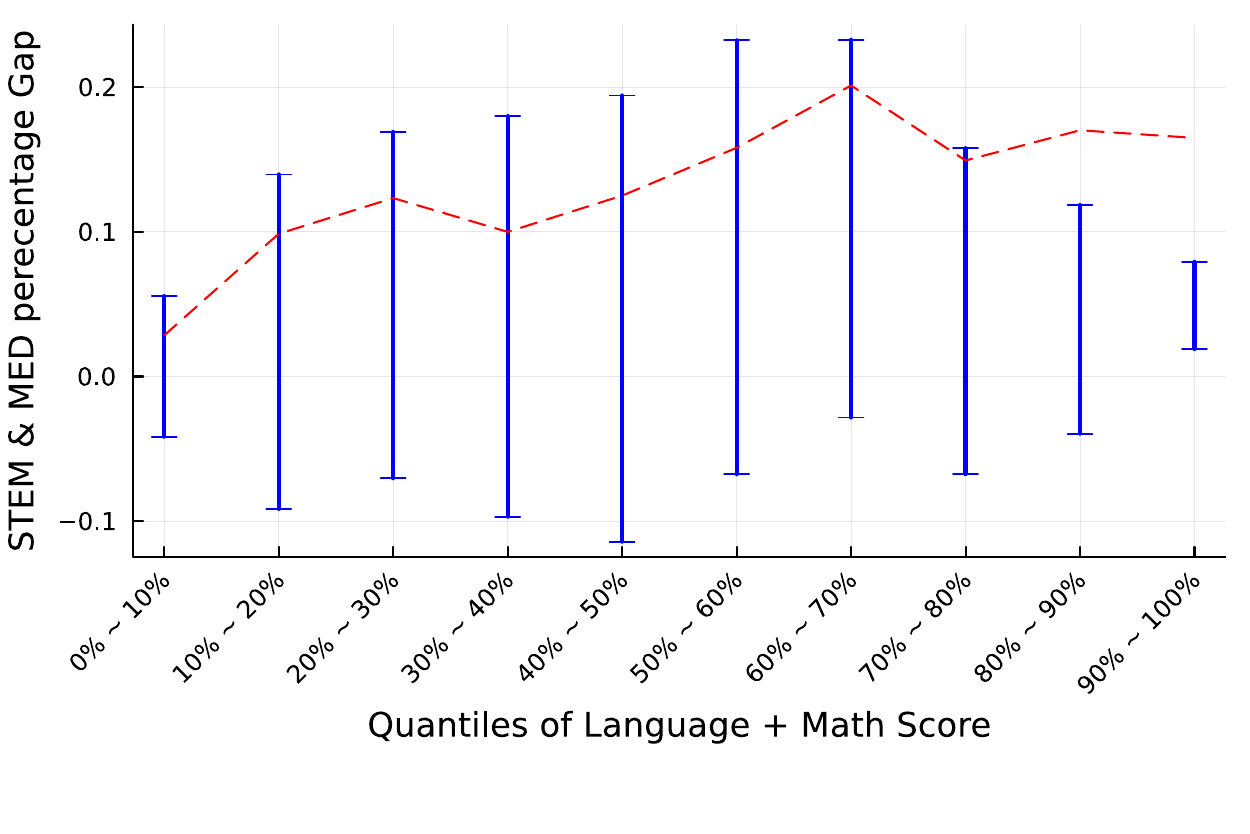}
   \caption{Counterfactual bounds on the STEM and Medicine gender gaps across the distribution of high-stakes exam performance. Gender-based standardization yields substantial gap reductions, especially among students in the top percentiles of Math and Language performance.}
   \label{fig:quantile_Stand}
\end{figure}

\section{Conclusion}

\noindent
Counterfactual analysis plays a central role in education market design and serves as a foundation for credible policy recommendations. 
However, empirical researchers often rely on highly parameterized models to ensure analytical tractability and deliver precise counterfactual predictions. 
This pursuit of tractability and precision frequently comes at the cost of credibility, as policy conclusions derived from overly restrictive or ad hoc assumptions may lack robustness. 
Hence, there exists a fundamental tension between the strength of the assumptions underlying counterfactual exercises and the credibility of the policy insights that follow from them.

\medskip
\noindent
This paper proposes a novel methodology for conducting counterfactual analysis within Gale--Shapley DA assignment mechanisms under weaker identifying assumptions and without imposing any ad hoc parameterization. 
The relaxation of standard modeling assumptions leads to an incomplete model, which makes the counterfactual analysis more challenging. 
Nevertheless, we demonstrate that computing such counterfactuals remains computationally feasible by leveraging the ILP representation of the DA mechanism combined with dimension-reduction techniques.
We apply our approach to evaluate two policy interventions designed to increase female enrollment in STEM fields in Chile: 
(i) reallocating admission weights from standardized exams toward GPA, and 
(ii) implementing within-gender standardization of exam scores. 
Despite the absence of any parametric structure, the resulting sharp bounds on the counterfactual outcomes are sufficiently informative to assess the potential impact of these policies on gender disparities in STEM and Medicine programs.

\medskip
\noindent
Our findings suggest that both policies could reduce gender gaps in admissions to STEM programs. These results indicate that adjustments to priority weights and within-gender normalization of exam scores can meaningfully mitigate the influence of gender-based performance differences and contribute to narrowing the gender gap in STEM enrollment.

A natural next question is: for a given policy objective, what choice of weights is optimal? Our analysis can be extended to address this question provided that reweighting does not alter students' effort. If changes in weights do affect effort, one must explicitly model endogenous effort (and the resulting equilibrium responses). We leave this extension to future research.

\clearpage
\bibliography{biblio}

\begin{appendices}

\section{Proofs of Results in Section \ref{sec:revealed_preference}}

\subsection{Proof of Proposition \ref{Prop:PS}}\label{proof:stability}
We use $F_\omega$ as a shorthand notation for the feasible set of student $\omega$. Moreover, the following proof, we utilize the following equality implied by the DA mechanism:
\begin{equation*}
	\mu(\omega) = \optchoice{\succ^R_\omega}{F_\omega}.
\end{equation*}

\paragraph{if part}
Suppose, for any student $\omega$, $\succ^Q_\omega$ is compatible with $\succ^{P_s}_\omega$. This means, for any $j\in F_\omega \setminus \{\mu(\omega)\}$, we have $\mu(\omega) \succ^Q_j j$. As a result, there cannot exist a $(\omega, j)$ pair such that $j\in F_\omega$ and $j \succ^Q_\omega \mu(\omega)$. In other words, there cannot exist a blocking pair for the matching $\mu$. As a result, $\mu$ is a stable matching.

\paragraph{only if part}
Suppose the $\mu$ is a stable matching. Supose, for the purpose of contradiction, there exist some $\omega$ such that $\succ^Q_\omega$ is not compatible with $\succ^{P_s}_\omega$. In this, there must exist some $j\in F_\omega$ such that $j \succ^Q_\omega \mu(\omega)$. However, such $(\omega, j)$ pair forms a blocking pair, because $j \succ^Q_\omega \mu(\omega)$, and because $j\in F_\omega$ implies student $\omega$'s priority score $S_{\omega j}$ is higher than that of someone to which program $j$ is matched under matching $\mu$. This contradicts $\mu$ being a stable matching.

\subsection{Proof of Proposition \ref{Prop:undominated_strategy} }\label{proof:undominated_strategy}

\paragraph{if part}
Suppose, for any student $\omega$, $\succ^Q_\omega$ is compatible with $\succ^{P_u}_\omega$. Fix an arbitrary student $\omega$. We want to show $\succ^R_\omega$ is undominated given this $\succ^Q_\omega$. We consider two scenarios.
\begin{itemize}
	\item Suppose $|\succ^R_\omega| = K$. Suppose, for the purpose of contradiction, there exists $\succ^{R'}_\omega$ that dominates $\succ^R_\omega$. Then, there must exist some set $F \subseteq \mathcal{J}_0$ with $0\in F$ such that $\optchoice{\succ^{R'}_\omega}{F} \succ^Q_\omega \optchoice{\succ^R_\omega}{F}$. Because $\succ^{P_u}_\omega = \succ^R_\omega$ in this case, the compatibility between $\succ^Q_\omega$ and $\succ^{P_u}_\omega$ implies the compatibility between $\succ^R_\omega$ and $\succ^Q_\omega$. Because $\succ^Q_\omega$ is compatible with $\succ^R_\omega$, $\optchoice{\succ^{R'}_\omega}{F} $ cannot be in the domain of $\succ^R_\omega$. Because $\text{domain}(\succ^{R'}_\omega)$ contains an element not in $\text{domain}(\succ^{R}_\omega)$, and because $|\succ^R_\omega| = K$, there must exist  some $j^*\in \text{domain}(\succ^{R}_\omega) \setminus \text{domain}(\succ^{R'}_\omega)$. Construct $F' = \{j^*, 0\}$. Then, $j^* = \optchoice{\succ^{R}_\omega}{F'}$ and $0 = \optchoice{\succ^{R'}_\omega}{F'}$. Because $\succ^Q_\omega$ is compatible with $\succ^R_\omega$, $j^* \succ^Q_\omega 0$ so that  $\optchoice{\succ^{R}_\omega}{F'} \succ^Q_\omega \optchoice{\succ^{R'}_\omega}{F'}$. This contradicts to $\succ^R_\omega$ being dominated by $\succ^R_\omega$. Hence, there cannot exist any other $\succ^{R'}_\omega$ that dominates $\succ^R_\omega$.

	\item Suppose $|\succ^R_\omega| < K$. In this case, the compatibility between $\succ^Q_\omega$ and $\succ^{P_u}_\omega$ implies that the domain of $\succ^R_\omega$ includes all acceptable programs, i.e., the set of programs that student $\omega$ prefers over the outside option. This obseration, together with the fact $\succ^Q_\omega$ is compatible with $\succ^{P_u}_\omega$, implies that, for any $F\subseteq \mathcal{J}_0$ with $0\in F$, we have $\optchoice{\succ^R_\omega}{F} = \optchoice{\succ^{P_u}_\omega}{F} = \optchoice{\succ^Q_\omega}{F}$. Hence, there cannot exist any other $\succ^{R'}_\omega$ that dominates $\succ^R_\omega$.
	
\end{itemize}

\paragraph{only if part}
Suppose, for any student $\omega$, $\succ^R_\omega$ is undomianted given $\succ^Q_\omega$. Fix an arbitrary student $\omega$. We want to show that $\succ^Q_\omega$ and $\succ^{P_u}_\omega$ must be compatible. Again, we consider two scenarios.
\begin{itemize}
\item Suppose $|\succ^R_\omega| = K$. In this case, $\succ^{P_u}_\omega = \succ^{R}_\omega$. Suppose, for the purpose of contradiction, $\succ^R_\omega$ and $\succ^Q_\omega$ are not compatible. There there must exist $j_1, j_2$ with $j_1 \succ^R_\omega j_2$ and $j_2 \succ^Q_\omega j_1$. Consider the following two scenarios:
	\begin{itemize}
	\item Suppose $j_2 = 0$. Construct an alternative $\succ^{R'}_\omega$ the same as $\succ^R_\omega$ except dropping all non-acceptable programs (which includes $j_1$). Construct $F' = \{j_1, 0\}$. By construction, $\optchoice{\succ^{R'}_\omega}{F'} = j_2 \succ^Q_\omega j_1 = \optchoice{\succ^R_\omega}{F'}$. For any $F\subseteq \mathcal{J}_0$ with $0\in F$, we have either $\optchoice{\succ^{R}_\omega}{F} = \optchoice{\succ^{R'}_\omega}{F}$ or $\optchoice{\succ^{R}_\omega}{F} \neq \optchoice{\succ^{R'}_\omega}{F}$. In the latter case,  we must have $0 \succ^Q_\omega \optchoice{\succ^{R}_\omega}{F}$ and either $\optchoice{\succ^{R'}_\omega}{F} = 0$ or $\optchoice{\succ^{R'}_\omega}{F}\succ^Q_\omega 0$, because the only difference between $\succ^R_\omega$ and $\succ^{R'}_\omega$ is that $\succ^{R'}_\omega$ drops all the non-acceptable programs. Hence, $\succ^{R'}_\omega$ dominates $\succ^R_\omega$, contradicting to $\succ^R_\omega$ being undominated.
	\item Suppose $j_2 \neq 0$. Let $\succ^{R'}_\omega$ to be an ROL that has the same domain as $\succ^{R}_\omega$ yet order all programs according to $\succ^Q_\omega$. Construct $F' = \{j_1, j_2, 0\}$. Then, $\optchoice{\succ^{R'}_\omega}{F'} = j_2 \succ^{Q}_\omega j_1 = \optchoice{\succ^{R}_\omega}{F'}$. For any $F\subseteq \mathcal{J}_0$ with $0\in F$, we have either $\optchoice{\succ^{R}_\omega}{F} = \optchoice{\succ^{R'}_\omega}{F}$ or $\optchoice{\succ^{R}_\omega}{F} \neq \optchoice{\succ^{R'}_\omega}{F}$. In the latter case,  we must have $\optchoice{\succ^{R'}_\omega}{F} \succ^Q_\omega \optchoice{\succ^{R}_\omega}{F}$ because $\succ^{R'}_\omega$ orders $\optchoice{\succ^{R'}_\omega}{F}$ and $\optchoice{\succ^{R}_\omega}{F}$ according to $\succ^Q_\omega$. Hence,  $\succ^{R'}_\omega$ dominates $\succ^R_\omega$, contradicting to $\succ^R_\omega$ being undominated.
	\end{itemize}
Since we get contradiction in both cases, we must have $\succ^R_\omega$ and $\succ^Q_\omega$ to be compatible. 

\item Suppose $|\succ^R_\omega| < K$. Suppose, for the purpose of contradiction, $\succ^{P_u}_\omega$ and $\succ^Q_\omega$ are not compatible. Then, there must exist some $j_1$ and $j_2$ such that $j_1 \succ^{P_u}_\omega j_2$ and $j_2 \succ^Q_\omega j_1$. Consider three scenarios:
	\begin{itemize}
		\item Suppose $j_1 = 0$. Construct a $\succ^{R'}_\omega$ the same as $\succ^R_\omega$ except adding $j_2$ to the position just above the outside option, i.e., $j \succ^{R'}_\omega j'$ for any $j,j'$ with $j\succ^R_\omega j'$, and $j \succ^{R'}_\omega j_2 \succ^{R'}_\omega 0$ for any $j\in \text{domain}(\succ^R_\omega)\setminus\{0\}$. 
Construct $F' = \{j_2, 0\}$. Because $0 \succ^{P_u}_\omega j_2$, we know $j_2$ is not in the domain of $\succ^R_\omega$. Thus, $j_2 = \optchoice{\succ^{R'}_\omega}{F'} \succ^Q_\omega 0  = \optchoice{\succ^R_\omega}{F'}$.
For any $F\subseteq \mathcal{J}_0$ with $0\in F$, either $\optchoice{\succ^R_\omega}{F} = \optchoice{\succ^{R'}_\omega}{F}$ or $\optchoice{\succ^R_\omega}{F} \neq \optchoice{\succ^{R'}_\omega}{F}$. By the construction of  $\succ^{R'}_\omega$, whenever $\optchoice{\succ^R_\omega}{F} \neq \optchoice{\succ^{R'}_\omega}{F}$, we must have $\optchoice{\succ^{R'}_\omega}{F} = j_2 \succ^Q_\omega 0 = \optchoice{\succ^R_\omega}{F}$. Hence, $\succ^{R}_\omega$ is dominated by $\succ^{R'}_\omega$ contradicting to the hypoethesis that $\succ^R_\omega$ is undominated.
\item Suppose $j_1\neq 0$ and $j_2 = 0$. In this case, we have $j_1 \in \text{domain}(\succ^{R}_\omega)$. Construct a $\succ^{R'}_\omega$ the same as $\succ^{R}_\omega$ except dropping all non-acceptable programs (which includes $j_1$). Following the same argument we give in the case where $|\succ^{R}_\omega| = K$ and $j_2 = 0$, we can show that $\succ^R_\omega$ is dominated by $\succ^{R'}_\omega$, which leads to a contradiction.

\item Suppose $j_1 \neq 0$ and $j_2 \neq 0$. Let $\succ^{R'}_\omega$ to be an ROL that has the same domain as $\succ^{R}_\omega$ yet order all programs according to $\succ^Q_\omega$. Following the same argument we give in the case where $|\succ^{R}_\omega| = K$ and $j_2 \neq 0$, we can show that $\succ^R_\omega$ is dominated by $\succ^{R'}_\omega$, which leads to a contradiction.
	\end{itemize}
\end{itemize}
Thus, we always have a contradiction if $\succ^{P_u}_\omega$ is not compatible with $\succ^Q_\omega$. This completes the proof.

\subsection{Proof of Lemma \ref{thm:join}}
This result is a known result in the literature of orders and lattice, though described in different notations. Here, we prove this result for completeness.

First of all, by our definition of the compatibility, if $\succ^{P}$ and $\succ^{P'}$ are not compatible with each other, there cannot exist a total order compatible with both. 

Next, suppose $\succ^P$ and $\succ^{P'}$ are compatible. We want to show a total order $\succ$ is compatible with both if and only if $\succ$ is compatible with their join. 

\paragraph{if part} If $\succ$ is compatible with the join of $\succ^P$ and $\succ^{P'}$, then $\succ$ is compatible with $\succ^P$ because the join of $\succ^P$ and $\succ^{P'}$ includes all the binary relations in $\succ^P$. Similar argument shows that $\succ$ is compatible with $\succ^{P'}$. 

\paragraph{only if part} Suppose $\succ$ is compatible with both $\succ^{P}$ and $\succ^{P'}$. Let $j$ and $j'$ be any two programs with $j(\succ^P \vee \succ^{P'}) j'$. We want to show $j \succ j'$. By the definition of $\succ^P \vee \succ^{P'}$, there exists $j_1, ..., j_K$ such that $j_1 = j$, $j_K = j'$ and, for each $k=1,...,K-1$, $j_k \succ^P j_{k+1}$ or $j_{k} \succ^{P'} j_{k+1}$. Because $\succ$ is compatible with both $\succ^{P}$ and $\succ^{P'}$, we must have that, for each $k=1,...,K-1$, $j_k \succ j_{k+1}$. Because $\succ$ is transitive as a total order, we must have $j \succ j'$. This completes the proof.

\subsection{Proof of Proposition \ref{Prop:PS_undominated_strategy}}
We only need to prove that $\succ^{P_s}_\omega$ and $\succ^{P_u}_\omega$ are compatible with each other. The second part of the proposition is a corollary of Lemma \ref{thm:join}, Proposition \ref{Prop:PS}, and Proposition \ref{Prop:undominated_strategy}.

Fix an arbitrary student $\omega$. Construct a total order $\succ_\omega$ as follows:
\begin{itemize}
\item whenever $j \succ^R_\omega j'$, let $j \succ_\omega j'$. 
\item whenever $j \notin \text{domain}(\succ^R_\omega)$, let $0 \succ_\omega j$.
\item for any two $j_1$ and $j_2$ not in $\text{domain}(\succ^R_\omega)$ with $j_1 > j_2$, let $j_1 \succ_\omega j_2$.
\end{itemize}
This $\succ_\omega$ is a total order, where all programs in the domain of $\succ^R_\omega$ are at the top of $\succ_\omega$ and are ranked according to $\succ^R_\omega$, and all the programs outside of the domain of $\succ^R_\omega$ are at the bottom of $\succ_\omega$ and are ranked according to the integer order. 

This constructed $\succ_\omega$ is compatible with $\succ^{P_s}_\omega$, because the DA mechanism implies that 
\begin{equation*}
	\mu(\omega) = \optchoice{\succ^{R}_\omega}{F_\omega}. 
\end{equation*}
and $\mu(\omega)\in \text{domain}(\succ^R_\omega)$ so $\mu(\omega)$ must rank higher in $\succ_\omega$ than all the other feasible programs. This constructed $\succ_\omega$ is also compatible with $\succ^{P_u}_\omega$ as $\succ_\omega$ ranks all programs within the domain of $\succ^R_\omega$ in the same way as $\succ^R_\omega$ and all the programs outside of the domain of  $\succ^R_\omega$ are ranked below the programs within the domain of  $\succ^R_\omega$ by $\succ_\omega$. Hence, $\succ^{P_s}_\omega$ and $\succ^{P_u}_\omega$ are compatible with each other.

\section{Proofs and Additional Results for Section \ref{sec:counterfactual}}

\subsection{Proof of Theorem \ref{thm:linear_charact_stable}}
Our proof builds on Lemma 1 in \cite{Baiou2000}, which is stated in the following using our notations:
\begin{lemma}[Lemma 1 in \cite{Baiou2000}]\label{lem:baiou_lemma}
A matching matrix $d$ is a stable matching given preferences $(\succ_\omega: \omega\in \Omega_N)$, priority scores $(\tilde{S}_\omega: \omega \in \Omega_N)$ and capacities $\tilde{q}^N$ if and only if $d$ satisfies the following condition:
    \begin{equation}\label{eq:simple_stable_matching_known_prefernece}
        \forall \omega \in \Omega_N,\; \forall j\in \mathcal{J}_0,\quad 
        \tilde{q}^N_j d_{\omega j} 
        + \tilde{q}^N_j \sum_{l:\ l\succ_{\omega} j} d_{\omega l} 
        + \sum_{\omega':\,\tilde{S}_{\omega' j}>\tilde{S}_{\omega j}} d_{\omega' j}
        \geq \tilde{q}^N_j.
    \end{equation}
\end{lemma}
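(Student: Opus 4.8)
The plan is to establish the stated characterization \emph{pair by pair}: for each fixed $(\omega, j) \in \Omega_N \times \mathcal{J}_0$, I would show that inequality \eqref{eq:simple_stable_matching_known_prefernece} holds if and only if $(\omega, j)$ is not a blocking pair under the preferences $(\succ_\omega)$, priority scores $\tilde{S}$, and capacities $\tilde{q}^N$. Since a matching matrix $d$ (satisfying the conditions of Definition~\ref{def:MM}) is stable in the sense of Definition~\ref{ass:stable} precisely when no $(\omega, j)$ blocks it, proving the pointwise equivalence and then quantifying over all pairs delivers the lemma. The equivalence itself I would obtain from a three-way case split on where $\omega$ is assigned relative to $j$.

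First I would record the behavior of the left-hand side of \eqref{eq:simple_stable_matching_known_prefernece} for a fixed pair. Because $d$ is a matching matrix, $\omega$ is matched to exactly one program $\mu(\omega)$, so exactly one of three mutually exclusive cases holds. \emph{(i)} If $\mu(\omega) = j$, then $d_{\omega j} = 1$, the first term equals $\tilde{q}^N_j$, and the inequality is immediate. \emph{(ii)} If $\mu(\omega) \succ_\omega j$, the unique program carrying $\omega$'s match contributes to the second sum, that term equals $\tilde{q}^N_j$, and the inequality again holds. \emph{(iii)} If $\mu(\omega) \neq j$ and $j \succ_\omega \mu(\omega)$, the first two terms vanish and the inequality collapses to $\sum_{\omega':\,\tilde{S}_{\omega' j} > \tilde{S}_{\omega j}} d_{\omega' j} \geq \tilde{q}^N_j$. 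Writing $n_j$ for the number of students matched to $j$ and noting $n_j \le \tilde{q}^N_j$ by the capacity constraint, the reduced inequality forces equality throughout, so it holds iff $j$ is filled to capacity \emph{and} every student matched to $j$ has strictly higher priority at $j$ than $\omega$.

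Finally I would translate case \emph{(iii)} into the language of stability. Case \emph{(iii)} is exactly the configuration in which $\omega$ strictly prefers $j$ to her own assignment, making $(\omega, j)$ a candidate block; the reduced inequality fails precisely when either $j$ has a vacant seat (a no-waste violation, since $\omega$ prefers $j$) or some student matched to $j$ has priority no greater than $\omega$ at $j$ (a justified-envy violation). Hence the inequality at $(\omega, j)$ holds iff $(\omega, j)$ does not block, and quantifying over all pairs yields stability iff \eqref{eq:simple_stable_matching_known_prefernece} holds everywhere. I expect the main subtlety to be the outside-option coordinate $j = 0$, which must encode individual rationality: when $\omega$ is matched to an unacceptable program we have $0 \succ_\omega \mu(\omega)$, so case \emph{(iii)} applies with $j = 0$, and since $\omega$ is not among the students matched to $0$ the relevant sum is bounded by $n_0 \le N - 1 < N = \tilde{q}^N_0$, so the inequality fails and correctly flags the violation. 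Because priority ties occur with probability zero under the continuity assumption on scores, the strict inequalities in the third sum are unambiguous and no tied-priority boundary cases arise.
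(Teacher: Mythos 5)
Your proof is correct. One point of comparison is worth making explicit: the paper does not prove this lemma at all --- it is imported verbatim as Lemma~1 of \cite{Baiou2000} and used as the launching point for the proof of Theorem~\ref{thm:linear_charact_stable}. So there is no in-paper argument to match against; what you have written is a self-contained verification of the cited result, and it is the natural one (essentially the argument in Baiou--Balinski, recast in the cutoff/priority-score language of this paper). Your three-way case split is exhaustive because $\succ_\omega$ is a total order on $\mathcal{J}_0$ and $d$ matches $\omega$ to exactly one program, and the pointwise reduction in case \emph{(iii)} is right: the third sum counts only students matched to $j$ with strictly higher priority than $\omega$, hence is at most $n_j \le \tilde{q}^N_j$, so the inequality holds at $(\omega,j)$ iff $j$ is exactly full \emph{and} every occupant outranks $\omega$ --- precisely the conjunction of no-waste and no-justified-envy for that pair. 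Your treatment of the two subtleties is also the correct one. First, the $j=0$ coordinate does encode individual rationality, and your bound $\sum_{\omega':\,\tilde{S}_{\omega' 0}>\tilde{S}_{\omega 0}} d_{\omega' 0} \le N-1 < N = \tilde{q}^N_0$ is the right way to see that the inequality fails whenever $0 \succ_\omega \mu(\omega)$ with $\mu(\omega)\neq 0$; note this works regardless of how the (arbitrary) outside-option scores are set, which matters since the paper explicitly leaves them unpinned. Second, ties: even without invoking the continuity of scores, a tie $\tilde{S}_{\omega' j} = \tilde{S}_{\omega j}$ for an occupant $\omega'$ simply drops $\omega'$ from the strict-inequality sum and makes the constraint fail, which matches the strict requirement $S_{\omega' j} > S_{\omega j}$ in Definition~\ref{ass:stable}(iii) --- so your proof is actually tie-robust and the appeal to continuity is unnecessary. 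The only implicit adaptation you should flag is that Definition~\ref{ass:stable}'s no-waste clause is stated for a continuum (``positive measure of students''), and your argument uses its finite-population analogue (``some student''), which is the version relevant to Section~\ref{sec:counterfactual} where the lemma is deployed.
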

Let $d$ be an arbitary matching in $\mathcal{D}(\mathcal{C}^N, \tilde{q}^N, \tilde{S})$. We want to prove that it satisfies the inequalities in \eqref{eq:simple_stable_matching_general}. 

Since $d\in \mathcal{D}(\mathcal{C}^N, \tilde{q}^N, \tilde{S})$, by the definition of $\mathcal{D}(\mathcal{C}^N, \tilde{q}^N, \tilde{S})$, there exist preferences $(\succ_\omega: \omega\in \Omega)$ such that $\succ_\omega \in \mathcal{C}_\omega$ and $d$ is a stable matching given $(\succ_\omega: \omega\in \Omega_N)$, $(\tilde{S}_\omega: \omega \in \Omega_N)$ and $\tilde{q}^N$. By Lemma 
\ref{lem:baiou_lemma}, $d$ satisfies the inequalities in \eqref{eq:simple_stable_matching_known_prefernece}. Because $\succ_\omega \in \mathcal{C}_\omega$, we know $l \succ_\omega j$ would imply that there exists some $\tilde{\succ}_\omega \in \mathcal{C}_\omega$ such that there is $l \tilde{\succ}_\omega j$. As a result, 
\begin{equation*}
\tilde{q}^N_j \sum_{l:\ l\succ_{\omega} j} d_{\omega l} \le \tilde{q}^N_j \sum_{l:\,\exists\,\succ_{\omega}\in\mathcal{C}_{\omega}\text{ s.t. }l\succ_{\omega} j} d_{\omega l}. 
\end{equation*}
Hence, $d$ should also satisfy conditions in \eqref{eq:simple_stable_matching_general}. 

\subsection{Proof of Theorem \ref{thm:iff_stable_matching}}
First of all, note that, for any $j$ and $l$ in $\mathcal{J}_0$, there exists some $\succ_\omega$ compatible with $\succ^P_\omega$ such that $l \succ_\omega j$, if and only if there is no $j \succ^P_\omega l$. As a result, conditions in \eqref{eq:simple_stable_matching_general} is equivalent to those in \eqref{eq:simple_stable_matching_partial_order}. Thus, we only need to prove the \emph{if} part.

Let $d$ be a matching matrix that satisfies \eqref{eq:simple_stable_matching_partial_order}. We are going to show $d\in \mathcal{D}(\mathcal{C}^N, \tilde{q}^N, \tilde{S})$ by constructing a total order $\succ_\omega$ for each student $\omega$ such that $\succ_\omega \in \mathcal{C}_\omega$ and $d$ is a stable matching given $(\succ_\omega: \omega\in \Omega_N)$, $(\tilde{S}_\omega: \omega \in \Omega_N)$ and $\tilde{q}^N$.

To construct this total order $\succ_\omega$ for each student $\omega$, define $j_\omega$ as the program to which student $\omega$ is matched in $d$, i.e. $d_{\omega j} = 1$ for $j = j_\omega$. Define $\mathcal{J}^+_\omega$ and $\mathcal{J}^-_\omega$ as follows:
\begin{equation*}
	\mathcal{J}^+_\omega \coloneqq \{j \in \mathcal{J}_0: j \succ^P_\omega j_\omega \}\text{ and }\mathcal{J}^-_\omega \coloneqq \{j \in \mathcal{J}_0: j \neq j_\omega \text{ and } j\notin \mathcal{J}^+_\omega \}.
\end{equation*}
Let $\succ^{+}_\omega$ be a total order on $\mathcal{J}^+_\omega$ that is compatible with $\succ^P_\omega$ restricted to $\mathcal{J}^+_\omega$. Similarly, let $\succ^{-}_\omega$ be a total order on $\mathcal{J}^-_\omega$ that is compatible with $\succ^P_\omega$ restricted to $\mathcal{J}^-_\omega$. Such $\succ^{+}_\omega$ and $\succ^{-}_\omega$ always exist.

Then, construct a total order $\succ_\omega$ as follows: for any $j$ and $j'$ in $\mathcal{J}_0$ with $j\neq j'$, set $j \succ_\omega j'$ if and only if one of the following conditions hold:
\begin{itemize}
\item $j\in \mathcal{J}^+_\omega$ and $j'\notin \mathcal{J}^+_\omega$
\item $j = j_\omega$, $j' \in \mathcal{J}^-_\omega$
\item $j\in \mathcal{J}^+_\omega$, $j' \in \mathcal{J}^+_\omega$ and $j \succ^{+}_\omega j'$
\item $j\in \mathcal{J}^-_\omega$, $j'\in \mathcal{J}^-_\omega$ and $j \succ^-_\omega j'$.
\end{itemize}
This $\succ_\omega$ ranks all programs within $\mathcal{J}^+_\omega$ above $j_\omega$ which in turn is ranked higher than all the programs in $\mathcal{J}^-_\omega$. Among programs within $\mathcal{J}^+_\omega$, it ranks them by $\succ^+_\omega$. Among programs within $\mathcal{J}^-_\omega$, it ranks them by $\succ^-_\omega$. Note that, by construction, this $\succ_\omega$ is compatible with $\succ^P_\omega$. Moreover, $\mathcal{J}^+_\omega$ is the set of programs that student $\omega$ prefer over her matching $j_\omega$, i.e., $\mathcal{J}^+_\omega = \{j\in \mathcal{J}_0: j \succ_\omega j_\omega\}$.

Next, We are going to show that $d$ is a stable matching given the total orders $(\succ_\omega: \omega\in \Omega_N)$ constructed above as prefernces. Fix an arbitrary $\omega$. For program $j = j_{\omega}$, inequality \eqref{eq:simple_stable_matching_known_prefernece} holds because $d_{\omega j} = 1$ in this case. For program $j\in \mathcal{J}^-_\omega$, inequality \eqref{eq:simple_stable_matching_known_prefernece} holds because $j_\omega \succ_\omega j$ and $d_{\omega j_\omega} = 1$. For program $j\in \mathcal{J}^+_\omega$, we have the following results hold:
\begin{itemize}
	\item $d_{\omega j} = 0$ because $j \neq j_\omega$;
	\item  $d_{\omega l} = 0$ for all $l$ with $l\succ_\omega j$, because  $\{l: l\succ_\omega j\} \subseteq \mathcal{J}^+_\omega$ and $j_\omega \notin \mathcal{J}^+_\omega$;
	\item for any $l$ in  $\{l: l\neq j$ and no $j \succ^P_\omega l\}$, we have $d_{\omega l} = 0$. This is because $j_\omega \notin $  $\{l: l\neq j$ and no $j \succ^P_\omega l\}$, since there is $j \succ^P_\omega j_\omega$.
\end{itemize}
Combining all the above results implies that 
\begin{equation*}
\tilde{q}^N_j d_{\omega j} + \tilde{q}^N_j \sum_{l:\ l\succ_{\omega} j} d_{\omega l} = 
    \tilde{q}^N_j d_{\omega j} + \tilde{q}^N_j \sum_{l:\, l\neq j\text{ and no }j\succ^P_\omega l} d_{\omega l} = 0.
\end{equation*}
As a result, the left-hand side of \eqref{eq:simple_stable_matching_partial_order} is equal to the left-hand side of \eqref{eq:simple_stable_matching_known_prefernece} for this given $(j, \omega)$ pair. Because $d$ satisfies \eqref{eq:simple_stable_matching_partial_order} for this $(j, \omega)$ pair, it must also satisfy \eqref{eq:simple_stable_matching_known_prefernece} for this $(j, \omega)$ pair.

We have verified that $d$ satisfies  \eqref{eq:simple_stable_matching_known_prefernece} for all programs and for an arbitary student. This shows $d$ is a stable matching given the total orders $(\succ_\omega: \omega\in \Omega_N)$. As a result, $d\in \mathcal{D}(\mathcal{C}^N, \tilde{q}^N, \tilde{S})$.

\subsection{Proof of Theorem \ref{thm:DA_convergence_properties}}\label{proof:DA_program_proposing}

Fix an arbitrary $j\in \mathcal{J}_0$. Fix an arbitrary iteration $k \ge 0$. If $\sum_\omega \indicator[S_{\omega j} \geq c^{(k)}_{j},\ d^{(k)}_{\omega j} = 1] + \sum_\omega \indicator[S_{\omega j} < c^{(k)}_{j}] > q^N_j$, $c^{(k+1)}_j \le c^{(k)}_j$ by construction. If $\sum_\omega \indicator[S_{\omega j} \geq c^{(k)}_{j},\ d^{(k)}_{\omega j} = 1] + \sum_\omega \indicator[S_{\omega j} < c^{(k)}_{j}] \le q^N_j$, we have $c^{(k+1)}_j = 0$, which also imply $c^{(k+1)}_j \le c^{(k)}_j$. Thus, for any $k \ge 0$, we always have $c^{(k+1)}_j \le c^{(k)}_j$. Since priority scores are bounded between $0$ and $1$, $(c^{(k)}_j)_k$ is also bounded between $0$ and $1$. Therefore, as a bounded nonincreasing sequence, $(c^{(k)}_j)_k$ would converge to a limit, denoted as $\overline{c}_j$. Moreover, because $c^{(k)}_j$ in nonincreasing in $k$, and because $c^{(k)}_j$ can only take values within $\{S_{j,\omega}: \omega\in \Omega_N\}\cup \{1, 0\}$, $c^{(k)} = (c^{(k)}_j: j\in \mathcal{J}_0)$ would converge within at most $|\mathcal{J}_0| \times (N + 2)$ iterations. This completes the proof of the first part of the theorem.

Fix an arbitrary stable matching and its associated equilibirum cutoff vector $c$. We want to prove that, for each $k\ge 0$ and each $j\in \mathcal{J}_0$, $c^{(k)}_j\ge c_j$. We prove the this result by induction:
\begin{itemize}
	\item For each $j$, $c^{(k)}_j \ge c_j$ trivially for $k = 0$ because $c_j^{(0)} = 1$ and priority scores are bounded within $[0, 1]$. 
	\item Suppose that, for each $j\in \mathcal{J}_0$, $c^{(k)}_j \ge c_j$ for $k$. We want to prove that for each $j\in \mathcal{J}_0$, $c^{(k + 1)}_j \ge c_j$. Consider the following two scenarios:
	\begin{itemize}
		\item for $j$ with $c_j = 0$, we have $c^{(k + 1)}_j \ge c_j$ trivially, because priority scores are distributed within $[0, 1]$.
		\item for $j$ with $c_j > 0$, the fact $c$ is the cutoff associated to a stable matching implies the following condition:
			\begin{equation}\label{eq:q39nddas9}
				\sum_\omega \indicator(\optchoice{\succ^Q_\omega}{F(S_\omega, c)} = j) = q^N_j
			\end{equation}
			where $F(S_\omega, c)$ is the feasible set for student $\omega$ in the fixed stable matching, and $\indicator(\optchoice{\succ^Q_\omega}{F(S_\omega, c)} = j)$ equals $1$ if and only if student $\omega$ is matched to program $j$ in this fixed stable matching. Because $\optchoice{\succ^Q_\omega}{F(S_\omega, c)} = j$ implies $S_{\omega j}\ge c_j$, we can rewrite \eqref{eq:q39nddas9} as 
			\begin{equation*}
			\sum_\omega \indicator(S_{\omega j}\ge c_j,\  \optchoice{\succ^Q_\omega}{F(S_\omega, c)} = j) = q^N_j
			\end{equation*}
which can be rewritten as
			\begin{multline}\label{eq:moaq2093n}
			\sum_\omega \indicator(S_{\omega j}\ge c^{(k)}_j,\  \optchoice{\succ^Q_\omega}{F(S_\omega, c)} = j) \\
			+ \sum_\omega \indicator(S_{\omega j}\in  [c_j, c^{(k)}_j), \optchoice{\succ^Q_\omega}{F(S_\omega, c)} = j) = q^N_j.
			\end{multline}
	\end{itemize}
	Here, $[c_j, c^{(k)}_j)$ is well-defined because we know $c_j \le c^{(k)}_j$. 

	Because $c_{j'} \le c^{(k)}_{j'}$ for any $j'\in \mathcal{J}_0$, we know for any $j'\neq j$, $j'\in F(S_\omega, c^{(k)})$ implies $j'\in F(S_\omega, c)$. Hence, for any $\omega$, we know
	\begin{equation}\label{eq:nafoiefa12}
\indicator[S_{\omega j} \geq c^{(k)}_{j},\ d^{(k)}_{\omega j} = 1] \ge \indicator(S_{\omega j}\ge c^{(k)}_j,\  \optchoice{\succ^Q_\omega}{F(S_\omega, c)} = j).
\end{equation}
Thus, \eqref{eq:moaq2093n} and \eqref{eq:nafoiefa12} imply 
	\begin{equation*}
		\sum_\omega \indicator(S_{\omega j}\ge c^{(k)}_j,\  d^{(k)}_{\omega j} = 1)  + \sum_\omega \indicator(S_{\omega j}\in  [c_j, c^{(k)}_j)) \ge q^N_j.
	\end{equation*}
	As a result, we know $c^{(k+1)}_j$ is either equal to $c^{(k)}_j$ or the $\tau$-th highest value in $\{S_{\omega j}: \omega\in \Omega_N, S_{\omega j}< c^{(k)}_j\}$ where $\tau = q^{N}_j - \sum_\omega \indicator(S_{\omega j} \geq c^{(k)}_{j},\ d^{(k)}_{\omega j} = 1)$. In the latter case, $c^{(k+1)}_j$ satisfies:
	\begin{equation}\label{eq:nfoq2i3en}
	\sum_\omega \indicator[S_{\omega j} \geq c^{(k)}_{j},\ d^{(k)}_{\omega j} = 1] + \sum_\omega \indicator[S_{\omega j} \in [c^{(k+1)}_j,  c^{(k)}_{j})] = q^N_j
	\end{equation}

Combining \eqref{eq:nafoiefa12}  and \eqref{eq:nfoq2i3en} implies 
\begin{equation*}
\sum_\omega \indicator[S_{\omega j} \in [c^{(k+1)}_j,  c^{(k)}_{j})] \le\sum_\omega \indicator(S_{\omega j}\in  [c_j, c^{(k)}_j)).
\end{equation*}
This implies $c_j^{(k+1)} \ge c_j$. 
\end{itemize}
We have established that $c_j^{(k)} \ge c_j$ for any $k\ge 0$. Because $\overline{c}_j$ is $\lim_{k\to \infty}c_j^{(k)}$, we conclude $\overline{c}_j \ge c_j$. This completes the proof.

\subsection{Proof of Theorem \ref{thm:upper_bound_simple}}
The Proof of Theorem \ref{thm:upper_bound_simple} is similar to that of Theorem \ref{thm:DA_convergence_properties}. 

Fix an arbitrary $j\in \mathcal{J}_0$. Following the same argument as the one we use to prove the first part of Theorem \ref{thm:DA_convergence_properties}, we can show that $\overline{c}^{(k)}_j$ is nonincreasing in $k$ and it converges to some limit denoted as $\overline{c}_j$ as $k$ increases to some finite number. This completes the Proof of the first part of the theorem.

The Proof of the second part of the theorem is also similar to that of Theorem \ref{thm:DA_convergence_properties}. Fix an arbitrary matching $d\in \mathcal{D}(\mathcal{C}^N, q^N, S)$ and its associated equilibirum cutoff vector $c$. We want to prove that, for each $k\ge 0$ and each $j\in \mathcal{J}_0$, $\overline{c}^{(k)}_j\ge c_j$. We prove the this result by induction:
\begin{itemize}
	\item For each $j$, $\overline{c}^{(k)}_j \ge c_j$ trivially for $k = 0$ because $\overline{c}_j^{(0)} = 1$ and priority scores are bounded within $[0, 1]$. 
	\item Suppose that, for each $j\in \mathcal{J}_0$, $\overline{c}^{(k)}_j \ge c_j$ for $k$. We want to prove that for each $j\in \mathcal{J}_0$, $\overline{c}^{(k + 1)}_j \ge c_j$. Consider the following two scenarios:
	\begin{itemize}
		\item for $j$ with $c_j = 0$, we have $\overline{c}^{(k + 1)}_j \ge c_j$ trivially, because priority scores are distributed within $[0, 1]$.
		\item for $j$ with $c_j > 0$, we must have the following condition hold for some $(\succ^Q_\omega:\omega\in \Omega_N)$ with $\succ^Q_\omega \in \mathcal{C}_\omega$ for each $\omega\in \Omega_N$:
			\begin{equation}\label{eq:aq39nddas9}
				\sum_\omega \indicator(\optchoice{\succ^Q_\omega}{F(S_\omega, c)} = j) = q^N_j
			\end{equation}
			where $F(S_\omega, c)$ is the feasible set for student $\omega$ in the fixed stable matching, and $\indicator(\optchoice{\succ^Q_\omega}{F(S_\omega, c)} = j)$ equals $1$ if and only if student $\omega$ is matched to program $j$ in this matching $d$. Because $\optchoice{\succ^Q_\omega}{F(S_\omega, c)} = j$ implies $S_{\omega j}\ge c_j$, we can rewrite \eqref{eq:aq39nddas9} as 
			\begin{equation*}
			\sum_\omega \indicator(S_{\omega j}\ge c_j,\  \optchoice{\succ^Q_\omega}{F(S_\omega, c)} = j) = q^N_j
			\end{equation*}
			which further implies 
			\begin{multline}\label{eq:23mmoaq2093n}
			\sum_\omega \indicator(S_{\omega j}\ge \overline{c}^{(k)}_j,\  \optchoice{\succ^Q_\omega}{F(S_\omega, c)} = j) \\
			+ \sum_\omega \indicator(S_{\omega j}\in  [c_j, \overline{c}^{(k)}_j), \optchoice{\succ^Q_\omega}{F(S_\omega, c)} = j) = q^N_j.
			\end{multline}
	\end{itemize}
	Here, $[c_j, \overline{c}^{(k)}_j)$ is well-defined because we know $c_j \le \overline{c}^{(k)}_j$. Because $c_{j'} \le \overline{c}^{(k)}_{j'}$ for any $j'\in \mathcal{J}_0$, we know for any $j'\neq j$, $j'\in F(S_\omega, \overline{c}^{(k)})$ implies $j'\in F(S_\omega, c)$. Hence, for any $\omega$, we know
	\begin{equation}\label{eq:anafoiefa12}
\indicator[S_{\omega j} \geq c^{(k)}_{j},\ d^{(k)}_{\omega j} = 1] \ge \indicator(S_{\omega j}\ge c^{(k)}_j,\  \optchoice{\succ^Q_\omega}{F(S_\omega, c)} = j).
\end{equation}
	Because of \eqref{eq:23mmoaq2093n} and \eqref{eq:anafoiefa12}, we must have 
	\begin{equation*}
		\sum_\omega \indicator[S_{\omega j} \geq \overline{c}^{(k)}_{j},\ d^{(k)}_{\omega j} = 1]	  + \sum_\omega \indicator(S_{\omega j}\in  [c_j, \overline{c}^{(k)}_j)) \ge q^N_j.
	\end{equation*}
	As a result, we know $\overline{c}^{(k+1)}_j$ is either equal to $\overline{c}^{(k)}_j$ or the $\tau$-th highest value in $\{S_{\omega j}: \omega\in \Omega_N, S_{\omega j}< \overline{c}^{(k)}_j\}$ where $\tau = q^{N}_j - \sum_\omega \indicator(S_{\omega j} \geq \overline{c}^{(k)}_{j},\ d^{(k)}_{\omega j} = 1)$. In the latter case, $\overline{c}^{(k+1)}_j$ satisfies:
	\begin{equation}\label{eq:anfoq2i3en}
		\sum_\omega \indicator[S_{\omega j} \geq \overline{c}^{(k)}_{j},\ d^{(k)}_{\omega j} = 1] + \sum_\omega \indicator[S_{\omega j} \in [\overline{c}^{(k+1)}_j,  \overline{c}^{(k)}_{j})] = q^N_j
	\end{equation}

Combining \eqref{eq:anafoiefa12} and \eqref{eq:anfoq2i3en} implies 
\begin{equation*}
	\sum_\omega \indicator[S_{\omega j} \in [\overline{c}^{(k+1)}_j,  \overline{c}^{(k)}_{j})] \le 
	\sum_\omega \indicator(S_{\omega j}\in  [c_j, \overline{c}^{(k)}_j)).
\end{equation*}
This implies $\overline{c}_j^{(k+1)} \ge c_j$. 
\end{itemize}
We have established that $\overline{c}_j^{(k)} \ge c_j$ for any $k\ge 0$. Because $\overline{c}_j$ is $\lim_{k\to \infty}\overline{c}_j^{(k)}$, we conclude $\overline{c}_j \ge c_j$. This completes the proof.


\section{Proofs and Additional Results for Section \ref{sec:more_revealed_preference}}

\subsection{Supplementary results for section \ref{sec:robust_undominated}}\label{sec:supp_robust_undominated}

\begin{lemma}\label{lem:partial_order_u1_u2}
For each $\omega$, the $\succ^{P_{u1}}_\omega$ and $\succ^{P_{u2}}_\omega$ are partial orders. 
\end{lemma}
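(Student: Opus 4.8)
The plan is to exploit the fact that a transitive relation is irreflexive if and only if it is asymmetric, so that verifying irreflexivity of each transitive closure suffices. Since $\succ^{P_{u1}}_\omega$ and $\succ^{P_{u2}}_\omega$ are defined as the transitive closures of $\succ^{u1}_\omega$ and $\succ^{u2}_\omega$, transitivity holds by construction. For a transitive relation $\succ$, asymmetry always implies irreflexivity (take $j=j'$), while irreflexivity implies asymmetry, since $j \succ j'$ and $j' \succ j$ for $j \neq j'$ would give $j \succ j$ by transitivity. Hence for these two transitive closures, irreflexivity and asymmetry coincide, and it is enough to rule out $j \succ^{P_{u\ell}}_\omega j$ for every $j$ and every $\ell \in \{1,2\}$. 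By the construction of the transitive closure recalled in the notation section, $j \succ^{P_{u\ell}}_\omega j$ can hold only if the base relation $\succ^{u\ell}_\omega$ admits a directed cycle $j = j_1 \succ^{u\ell}_\omega j_2 \succ^{u\ell}_\omega \cdots \succ^{u\ell}_\omega j_N = j$. The whole argument thus reduces to showing that neither base relation contains such a cycle.

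To rule out cycles I would construct a strictly monotone potential. Write $D = \text{domain}(\succ^R_\omega)$ and recall that $\succ^R_\omega$ restricted to $D$ is a total order with the outside option $0$ placed last, so $0 \in D$. Assign to each $j \in D$ its position $r(j) \in \{1, \dots, |D|\}$ in this total order (with $r=1$ the top-ranked program and $r(0) = |D|$), and set $r(j) = |D| + 1$ for every $j \notin D$. I would then claim that $r$ strictly increases along every edge of $\succ^{u1}_\omega$ and of $\succ^{u2}_\omega$; since a directed cycle would force $r$ to return to its starting value while strictly increasing at each step, this immediately precludes cycles.

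Verifying the claim is the crux. The key structural fact is that every edge of either relation emanates from a program of the form $\optchoice{\succ^R_\omega}{B}$ (or, for the first clause of $\succ^{u2}_\omega$, from a listed program $j \in D \setminus \{0\}$), which always lies in $D$; here one uses that $0 \in B$ for every $B \in \mathcal{F}_\omega$ guarantees $B \cap D \neq \emptyset$, so that $\optchoice{\succ^R_\omega}{B}$ is well defined and belongs to $D$. For an edge $\optchoice{\succ^R_\omega}{B} \succ^{u\ell}_\omega j$ with $j \in D$, maximality gives $\optchoice{\succ^R_\omega}{B} \succ^R_\omega j$, whence $r(\optchoice{\succ^R_\omega}{B}) < r(j)$; for an edge of $\succ^{u1}_\omega$ with $j \notin D$, the source has finite rank while $r(j) = |D|+1$, so the rank again increases; and for the edges $j \succ^{u2}_\omega 0$ with $j \in D \setminus \{0\}$, the source outranks $0$ in $\succ^R_\omega$, giving $r(j) < r(0)$. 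Because every edge originates in $D$, no edge ever joins two of the tied programs lying outside $D$, so the strict increase is never violated.

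The only delicate point — and the step I expect to require the most care — is confirming that $\optchoice{\succ^R_\omega}{B}$ is well defined and lies in $D$ for all $B \in \mathcal{F}_\omega$, since the targets $j \in B \setminus D$ carry no $\succ^R_\omega$-rank of their own; the tie-breaking convention $r \equiv |D|+1$ off $D$, together with the observation that edges never start outside $D$, is precisely what resolves this. Granting the monotone potential, irreflexivity of both transitive closures follows, hence asymmetry, and combined with the built-in transitivity this shows that $\succ^{P_{u1}}_\omega$ and $\succ^{P_{u2}}_\omega$ are partial orders, completing the proof.
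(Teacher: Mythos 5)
Your proof is correct, and it reaches the paper's conclusion by a recognizably different device at the key step. Both arguments begin identically: transitivity of $\succ^{P_{u1}}_\omega$ and $\succ^{P_{u2}}_\omega$ is automatic for transitive closures, so the lemma reduces to showing that the base relations $\succ^{u1}_\omega$ and $\succ^{u2}_\omega$ are acyclic. The paper proves acyclicity by contradiction: given a cycle $j_1, \dots, j_M$ in either base relation, every node on it is the source of some edge, hence of the form $\optchoice{\succ^R_\omega}{B_k}$ and therefore in $\text{domain}(\succ^R_\omega)$ (for $\succ^{u2}_\omega$ this requires a separate preliminary observation ruling the outside option off any cycle); each edge then yields $j_k \succ^R_\omega j_{k+1}$, so transitivity of $\succ^R_\omega$ forces $j_1 \succ^R_\omega j_M$ while the closing edge forces $j_M \succ^R_\omega j_1$, contradicting asymmetry. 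You instead exhibit a strictly monotone potential---ROL position on $D = \text{domain}(\succ^R_\omega)$, rank $|D|+1$ off $D$---and check that every edge of either relation strictly increases it. Your case analysis is complete: sources always lie in $D$ (well-definedness of $\optchoice{\succ^R_\omega}{B}$ follows from $0 \in B \cap D$, exactly as you note); listed targets increase rank by maximality of $\optchoice{\succ^R_\omega}{B}$; unlisted targets jump to rank $|D|+1$; edges into $0$ increase rank because $0$ is listed last; and the degenerate case $\optchoice{\succ^R_\omega}{B} = 0$ causes no trouble, since then $B \cap D = \{0\}$ and any $\succ^{u1}_\omega$ targets are necessarily unlisted. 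What your route buys: a single uniform argument covering both relations, with the outside option and unlisted programs absorbed into the rank extension rather than handled by separate observations, and no need for the step (implicit in the paper) that all cycle nodes are edge-sources. What the paper's route buys: it stays purely order-theoretic, invoking only transitivity and asymmetry of $\succ^R_\omega$, without introducing an auxiliary numerical ranking.
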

\begin{proof}
Fix an arbitrary student $\omega$. Let us first show that $\succ^{P_{u1}}_\omega$ is a partial order. Since $\succ^{P_{u1}}_\omega$ is constructed as the transitive closure of $\succ^{u1}_\omega$, $\succ^{P_{u1}}_\omega$ is transitive by construction. Thus, to show $\succ^{P_{u1}}_\omega$ is a partial order, we only need to show that $\succ^{u1}_\omega$ is acyclic. 

Suppose, for the purpose of contradiction, there exists some integer $M$ and $j_1, ..., j_M$ with $j_{k} \succ^{u1}_\omega j_{k+1}$ for each $k = 1,...,M-1$ and $j_M \succ^{u1}_\omega j_1$. By the construction of $\succ^{u1}_\omega$, there is $j \succ^{u1}_\omega j'$ if and only if there exists some $B\in \mathcal{F}_\omega$ such that $j, j'\in B$, $j\neq j'$ and $j = \optchoice{\succ^R_\omega}{B}$. Therefore, there must exists $B_1, ..., B_M\in \mathcal{F}_\omega$ such that, for each $k = 1,...,M-1$, the following two results hold: (\emph{i}) both $j_k$ and $j_{k+1}$ are in $B_k$, and (\emph{ii}) $j_k = \optchoice{\succ^R_\omega}{B_k}$. By these two results, we know that, for each $k=1,...,M-1$, $j_k \in \text{domain}(\succ^R_\omega)$ and $j_k \succ^R_\omega j_{k+1}$. Thus, $j_1 \succ^R_\omega j_M$. On the other hand, $j_M \succ^{P_{u1}}_\omega j_1$ implies $j_M \succ^R_{\omega} j_1$. This leads to a contradiction, because $\succ^R_\omega$ is a partial order. As a result, $\succ^{u1}_\omega$ is acyclic.
This completes the proof for $\succ^{P_{u1}}_\omega$.

Next, we want to prove that $\succ^{P_{u2}}_\omega$ is a partial order. Similar as before, because $\succ^{P_{u2}}_\omega$ is the transitive closure of $\succ^{u2}_\omega$, we only need to prove $\succ^{u2}_\omega$ is acyclic. 

Suppose, for the purpose of contradiction, there exists some integer $M$ and $j_1, ..., j_M$ with $j_{k} \succ^{u2}_\omega j_{k+1}$ for each $k = 1,...,M-1$ and $j_M \succ^{u2}_\omega j_1$. By the construction of $\succ^{u2}_\omega$, the following two statements are true:
\begin{itemize}
\item there does not exist $j\in \text{domain}(\succ^{u2}_\omega)$ such that $0\succ^{u2}_\omega j$ and $j \succ^{u2}_\omega j'$ for some $j'\in \mathcal{J}_0$.
\item there is $j, j'\in \mathcal{J}_0$ with $j'\neq 0$ and $j\succ^{u2}_\omega j'$ if and only if there exist some $B\in \mathcal{F}_\omega$ such that $j, j'\in B$, $j\neq j'$ and $j = \optchoice{\succ^R_\omega}{B}$.
\end{itemize}
By the first statement, we know that, for each $k= 1,...,M$, $j_k\neq 0$. By the second statement, there must exists $B_1, ..., B_M\in \mathcal{F}_\omega$ such that, for each $k = 1,...,M-1$, the following two results hold: (\emph{i}) both $j_k$ and $j_{k+1}$ are in $B_k$, and (\emph{ii}) $j_k = \optchoice{\succ^R_\omega}{B_k}$. Therefore, for each $k=1,...,M-1$, $j_k \in \text{domain}(\succ^R_\omega)$ and $j_k \succ^R_\omega j_{k+1}$. Thus, $j_1 \succ^R_\omega j_M$. On the other hand, $j_M \succ^{P_{u2}}_\omega j_1$ implies $j_M \succ^R_{\omega} j_1$. This leads to a contradiction. As a result, $\succ^{u2}_\omega$ is acyclic. This completes the proof for $\succ^{P_{u2}}_\omega$.
\end{proof}

\subsection{Proof of proposition \ref{prop:robust_undominance}}

\paragraph{if part}
Suppose 
\begin{enumerate}
\item for student $\omega$ with $|\succ^R_\omega| < K$, $\succ^Q_\omega$ is compatible with $\succ^{P_{u1}}_\omega$.
\item for student $\omega$ with $|\succ^R_\omega| = K$, $\succ^Q_\omega$ is compatible with at least one of $\succ^{P_{u1}}_\omega$ and $\succ^{P_{u2}}_\omega$. 
\end{enumerate}
Construct $\mathcal{F}^*_\omega$ as follows:
\begin{itemize}
\item for student $\omega$ with $|\succ^R_\omega| < K$, let $\mathcal{F}^*_\omega = \mathcal{F}_\omega$.
\item for student $\omega$ with $|\succ^{R}_\omega| = K$, let $\mathcal{F}^*_\omega = \mathcal{F}_\omega$ if $\succ^{Q}_\omega$ is compatible with $\succ^{P_{u1}}_\omega$, and let $\mathcal{F}^*_\omega = \mathcal{F}_\omega\cup \{\{j, 0\}: j\in \text{domain}(\succ^R_\omega)\setminus \{0\}\}$ if otherwise.
\end{itemize}
We are going to show that Assumption \ref{assu:robust_undominance} holds for this $(\mathcal{F}^*_\omega: \omega\in \Omega)$. By construction, $\mathcal{F}_\omega \subseteq \mathcal{F}^*_\omega $ for every $\omega \in \Omega$. Thus, we only need to show $\succ^{R}_\omega$ is undominated given $\mathcal{F}^*_\omega$. Consider two scenarios:
\begin{itemize}
	\item Fix an arbitrary student $\omega$ whose $\succ^Q_\omega$ is compatible with $\succ^{P_{u1}}_\omega$. In this case, $\mathcal{F}^*_\omega = \mathcal{F}_\omega$. Because $\succ^Q_\omega$ is compatible with $\succ^{P_{u1}}_\omega$, $\succ^Q_\omega$  is also compatible with $\succ^{u1}_\omega$. Hence, for any $B\in \mathcal{F}^*_\omega = \mathcal{F}_\omega$, $\optchoice{\succ^R_\omega}{B} = \optchoice{\succ^Q_\omega}{B}$, i.e., student $\omega$ is always matched to her favorite program in $B$. In this case, there cannot exist an alternative $\succ^{R'}_\omega$ such that $\optchoice{\succ^{R'}_\omega}{B}\succ^Q_\omega \optchoice{\succ^R_\omega}{B}$. Hence, $\succ^{R}_\omega$ is undominated given this $\mathcal{F}^*_\omega$.
	\item Fix an arbitrary student $\omega$ whose $\succ^Q_\omega$ is not compatible with $\succ^{P_{u1}}_\omega$. In this case, we must have $|\succ^R_\omega| = K$ and $\succ^Q_\omega$ is compatible with $\succ^{P_{u2}}_\omega$. Let $\succ^{R'}_\omega$ be an arbitrary alternative ROL. Consider the following two cases:
	\begin{itemize}
		\item Suppose $\text{domain}(\succ^{R'}_\omega)\neq \text{domain}(\succ^{R}_\omega)$. Because $|\succ^R_\omega| = K$, $\text{domain}(\succ^{R'}_\omega)\neq \text{domain}(\succ^{R}_\omega)$ implies that there exist some $j^*\in \text{domain}(\succ^{R}_\omega) \setminus \text{domain}(\succ^{R'}_\omega)$. Then, consider $B = \{j^*, 0\} \in \mathcal{F}^*_\omega$. For this $B$, $\optchoice{\succ^R_\omega}{B} = j^* \succ^Q_\omega 0 = \optchoice{\succ^{R'}_\omega}{B}$. Hence, $\succ^R_\omega$ cannot be dominated by $\succ^{R'}_\omega$ given $\mathcal{F}^*_\omega$.
		\item Suppose $\text{domain}(\succ^{R'}_\omega) =  \text{domain}(\succ^{R}_\omega)$. Because $\succ^Q_\omega$ is compatible with $\succ^{u2}_\omega$ in this case, we know for any $B\in \mathcal{F}_\omega$ and any $j'\in B\setminus\optchoice{\succ^R_\omega}{B}$, $\optchoice{\succ^R_\omega}{B} \succ^Q_\omega j'$. Therefore, for any $B\in \mathcal{F}_\omega$, either $\optchoice{\succ^R_\omega}{B} \succ^Q_\omega \optchoice{\succ^{R'}_\omega}{B}$ or $\optchoice{\succ^R_\omega}{B} = \optchoice{\succ^{R'}_\omega}{B} $. 
			For any $B\in \mathcal{F}^*_\omega \setminus \mathcal{F}_\omega$, we have $\optchoice{\succ^R_\omega}{B} = \optchoice{\succ^Q_\omega}{B}$. As a result, $\succ^R_\omega$ cannot be dominated by $\succ^{R'}_\omega$ given $\mathcal{F}^*_\omega$.
	\end{itemize}
Thus, $\succ^R_\omega$ cannot be dominated by $\succ^{R'}_\omega$ given $\mathcal{F}^*_\omega$. This completes the proof for $\succ^R_\omega$ being undominated given $\mathcal{F}^*_\omega$ in this case.
\end{itemize}
This completes the if part.

\paragraph{only if part}
Suppose Assumption \ref{assu:robust_undominance} holds for some $(\mathcal{F}^*_\omega: \omega\in \Omega)$. Consider two scenarios:

\begin{itemize}
\item Fix an arbitrary student $\omega$ with $|\succ^R_\omega| < K$. We want to show that $\succ^Q_\omega$ is compatible with $\succ^{P_{u1}}_\omega$. By the construction of $\succ^{P_{u1}}_\omega$, we only need to show that $\succ^Q_\omega$ is compatible with $\succ^{u1}_\omega$, i.e., for any $j, j'\in \mathcal{J}_0$ with $j \succ^{u1}_\omega j'$, we have $j \succ^{Q}_\omega j'$.

	Suppose, for the purpose of contradiction, there exists some $j_1$, $j_2 \in \mathcal{J}_0$ such that $j_1 \succ^{u1}_\omega j_2$ and $j_2 \succ^Q_\omega j_1$. By the construction of $\succ^{u1}_\omega$, there must exist some $B^*\in \mathcal{F}_\omega \subseteq \mathcal{F}^*_\omega$ such that $j_1, j_2 \in B^*$ and $j_1 = \optchoice{\succ^R_\omega}{B^*}$. Consider two scenarios:
\begin{itemize}
	\item Suppose $j_1 = 0$. Then, construct another ROL $\succ^{R'}_\omega$ the same as $\succ^{R}_\omega$ except appending $j_2$ just above the outside option in the ROL. That is, (i) $j \succ^{R'}_\omega j'$ for any $j,j'$ with $j \succ^R_\omega j'$, (ii) $j_2 \succ^{R'}_\omega 0$, and (iii) $j \succ^{R'}_\omega j_2$ for any $j\in \text{domain}(\succ^R_\omega)\setminus\{0\}$. By construction, $\optchoice{\succ^{R'}_\omega}{B^*} = j_2 \succ^Q_\omega 0 = \optchoice{\succ^R_\omega}{B^*}$. Moreover, for any $B\in \mathcal{F}^*_\omega$, either $\optchoice{\succ^{R'}_\omega}{B^*} = \optchoice{\succ^R_\omega}{B^*}$ or $\optchoice{\succ^{R'}_\omega}{B^*} \neq \optchoice{\succ^R_\omega}{B^*}$. If $\optchoice{\succ^{R'}_\omega}{B^*} \neq \optchoice{\succ^R_\omega}{B^*}$, we must have $\optchoice{\succ^{R'}_\omega}{B} = j_2 \succ^Q_\omega 0 = \optchoice{\succ^R_\omega}{B}$ by the construction of $\succ^{R'}_\omega$. As a result, $\succ^R_\omega$ is dominated by $\succ^{R'}_\omega$ given $\mathcal{F}^*_\omega$, contradicting to Assumption \ref{assu:robust_undominance}.

	\item Suppose $j_1 \neq 0$. Then, construct another ROL $\succ^{R'}_\omega$ with the same domain as $\succ^R_\omega$ but ordering them according to $\succ^Q_\omega$. That is, $j \succ^{R'}_\omega j'$ if and only if $j,j'\in \text{domain}(\succ^R_\omega)$ and $j \succ^Q_\omega j'$. Then, by construction, either $\optchoice{\succ^{R'}_\omega}{B^*} = j_2$ or $\optchoice{\succ^{R'}_\omega}{B^*} \succ^Q_\omega j_2$. In both cases, we have $\optchoice{\succ^{R'}_\omega}{B^*} \succ^Q_\omega  \optchoice{\succ^R_\omega}{B^*}$, because $\optchoice{\succ^R_\omega}{B^*} = j_1$. Moreover, for any $B\in \mathcal{F}^*_\omega$, either $\optchoice{\succ^{R'}_\omega}{B^*} = \optchoice{\succ^R_\omega}{B^*}$ or $\optchoice{\succ^{R'}_\omega}{B^*} \neq \optchoice{\succ^R_\omega}{B^*}$. If $\optchoice{\succ^{R'}_\omega}{B^*} \neq \optchoice{\succ^R_\omega}{B^*}$, we must have $\optchoice{\succ^{R'}_\omega}{B} \succ^Q_\omega \optchoice{\succ^R_\omega}{B}$ because $\succ^{R'}_\omega$ is compatible with $\succ^Q_\omega$ by construction. As a result, $\succ^R_\omega$ is dominated given $\mathcal{F}^*_\omega$, contradicting to Assumption \ref{assu:robust_undominance}.
\end{itemize}
Thus, we always have a contradiction for student $\omega$ with $|\succ^R_\omega| < K$.

\item Fix an arbitrary student $\omega$ with $|\succ^R_\omega| = K$. We want to show that $\succ^Q_\omega$ is compatible with at least one of $\succ^{P_{u1}}_\omega$ and $\succ^{P_{u2}}_\omega$. Define $\mathcal{A} \coloneqq \{\optchoice{\succ^Q_\omega}{B}: B\in \mathcal{F}^*_\omega\} \cup \{0\}$. Consider two scenarios:
	\begin{itemize}
		\item Suppose the number of elements in $\mathcal{A}$, denoted as $|\mathcal{A}|$, is less than or equal to $K$. In this case, we want to show $\succ^Q_\omega$ is compatible with $\succ^{P_{u1}}_\omega$. 
Suppose, for the purpose of contradiction, that $\succ^Q_\omega$ is not compatible with $\succ^{P_{u1}}_\omega$. 
Then, $\succ^Q_\omega$ is not compatible with $\succ^{u1}_\omega$. 
As a result, there must exist some $B^*\in \mathcal{F}_\omega\subseteq \mathcal{F}^*_\omega$ and some $j_1, j_2\in B$ such that $j_1 = \optchoice{\succ^R_\omega}{B^*}$, $j_1 \succ^{u1}_\omega j_2$ and $j_2 \succ^Q_\omega j_1$. 
Construct an alternative $\succ^{R'}_\omega$ where $j \succ^{R'}_\omega j'$ if and only if $j \succ^Q_\omega j'$ and $j, j' \in \mathcal{A}$. By construction, for any $B\in \mathcal{F}^*_\omega$, $\optchoice{\succ^{R'}_\omega}{B} = \optchoice{\succ^Q_\omega}{B}$. Hence, for any $B\in \mathcal{F}^*_\omega$, either $\optchoice{\succ^{R'}_\omega}{B} = \optchoice{\succ^R_\omega}{B}$ or $\optchoice{\succ^{R'}_\omega}{B} \succ^Q_\omega \optchoice{\succ^R_\omega}{B}$. Moreover, either $\optchoice{\succ^{R'}_\omega}{B^*} = j_2$ or $\optchoice{\succ^{R'}_\omega}{B^*} \succ^Q_\omega j_2$. In both cases, $\optchoice{\succ^{R'}_\omega}{B^*} \succ^Q_\omega j_1 = \optchoice{\succ^R_\omega}{B}$. As a result, $\succ^R_\omega$ is dominated by $\succ^{R'}_\omega$ given $\mathcal{F}^*_\omega$, contradicting to Assumption \ref{assu:robust_undominance}.

\item Suppose $|\mathcal{A}|$ is greater than $K$. In this case, we want to show $\succ^Q_\omega$ is compatible with $\succ^{P_{u2}}_\omega$. Suppose, for the purpose of contradiction, that $\succ^Q_\omega$ is not compatible with $\succ^{P_{u2}}_\omega$. Then, $\succ^Q_\omega$ is not compatible with $\succ^{u2}_\omega$. Then, one of the following two cases is true:
 \begin{itemize}
 \item there exists some $j^\dagger\in \text{domain}(\succ^R_\omega)$ such that $0 \succ^Q_\omega j^\dagger$. Since $|\mathcal{A}| > |\succ^R_\omega| = K$, there exists some $B^*\in \mathcal{F}^*_\omega$ such that $j^* \coloneqq \optchoice{\succ^Q_\omega}{B^*} \notin \text{domain}(\succ^R_\omega)$. In particular, $j^* \succ^Q_\omega 0$. 
	 Construct an alternative $\succ^{R'}_\omega$ as follows: let $\succ^{R'}_\omega$'s domain be $( \text{domain}(\succ^R_\omega)\setminus \{j^\dagger\} ) \cup \{j^*\}$, and for any $j, j'$ in its domain, $j \succ^{R'}_\omega j'$ if and only if $j \succ^Q_\omega j'$. 
	Then, for any $B\in \mathcal{F}^*_\omega$, either $\optchoice{\succ^{R'}_\omega}{B} = \optchoice{\succ^R_\omega}{B}$ or $\optchoice{\succ^{R'}_\omega}{B} \succ^Q_\omega \optchoice{\succ^R_\omega}{B}$. 
	Moreover, $\optchoice{\succ^{R'}_\omega}{B^*} = j^* \succ^Q_\omega \optchoice{\succ^{R}_\omega}{B^*}$. As a result, $\succ^R_\omega$ is dominated by $\succ^{R'}_\omega$ given $\mathcal{F}^*_\omega$, contradicting to Assumption \ref{assu:robust_undominance}.

\item there exists some $B^* \in \mathcal{F}_\omega \subseteq \mathcal{F}^*_\omega$ and some $j_1, j_2 \in B\cap\text{domain}(\succ^R_\omega)$ such that $j_2 \succ^Q_\omega j_1 = \optchoice{\succ^R_\omega}{B^*}$. Construct an alternative ROL $\succ^{R'}_\omega$ by setting its domain to be the same as $\text{domain}(\succ^R_\omega)$ and ordering the programs according to $\succ^Q_\omega$.
That is, $j \succ^{R'}_\omega j'$ if and only if $j,j'\in \text{domain}(\succ^R_\omega)$ and $j \succ^Q_\omega j'$. Because $\succ^{R'}_\omega$ and $\succ^R_\omega$ have the same domain and $\succ^{R'}_\omega$ is compatible with $\succ^Q_\omega$, for any $B\in \mathcal{F}^*_\omega$, either $\optchoice{\succ^{R'}_\omega}{B} = \optchoice{\succ^R_\omega}{B}$ or $\optchoice{\succ^{R'}_\omega}{B} \succ^Q_\omega \optchoice{\succ^R_\omega}{B}$. 
	Moreover, $\optchoice{\succ^{R'}_\omega}{B^*} \succ^Q_\omega j_1 = \optchoice{\succ^{R}_\omega}{B^*}$. As a result, $\succ^R_\omega$ is dominated by $\succ^{R'}_\omega$ given $\mathcal{F}^*_\omega$, contradicting to Assumption \ref{assu:robust_undominance}.
 \end{itemize}
\end{itemize}
	
Thus, we always have shown that $\succ^Q_\omega$ should be compatible with at least one of $\succ^{P_{u1}}_\omega$ and $\succ^{P_{u2}}_\omega$ for students with $|\succ^R_\omega| = K$.
\end{itemize}

\subsection{Supplementary results for section \ref{sec:selective}}\label{sec:lemma_sel}
\begin{lemma}\label{lem:select}
Suppose $\gg$ and $(\mathcal{F}_\omega:\omega\in \Omega)$ are compatible with each other. Moreover, suppose $\succ^{R}_\omega$ and $\gg$ are compatible with each other. Then, $\succ^{P_{sel}}_\omega$ is a partial order.
\end{lemma}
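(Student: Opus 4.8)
The plan is to mimic the proof of Lemma~\ref{lem:partial_order_u1_u2}. Since $\succ^{P_{sel}}_\omega$ is by construction the transitive closure of $\succ^{sel}_\omega$, it is automatically transitive; a transitive relation is a strict partial order exactly when it is irreflexive, and transitivity then supplies asymmetry as well (if $a \succ^{P_{sel}}_\omega b$ and $b \succ^{P_{sel}}_\omega a$ then $a \succ^{P_{sel}}_\omega a$). Irreflexivity of a transitive closure is equivalent to acyclicity of the generating relation, so the whole lemma reduces to showing that $\succ^{sel}_\omega$ is acyclic.

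The key device is the hypothesis that $\succ^R_\omega$ and $\gg$ are compatible. By the definition of compatibility between binary relations in the preliminaries, this yields a total order $\succ$ on $\mathcal{J}_0$ that simultaneously extends both. Concretely, $\succ$ extends the strict ROL (whenever $j \succ^R_\omega j'$, asymmetry of $\succ^R_\omega$ and compatibility force $j \succ j'$) and the strict part of $\gg$ (for distinct $j, j'$ with $j \gg j'$, antisymmetry of the nonstrict partial order $\gg$ rules out $j' \gg j$, so compatibility forces $j \succ j'$). I would record this extension $\succ$ as the single object against which every $\succ^{sel}_\omega$-edge is checked.

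Next I would show $\succ^{sel}_\omega \subseteq \succ$ by inspecting the two generating clauses. For an edge produced by Condition~(ii), $j = \optchoice{\succ^R_\omega}{B}$ and $j' \in B \cap \text{domain}(\succ^R_\omega)$ with $j' \neq j$; since $\succ^R_\omega$ restricted to its domain is a total order with maximal element $j$ on $B$, this gives $j \succ^R_\omega j'$ and hence $j \succ j'$. For an edge produced by Condition~(i), we directly have $j \gg j'$ with $j' \neq j$, so $j \succ j'$. With both clauses covered, every $\succ^{sel}_\omega$-edge is an edge of the total order $\succ$. A hypothetical cycle $j_1 \succ^{sel}_\omega j_2 \succ^{sel}_\omega \cdots \succ^{sel}_\omega j_M \succ^{sel}_\omega j_1$ would then become $j_1 \succ j_2 \succ \cdots \succ j_M \succ j_1$, and transitivity of $\succ$ yields $j_1 \succ j_1$, contradicting irreflexivity of the total order. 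Hence $\succ^{sel}_\omega$ is acyclic and $\succ^{P_{sel}}_\omega$ is a partial order.

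The main obstacle I anticipate is purely bookkeeping around the compatibility notion: I must make sure the single total order $\succ$ orients \emph{both} the strict ROL and the strict part of the nonstrict relation $\gg$ in the same direction used by the two clauses of $\succ^{sel}_\omega$, being careful that the reflexive pairs of $\gg$ never enter (they are excluded because Condition~(i) requires $j' \neq j$) and that ``no $j' \succ^R_\omega j$'' in Condition~(i) is irrelevant to the direction of the created edge. Once this alignment is verified, the cycle argument is routine, and the role of the $\gg$–$\mathcal{F}_\omega$ compatibility assumption is only to guarantee that the stated hypotheses (in particular the existence of the extending total order) are not vacuous.
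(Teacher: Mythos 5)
Your reduction of the lemma to acyclicity of $\succ^{sel}_\omega$ (via the transitive-closure argument) matches the paper, and your handling of Condition (ii) edges is correct. The gap is in Condition (i). You claim that for distinct $j,j'$ with $j \gg j'$, ``antisymmetry of the nonstrict partial order $\gg$ rules out $j' \gg j$,'' so that compatibility forces $j \succ j'$. But antisymmetry of $\gg$ is neither a hypothesis of the lemma nor a consequence of the paper's constructions: both $\gg^*$ in Definition \ref{def:selectiveness} and the empirical $\gg$ in \eqref{eq:selectiveness_history} are only reflexive and transitive, and two \emph{distinct} programs that appear in exactly the same consideration sets (or have identical admission patterns in every observed period) satisfy $j \gg j'$ and $j' \gg j$ simultaneously. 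For such a pair the compatibility notion places \emph{no} constraint on the witnessing total order $\succ$, so $\succ$ may rank $j' \succ j$ even though Condition (i) creates the edge $j \succ^{sel}_\omega j'$ (take $j \in \text{domain}(\succ^R_\omega)\setminus\{0\}$ and $j' \notin \text{domain}(\succ^R_\omega)$, so ``no $j' \succ^R_\omega j$'' holds vacuously). Concretely, let $\mathcal{J}_0=\{0,1,2\}$, let the ROL be $1 \succ^R_\omega 0$, let $\gg$ consist of the reflexive pairs together with $1\gg 2$, $2\gg 1$, $1\gg 0$, $2\gg 0$, and let $\mathcal{F}_\omega=\{\{0\},\{0,1,2\}\}$. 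All hypotheses of the lemma hold, Condition (i) gives $1 \succ^{sel}_\omega 2$, yet $2 \succ 1 \succ 0$ is a perfectly valid witness of compatibility that does not contain this edge. So the inclusion $\succ^{sel}_\omega \subseteq \succ$, which is the engine of your cycle argument, is not established; you cannot choose which witness compatibility hands you.

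The proof is repairable, and the repair is essentially the paper's argument, which never uses the witnessing total order at all. First, every \emph{source} of a $\succ^{sel}_\omega$-edge lies in $\text{domain}(\succ^R_\omega)\setminus\{0\}$ (this holds for both generating clauses), so any cycle must live entirely inside the ROL's domain with all vertices nonzero; in particular the problematic edges above, whose targets are unlisted programs, are irrelevant because such targets are sinks. Second, for $j,j'$ both in $\text{domain}(\succ^R_\omega)$ with $j'\neq 0$, each generating clause implies $j \succ^R_\omega j'$: Condition (ii) because $j=\optchoice{\succ^R_\omega}{B}$ is the maximum on $B\cap\text{domain}(\succ^R_\omega)$, and Condition (i) precisely because of the proviso ``no $j' \succ^R_\omega j$'' combined with totality of the ROL on its domain --- the very clause you dismissed as ``irrelevant to the direction of the created edge.'' A cycle in $\succ^{sel}_\omega$ would therefore be a cycle in $\succ^R_\omega$, contradicting that $\succ^R_\omega$ is a partial order.
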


\begin{proof}
First of all, note that, by the assumptions imposed by the lemma and by the construction of $\succ^{sel}_\omega$, the following statements are true:
\begin{itemize}
	\item there does not exist $j\in \mathcal{J}_0$ such that $0 \succ^{sel}_\omega j$, because for each $B\subseteq \mathcal{J}_0$ with $0\in B$, $\optchoice{\succ^R_\omega}{B} = 0$ would imply $B\cap \text{domain}(\succ^R_\omega) = \{0\}$.
	\item for any $j, j'\in \mathcal{J}_0$ with $j \succ^{sel}_\omega j'$, we have $j\in \text{domain}({\succ^R_\omega})\setminus\{0\}$.
		
	\item for any $j, j'\in \text{domain}(\succ^R_\omega)$ with $j'\neq 0$, there is $j \succ^{sel}_\omega j'$ if and only if  and one of the following two conditions are true: (\emph{a}) $j\gg j'$ and $j \succ^R_\omega j'$; (\emph{b})  $j = \optchoice{\succ^R_\omega}{B}$ for some $B\in \mathcal{F}_\omega$ with $j'\in B$. In both cases, $j \succ^R_\omega j'$. 
\end{itemize}

Because $\succ^{P_{sel}}_\omega$ is the transitive closure of $\succ^{sel}_\omega$, we only need to show that $\succ^{sel}_\omega$ is acyclic. Suppose, for the purpose of contraiction, there exists an integer $M$ and $j_1, ..., j_M\in \mathcal{J}_0$ such that $j_k \succ^{sel}_\omega j_{k+1}$ for each $k = 1, ..., M-1$ and $j_M \succ^{sel}_\omega j_1$. By the three results noted at the beginning of the proof, we know that 
\begin{itemize}
	\item $j_k \in \text{domain}(\succ^{R}_i)\setminus\{0\}$ for each $k=1,...,M$
	\item for each $k=1,...,M-1$, $j_k \succ^R_\omega j_{k+1}$.
	\item $j_M \succ^R_\omega j_1$.
\end{itemize}
Because $\succ^{R}_\omega$ is a partial order and hence acyclic, the last two points leads to a contradiction. This completes the proof.
\end{proof}

\subsection{Proof of proposition \ref{prop:selectiveness}}
\paragraph{if part}
Suppose the following conditions are satisfied
\begin{enumerate}
\item for student $\omega$ with $|\succ^R_\omega| < K$ or $\succ^R_\omega$ not compatile with $\gg$, $\succ^Q_\omega$ is compatible with $\succ^{P_{u1}}_\omega$
\item for student $\omega$ with $|\succ^R_\omega| = K$ and $\succ^R_\omega$ compatile with $\gg$, $\succ^Q_\omega$ is compatible with at least one of  $\succ^{P_{u1}}_\omega$ and $\succ^{P_{sel}}_\omega$.
\end{enumerate}
For each student $\omega$, construct $\mathcal{F}^*_\omega$ as follows:
\begin{itemize}
\item for student $\omega$ whose $\succ^Q_\omega$ is compatible with $\succ^{P_{u1}}_\omega$, let $\mathcal{F}^*_\omega = \mathcal{F}_\omega$.
\item for student $\omega$ whose $\succ^Q_\omega$ is not compatible with $\succ^{P_{u1}}_\omega$, let $\mathcal{F}^*_\omega = \mathcal{F}_\omega\cup \{\{j\}\cup\{j'\in \mathcal{J}_0: j \gg j'\}: j\in \text{domain}(\succ^R_\omega)\setminus \{0\}\}$.

\end{itemize}
We are going to show that Assumption \ref{assu:robust_undominance} holds for this $(\mathcal{F}^*_\omega: \omega\in \Omega)$. By construction, $(\mathcal{F}^*_\omega: \omega\in \Omega)$ satisfies Assumption \ref{assu:selectiveness} because $\mathcal{F}_\omega$ and $\gg$ are compatible. Also, by construction, $\mathcal{F}_\omega \subseteq \mathcal{F}^*_\omega $ for every $\omega \in \Omega$. Thus, we only need to show that, for every student $\omega$, $\succ^{R}_\omega$ is undominated given $\mathcal{F}^*_\omega$. To show this, consider two scenarios:

\begin{itemize}
	\item Fix an arbitrary student $\omega$ whose $\succ^Q_\omega$ is compatible with $\succ^{P_{u1}}_\omega$. In this case, $\mathcal{F}^*_\omega = \mathcal{F}_\omega$. Because $\succ^Q_\omega$ is compatible with $\succ^{P_{u1}}_\omega$, $\succ^Q_\omega$  is also compatible with $\succ^{u1}_\omega$. Hence, for any $B\in \mathcal{F}^*_\omega = \mathcal{F}_\omega$, $\optchoice{\succ^R_\omega}{B} = \optchoice{\succ^Q_\omega}{B}$, i.e., student $\omega$ is always matched to her favorite program in $B$. In this case, there cannot exist an alternative $\succ^{R'}_\omega$ such that $\optchoice{\succ^{R'}_\omega}{B}\succ^Q_\omega \optchoice{\succ^R_\omega}{B}$. Hence, $\succ^{R}_\omega$ is undomianted given this $\mathcal{F}^*_\omega$.

	\item Fix an arbitrary student $\omega$ whose $\succ^Q_\omega$ is not compatible with $\succ^{P_{u1}}_\omega$. In this case, we must have $|\succ^R_\omega| = K$, $\succ^R_\omega$ compatible with $\gg$ and $\succ^Q_\omega$ is compatible with $\succ^{P_{sel}}_\omega$. Let $\succ^{R'}_\omega$ be an arbitrary alternative ROL. Consider the following two cases:
	\begin{itemize}
		\item Suppose $\text{domain}(\succ^{R'}_\omega)\neq \text{domain}(\succ^{R}_\omega)$. Because $|\succ^R_\omega| = K$, $\text{domain}(\succ^{R'}_\omega)\neq \text{domain}(\succ^{R}_\omega)$ implies that there exist some $j^*\in \text{domain}(\succ^{R}_\omega) \setminus \text{domain}(\succ^{R'}_\omega)$. Then, consider $B = \{j^*\}\cup\{j'\in \mathcal{J}_0: j^* \gg j'\}\in \mathcal{F}^*_\omega$. For this $B$, because $j^* \in \text{domain}(\succ^R_\omega)$ and $\succ^R_\omega$ is compatible with $\gg$, we know $\optchoice{\succ^R_\omega}{B} = j^*$. On the other hand, because $j^*\notin \text{domain}(\succ^{R'}_\omega)$ and $\succ^Q_\omega$ compatible with $\succ^{sel}_\omega$, we must have $j^* \succ^Q_\omega \optchoice{\succ^{R'}_\omega}{B}$. Hence, $\succ^R_\omega$ cannot be dominated by $\succ^{R'}_\omega$ given $\mathcal{F}^*_\omega$.
		\item Suppose $\text{domain}(\succ^{R'}_\omega) =  \text{domain}(\succ^{R}_\omega)$. Because $\succ^Q_\omega$ is compatible with $\succ^{sel}_\omega$ in this case, we know for any $B\in \mathcal{F}_\omega$ and any $j'\in B\setminus\optchoice{\succ^R_\omega}{B}$, $\optchoice{\succ^R_\omega}{B} \succ^Q_\omega j'$. Therefore, for any $B\in \mathcal{F}_\omega$, either $\optchoice{\succ^R_\omega}{B} \succ^Q_\omega \optchoice{\succ^{R'}_\omega}{B}$ or $\optchoice{\succ^R_\omega}{B} = \optchoice{\succ^{R'}_\omega}{B} $. 
Moreover, for any $B\in \mathcal{F}^*_\omega \setminus \mathcal{F}_\omega$, we have $\optchoice{\succ^R_\omega}{B} = \optchoice{\succ^Q_\omega}{B}$, because  $\succ^Q_\omega$ is compatible with $\succ^{sel}_\omega$. As a result, $\succ^R_\omega$ cannot be dominated by $\succ^{R'}_\omega$ given $\mathcal{F}^*_\omega$.
	\end{itemize}
Thus, $\succ^R_\omega$ cannot be dominated by $\succ^{R'}_\omega$ given $\mathcal{F}^*_\omega$. This completes the proof for $\succ^R_\omega$ being undominated given $\mathcal{F}^*_\omega$ in this case.
\end{itemize}
This completes the if part.

\paragraph{only if part}
Suppose Assumption \ref{assu:robust_undominance} and Assumption \ref{assu:selectiveness} hold for some $(\mathcal{F}^*_\omega: \omega \in \Omega)$. Consider two scenarios:
\begin{itemize}
\item Fix an arbitrary $\omega$ with $|\succ^{R}_\omega| < K$ or $\succ^R_\omega$ not compatible with $\gg$. We want to show $\succ^Q_\omega$ must be compatible with $\succ^{P_{u1}}_\omega$, which is equivalent to show $\succ^Q_\omega$ is compatible with $\succ^{u1}_\omega$.

Suppose, for the purpose of contradiction, there exists some $j_1, j_2 \in \mathcal{J}_0$ such that $j_1 \succ^{u1}_\omega j_2$ and $j_2 \succ^{Q}_\omega j_1$. By the construction of $\succ^{u1}_\omega$, there must exist some $B^*\in \mathcal{F}_\omega \subseteq \mathcal{F}^*_\omega$ such that $j_1, j_2 \in B^*$ and $j_1 = \optchoice{\succ^R_\omega}{B^*}$. Consider two scenarios:
\begin{itemize}
	\item Suppose $j_1 = 0$. Then, construct another ROL $\succ^{R'}_\omega$ as follows: 
		\begin{itemize}
		\item suppose $|\succ^{R}_\omega| < K$. Construct ROL $\succ^{R'}_\omega$ the same as $\succ^{R}_\omega$ except appending $j_2$ to ROL at the position just above the outside option. That is, (i) $j \succ^{R'}_\omega j'$ for any $j,j'$ with $j \succ^R_\omega j'$, (ii) $j_2 \succ^{R'}_\omega 0$, and (iii) $j \succ^{R'}_\omega j_2$ for any $j\in \text{domain}(\succ^R_\omega)\setminus\{0\}$; 
		\item suppose $|\succ^{R}_\omega| = K$. In this case, $\succ^R_\omega$ is not compatible with $\gg$. Thus, there must exist some $j^*, j^\dagger \in \text{domain}(\succ^R_\omega)$ such that $j^* \succ^R_\omega j^\dagger$ and $j^\dagger \gg j^*$. Here, $j^*\neq 0$ and $j^\dagger\neq 0$, because $j^*\succ^R_\omega j^\dagger$ implies $j^* \neq j^\dagger$ and $j^*\neq 0$, and because $j^\dagger \gg j^*$ and $j^*\neq 0$ imply $j^\dagger\neq 0$.  In this case, the $j^\dagger$ listed in $\succ^R_\omega$ is redudndant, as there will never exist a $B\in \mathcal{F}^*_\omega$ such that  $j^\dagger = \optchoice{\succ^R_\omega}{B}$. Thus, we can construct ROL $\succ^{R'}_\omega$ the same as $\succ^{R}_\omega$ except dropping $j^\dagger$ from the ROL and appending $j_2$ to ROL at the position just above the outside option. Note that $j_1 \neq j^\dagger$ and $j_2\neq j^{\dagger}$, because $j_1 = 0$ and $0 = j_1\succ^{u1}_\omega j_2$ implies $j_2 \notin \text{domain}(\succ^R_\omega)$.
	\end{itemize}
	By construction, $\optchoice{\succ^{R'}_\omega}{B^*} = j_2 \succ^Q_\omega 0 = \optchoice{\succ^R_\omega}{B^*}$. Moreover, for any $B\in \mathcal{F}^*_\omega$, either $\optchoice{\succ^{R'}_\omega}{B^*} = \optchoice{\succ^R_\omega}{B^*}$ or $\optchoice{\succ^{R'}_\omega}{B^*} \neq \optchoice{\succ^R_\omega}{B^*}$. If $\optchoice{\succ^{R'}_\omega}{B^*} \neq \optchoice{\succ^R_\omega}{B^*}$, we must have $\optchoice{\succ^{R'}_\omega}{B} = j_2 \succ^Q_\omega 0 = \optchoice{\succ^R_\omega}{B}$ by the construction of $\succ^{R'}_\omega$. As a result, $\succ^R_\omega$ is dominated by $\succ^{R'}_\omega$ given $\mathcal{F}^*_\omega$, contradicting to Assumption \ref{assu:robust_undominance}. 

	\item Suppose $j_1 \neq 0$. Then, construct another ROL $\succ^{R'}_\omega$ as follows:
    \begin{itemize}
        \item if $\succ^R_\omega| < K$, construct $\succ^{R'}_\omega$ as ordering $\text{domain}(\succ^R_\omega)\cup\{j_2\}$.
        \item if $\succ^R_\omega| = K$, then $\succ^R_\omega$ must be not compatible with $\gg$. As shown above, there must exist some $j^\dagger$ such that there never exist a $B\in \mathcal{F}^*_\omega$ such that $j^\dagger = \optchoice{\succ^R_\omega}{B}$. Construct $\succ^{R'}_\omega$ by selecting its domain as $\{j_2\}\cup\text{domain}(\succ^R_\omega)\setminus\{j^\dagger\}$ and ordering programs within domain according to $\succ^Q_\omega$.
    \end{itemize}
   By construction, $j_2\in \text{domain}(\succ^{R'}_\omega)$ and     
    $j \succ^{R'}_\omega j'$ only if  $j \succ^Q_\omega j'$. Thus, by construction, either $\optchoice{\succ^{R'}_\omega}{B^*} = j_2$ or $\optchoice{\succ^{R'}_\omega}{B^*} \succ^Q_\omega j_2$. In both cases, we have $\optchoice{\succ^{R'}_\omega}{B^*} \succ^Q_\omega  \optchoice{\succ^R_\omega}{B^*}$, because $\optchoice{\succ^R_\omega}{B^*} = j_1$. Moreover, for any $B\in \mathcal{F}^*_\omega$, either $\optchoice{\succ^{R'}_\omega}{B^*} = \optchoice{\succ^R_\omega}{B^*}$ or $\optchoice{\succ^{R'}_\omega}{B^*} \neq \optchoice{\succ^R_\omega}{B^*}$. If $\optchoice{\succ^{R'}_\omega}{B^*} \neq \optchoice{\succ^R_\omega}{B^*}$, we must have $\optchoice{\succ^{R'}_\omega}{B} \succ^Q_\omega \optchoice{\succ^R_\omega}{B}$ because $\succ^{R'}_\omega$ is compatible with $\succ^Q_\omega$ by construction. As a result, $\succ^R_\omega$ is dominated given $\mathcal{F}^*_\omega$, contradicting to Assumption \ref{assu:robust_undominance}.
\end{itemize}
Thus, we always have a contradiction for student $\omega$ with $|\succ^R_\omega| < K$.

\item Fix an arbitrary $\omega$ with $|\succ^{R}_\omega| = K$ and $\succ^R_\omega$ compatible with $\gg$. We want to show $\succ^Q_\omega$ must be compatible with  $\succ^{P_{u1}}_\omega$ or $\succ^{P_{sel}}_\omega$. Define $\mathcal{A} \coloneqq \{\optchoice{\succ^Q_\omega}{B}: B\in \mathcal{F}^*_\omega\} \cup \{0\}$. Consider two scenarios:
	\begin{itemize}
	\item Suppose the number of elements in $\mathcal{A}$, denoted as $|\mathcal{A}|$, is less than or equal to $K$. In this case, we want to show $\succ^Q_\omega$ is compatible with $\succ^{P_{u1}}_\omega$. 
Suppose, for the purpose of contradiction, that $\succ^Q_\omega$ is not compatible with $\succ^{P_{u1}}_\omega$. 
Then, $\succ^Q_\omega$ is not compatible with $\succ^{u1}_\omega$. 
As a result, there must exist some $B^*\in \mathcal{F}_\omega\subseteq \mathcal{F}^*_\omega$ and some $j_1, j_2\in B^*$ such that $j_1 = \optchoice{\succ^R_\omega}{B^*}$, $j_1 \succ^{u1}_\omega j_2$ and $j_2 \succ^Q_\omega j_1$. 
Construct an alternative $\succ^{R'}_\omega$ where $j \succ^{R'}_\omega j'$ if and only if $j \succ^Q_\omega j'$ and $j, j' \in \mathcal{A}$. By construction, for any $B\in \mathcal{F}^*_\omega$, $\optchoice{\succ^{R'}_\omega}{B} = \optchoice{\succ^Q_\omega}{B}$. Hence, for any $B\in \mathcal{F}^*_\omega$, either $\optchoice{\succ^{R'}_\omega}{B} = \optchoice{\succ^R_\omega}{B}$ or $\optchoice{\succ^{R'}_\omega}{B} \succ^Q_\omega \optchoice{\succ^R_\omega}{B}$. Moreover, either $\optchoice{\succ^{R'}_\omega}{B^*} = j_2$ or $\optchoice{\succ^{R'}_\omega}{B^*} \succ^Q_\omega j_2$. In both cases, $\optchoice{\succ^{R'}_\omega}{B^*} \succ^Q_\omega j_1 = \optchoice{\succ^R_\omega}{B}$. As a result, $\succ^R_\omega$ is dominated by $\succ^{R'}_\omega$ given $\mathcal{F}^*_\omega$, contradicting to Assumption \ref{assu:robust_undominance}.

\item Suppose $|\mathcal{A}|$ is greater than $K$. In this case, we want to show $\succ^Q_\omega$ is compatible with $\succ^{P_{sel}}_\omega$. Suppose, for the purpose of contradiction, that $\succ^Q_\omega$ is not compatible with $\succ^{P_{sel}}_\omega$. Then, $\succ^Q_\omega$ is not compatible with $\succ^{sel}_\omega$. Then, one of the following two cases is true:
 \begin{itemize}
	 \item there exists some $j_1\in \text{domain}(\succ^R_\omega)$ and some $j_2$ such that (\emph{i}) $j_1 \gg j_2$, (\emph{ii}) there is no $j_2 \succ^R_\omega j_1$, and (\emph{iii}) $j_2 \succ^Q_\omega j_1$. We discuss the following cases:
	\begin{enumerate}
		\item Suppose $j_2 = 0$. Since $|\mathcal{A}| > |\succ^R_\omega| = K$, there exists some $B^*\in \mathcal{F}^*_\omega$ such that $j^* \coloneqq \optchoice{\succ^Q_\omega}{B^*} \notin \text{domain}(\succ^R_\omega)$. In particular, $j^* \succ^Q_\omega 0$. 
	 Construct an alternative $\succ^{R'}_\omega$ as follows: $\succ^{R'}_\omega$'s domain be $( \text{domain}(\succ^R_\omega)\setminus \{j_1\} ) \cup \{j^*\}$, and for any $j, j'$ in its domain, $j \succ^{R'}_\omega j'$ if and only if $j \succ^Q_\omega j'$. 
	Then, for any $B\in \mathcal{F}^*_\omega$, either $\optchoice{\succ^{R'}_\omega}{B} = \optchoice{\succ^R_\omega}{B}$ or $\optchoice{\succ^{R'}_\omega}{B} \succ^Q_\omega \optchoice{\succ^R_\omega}{B}$. 
	Moreover, $\optchoice{\succ^{R'}_\omega}{B^*} = j^* \succ^Q_\omega \optchoice{\succ^{R}_\omega}{B^*}$. As a result, $\succ^R_\omega$ is dominated by $\succ^{R'}_\omega$ given $\mathcal{F}^*_\omega$, contradicting to Assumption \ref{assu:robust_undominance}.

	\item Suppose $j_2 \neq 0$ and there does not exists any $B\in \mathcal{F}^*_\omega$ such that $j_1 = \optchoice{\succ^R_\omega}{B}$. In this case, $j_1$ is a redundnat program listed in $\succ^R_\omega$. We can construct the same $\succ^{R'}_\omega$ as the previous case. Following the same argument, we can show $\succ^{R'}_\omega$ dominates $\succ^R_\omega$, contradicting to Assumption \ref{assu:robust_undominance}.

	\item Suppose $j_2 \neq 0$ and there exists some $B^*\in \mathcal{F}^*_\omega$ such that $j_1 = \optchoice{\succ^R_\omega}{B^*}$. Construct an alternative $\succ^{R'}_\omega$ as follows: $\succ^{R'}_\omega$'s domain be $( \text{domain}(\succ^R_\omega)\setminus \{j_1\} ) \cup \{j_2\}$, and for any $j, j'$ in its domain, $j \succ^{R'}_\omega j'$ if and only if $j \succ^Q_\omega j'$. 

	Then, either $\optchoice{\succ^{R'}_\omega}{B^*} \succ^Q_\omega j_2$ or $\optchoice{\succ^{R'}_\omega}{B^*} = j_2$. In both cases, $\optchoice{\succ^{R'}_\omega}{B^*} \succ^Q_\omega \optchoice{\succ^{R}_\omega}{B^*} $, because $j_2 \succ^Q_\omega j_1 = \optchoice{\succ^{R}_\omega}{B^*}$. 

	Moreover, for an arbitrary $B\in \mathcal{F}^*_\omega$, either $\optchoice{\succ^{R'}_\omega}{B} = \optchoice{\succ^R_\omega}{B}$ or $\optchoice{\succ^{R'}_\omega}{B} \neq \optchoice{\succ^R_\omega}{B}$. 
	In the latter case, we must have $\optchoice{\succ^{R'}_\omega}{B} \succ^Q_\omega \optchoice{\succ^R_\omega}{B}$ because $\succ^{R'}_\omega$ replace $j_1$ with $j_2$, $j_2\succ^Q_\omega j_1$ and $\succ^{R'}_\omega$ orders the programs in its domain according to $\succ^Q_\omega$. As a result, $\succ^R_\omega$ is dominated by $\succ^{R'}_\omega$ given $\mathcal{F}^*_\omega$, contradicting to Assumption \ref{assu:robust_undominance}.

	\end{enumerate}
	Therefore, we have a contradiction in all of the above three cases.

\item there exists some $B^* \in \mathcal{F}_\omega \subseteq \mathcal{F}^*_\omega$ and some $j_1, j_2 \in B\cap\text{domain}(\succ^R_\omega)$ such that $j_2 \succ^Q_\omega j_1 = \optchoice{\succ^R_\omega}{B^*}$. Construct an alternative ROL $\succ^{R'}_\omega$ by setting its domain to be the same as $\text{domain}(\succ^R_\omega)$ and ordering the programs according to $\succ^Q_\omega$.
That is, $j \succ^{R'}_\omega j'$ if and only if $j,j'\in \text{domain}(\succ^R_\omega)$ and $j \succ^Q_\omega j'$. Because $\succ^{R'}_\omega$ and $\succ^R_\omega$ have the same domain and $\succ^{R'}_\omega$ is compatible with $\succ^Q_\omega$, for any $B\in \mathcal{F}^*_\omega$, either $\optchoice{\succ^{R'}_\omega}{B} = \optchoice{\succ^R_\omega}{B}$ or $\optchoice{\succ^{R'}_\omega}{B} \succ^Q_\omega \optchoice{\succ^R_\omega}{B}$. 
	Moreover, $\optchoice{\succ^{R'}_\omega}{B^*} \succ^Q_\omega j_1 = \optchoice{\succ^{R}_\omega}{B^*}$. As a result, $\succ^R_\omega$ is dominated by $\succ^{R'}_\omega$ given $\mathcal{F}^*_\omega$, contradicting to Assumption \ref{assu:robust_undominance}.
 \end{itemize}

\end{itemize}
Thus, we always have shown that $\succ^Q_\omega$ should be compatible with at least one of $\succ^{P_{u1}}_\omega$ and $\succ^{P_{sel}}_\omega$ for students with $|\succ^R_\omega| = K$.
	\end{itemize}

\end{appendices}

\end{document}